\newtheorem{definition}{Definition}
\newtheorem{lemma}{Lemma}
\newtheorem{theorem}{Theorem}
\newtheorem{proposition}{Proposition}
\newtheorem{corollary}{Corollary}
\newtheorem{claim}{Claim}
\title{Merging Mechanisms for Ads and Organic Items in E-commerce Platforms}
\author{
    Nan An\textsuperscript{\rm 1}\equalcontrib,
    Weian Li\textsuperscript{\rm 2}\equalcontrib,
    Qi Qi\textsuperscript{\rm 1}\thanks{Corresponding Author.},
    Liang Zhang\textsuperscript{\rm 1}
}
\begin{document}

\maketitle

\begin{abstract}
In contemporary e-commerce platforms, search result pages display two types of items: ad items and organic items. Ad items are determined through an advertising auction system, while organic items are selected by a recommendation system. These systems have distinct optimization objectives, creating the challenge of effectively merging these two components. Recent research has explored merging mechanisms for e-commerce platforms, but none have simultaneously achieved all desirable properties: incentive compatibility, individual rationality, adaptability to multiple slots, integration of inseparable candidates, and avoidance of repeated exposure for ads and organic items. This paper addresses the design of a merging mechanism that satisfies all these properties. We first provide the necessary conditions for the optimal merging mechanisms. Next, we introduce two simple and effective mechanisms, termed the generalized fix mechanism and the generalized change mechanism. Finally, we theoretically prove that both mechanisms offer guaranteed approximation ratios compared to the optimal mechanism in both simplest and general settings.
\end{abstract}

%

\section{Introduction}
\label{sec:intro}

The rapid advancement of the Internet has established e-commerce platforms as leading online marketplaces, offering unparalleled convenience for both sellers and buyers. Sellers can swiftly set up online stores, while buyers can easily search for products using just a few keywords. However, this ease of use presents a significant challenge for platforms: how to effectively and accurately present products. Consequently, e-commerce platforms are continuously working to enhance their methods for generating relevant and suitable product listings.

Current search result pages typically feature two main components: organic items and advertising items. Organic items are selected by a \emph{recommendation system}, which prioritizes user experience and purchase intent to boost overall sales. In contrast, advertising items are determined through a \emph{sponsored search auction}, where advertisers bid for display slots and pay extra fees for increased exposure. This auction system has become a major revenue source for online platforms, generating an impressive \$225 billion in 2023. The challenge lies in integrating these two types of outputs into a cohesive list, a complex multi-objective optimization problem that has attracted significant attention from both academia and industry.

Several studies have explored this problem using various approaches. Traditionally, platforms allocate the top slots in search results to advertising items, with the recommendation system and sponsored search auction operating independently. This separation makes it difficult to achieve a globally optimal arrangement. Recently, some companies have adopted merging methods, incorporating winning advertising items into the recommendation system's candidate set and using machine learning to produce a unified list. Although this approach has improved average performance, it lacks theoretical support and does not always preserve desirable auction theory properties. Theoretical studies have also aimed to design mechanisms that address both organic items and advertisements, but they often struggle with issues such as repeated exposure (where the same product appears as both an ad and an organic item) and adaptability to multi-slot scenarios (which are prevalent in search engines or e-commerce platforms where several items are displayed on a single page).

Based on existing research, we observe that current methods fall short in addressing several key questions simultaneously: 1. How can we design a mechanism that effectively incorporates both ads and organic items while optimizing multiple objectives? 2. How can we maintain essential properties of advertising auctions, such as incentive compatibility (IC) and individual rationality (IR)? 3. How can we avoid the repeated display of an item in both its ad and organic forms? 4. How can we adapt to a multi-slot setting effectively? Each of these issues is both theoretically significant and practically valuable.

\subsection{Main Contribution}



To address the aforementioned questions, this work tackles the mechanism design problem for e-commerce platforms in a multi-slot setting. In this scenario, several candidate items can be displayed in two forms: as ads or as organic content. Our focus is on developing IC and IR merging mechanisms that effectively balance revenue and user experience, while avoiding the repeated display of the same item in both its ad and organic forms.

Our work extends the problem studied by \cite{LMZWZYXZD24}, which addresses a similar issue in the single-slot context. We generalize their model to the more prevalent multi-slot scenario and propose two desirable mechanisms with guaranteed approximate ratios. Our main results are summarized as follows:
\begin{itemize}
    \item In addition to the known characterization of feasible mechanisms, we introduce additional necessary conditions for the optimal mechanisms in both general and the simplest settings (three items and two slots). Notably, in the simplest setting, we find that a partial order always exists for displaying organic items. These conditions facilitate the construction of our mechanisms and simplify the subsequent performance analysis. We propose two simple mechanisms: the generalized fix (G-FIX) mechanism and the generalized change (G-CHANGE) mechanism, both of which are IC, IR, and applicable to the multi-slot setting.
    \item \textbf{G-FIX mechanism}: In this mechanism, we restrict each item to participate in the mechanism in only one form, either as an ad or as an organic item, thereby avoiding overlap and simplifying the allocation process. We then select the most suitable items for display. The G-FIX mechanism is shown to be a $(4/5)^3$-approximate mechanism in the simplest setting. For a general setting with $n$ items and $k$ slots, under certain assumptions, the G-FIX mechanism is nearly optimal.
    \item \textbf{G-CHANGE mechanism}: Unlike the G-FIX mechanism, the G-CHANGE mechanism allows a set of ordered items to participate in both forms (ad and organic). Items are evaluated in reverse order to determine their displayed form, which mitigates the effect of the other items and ensures compliance with the IC condition. In the simplest setting, we demonstrate that the G-CHANGE mechanism is optimal. In a general setting, the G-CHANGE mechanism is a $1/2$-approximate mechanism compared to the optimal merging mechanisms, without any assumption.
\end{itemize}
Due to space limitation, complete proofs can be referred to the full version. 

\subsection{Related Works}

Our contribution lies within the field of Internet economics, specifically in the design of sponsored search auctions. Our work is related to recent theoretical paper \cite{LMZWZYXZD24}, which considers a mechanism design problem to avoid the repeated display of ads and organic items, but their models are limited to the single-slot setting. In comparison to this paper, we generalize their models to multi-slot setting and design desirable mechanisms with guaranteed approximate ratios.  

In the realm of online advertising and recommendation systems, significant research has been devoted to integrating ad and organic items. Our work addresses a multi-objective auction design problem that incorporates the impact of organic items. Here, we briefly provide the related studies in online advertising.
Initially, sponsored advertising was implemented through generalized first-price (GFP) and generalized second-price (GSP) auctions. Subsequent research explore equilibrium analysis \cite{EOS07,V07,LT10,LH15} and revenue enhancement \cite{LP07,TL13} for GSP and GFP auctions. Notably, \citet{BCKKK14} investigate multi-objective optimization in the context of sponsored advertising.

Regarding the integration of ads and organic items, two primary directions have emerged. The first direction is to utilize advanced machine learning techniques to combine ads and organic items into a unified list. For example, several studies \cite{WJHCY19,ZZYLT20,LWW22,ZXLLTZ23} model this process as a Markov decision process and apply reinforcement learning to generate the item list. Additionally, \citet{LYZ21} and \citet{LWZ24} employ automated mechanism design to integrate one ad with multiple organic items. The second direction focuses on optimization methods, particularly Lagrangian duality. Papers in this category \cite{ZWM18,YXT20,CWN23} typically create a virtual ranking score using Lagrangian duality and allocate slots based on this score. Particularly, \citet{LQWY23} addresses the multi-slot setting and proposes a Myerson-like merge mechanism aimed at achieving the optimal objective values. However, their mechanism does not effectively address the issue of repeated exposure of items in both ad and organic forms.

Observing from the aforementioned papers, we can find that merging ads and organic items is a popular research topic in recent years. Although there have been fruitful results in both academia and industry, these works have not addressed the four questions mentioned earlier in this section simultaneously, and at least one question remains unanswered. To the best of our knowledge, our paper is the first work to design IC and IR mechanisms that directly handle and avoid the overlap of ads and organic items, while also being applicable to the multi-slot model.

\section{Model and Preliminaries}
\label{sec:pre}

\subsection{Merging Mechanism}
In this section, we formally establish our model for developing a merging mechanism. Consider a search results page with $k$ slots for displaying either ads or organic items. Suppose that there are $n$ candidate items, denoted by $i \in[n]$. 
Each item $i$ can be displayed in one of two forms: as an ad or as an organic item. We denote these forms by $\text{Ad}_{i}$ for the ad form and $\text{Org}_{i}$ for the organic form. We assume that each item can be displayed at most once and in only one form, either as $\text{Ad}_{i}$ or $\text{Org}_{i}$. Therefore, $\text{Ad}_{i}$ and $\text{Org}_{i}$ will not appear simultaneously on the same search results page.


The two forms of an item, $\text{Ad}_{i}$ and $\text{Org}_{i}$, can be distinguished by the user and may have different effects. Let $\alpha_{i}^{A}$ represent the click-through rate (CTR) of the ad form $\text{Ad}_{i}$, and $\alpha_{i}^{O}$ represent the CTR of the organic form $\text{Org}_{i}$. We assume that $\alpha_{i}^{A}<\alpha_{i}^{O}$ due to users' preference for organic items. Additionally, we suppose that the CTRs for all items are public information.


Additionally, the different forms may affect user experience in varying ways. For each item $i$, let $\gamma_{i}^{A}$ and $\gamma_{i}^{O}$ denote the user experience effect of the ad form and organic form, respectively. Suppose that $0 \leq \gamma_{i}^{A} \leq \gamma_{i}^{O}$, since users prefer more on organic items.


With the above symbols in mind, we introduce the merging mechanisms used in this paper. Before the mechanism is implemented, each item $i \in [n]$\ (referred to as owner $i$ for convenience) submits a bid $b_i$\footnote{For simplicity, we use $b_i$ to represent both the bid and the item's value, as our focus is on incentive-compatible mechanisms.}. Let $\boldsymbol{b}=$ $(b_{1}, \cdots, b_{n})$ denote the bid profile. We use $\boldsymbol{b}_{-i}$ to refer to the bid profile excluding owner $i$. We assume that each bid $b_i$ is independently drawn from a prior distribution $F_i$ known to the platform. Each $F_i$ is further assumed to be a regular distribution\footnote{A regular distribution, as defined by \cite{M81}, satisfies that $b_i-(1-F_i(b_i))/f_i(b_i)$ is monotone non-decreasing, where $f_i$ is the p.d.f of $F_i$.} with support $[0, B_{i}]$ for some $B_{i} \geq 0$, which is a common assumption in economics. We now define a merging mechanism as follows.

\begin{definition}[Merging Mechanism]\label{def:merge mechanism}
Collecting bids from owners, a merging mechanism is denoted by $\mathcal{M}=(x, y, p)$, where:
\begin{enumerate}
\item $x(\boldsymbol{b})=(x_{i}(\boldsymbol{b}))_{i \in[n]}$ and $y(\boldsymbol{b})=(y_{i}(\boldsymbol{b}))_{i \in[n]}$ are the allocation rules for ad items and organic items, respectively. 
For each item $i \in [n], x_{i}(\boldsymbol{b})=1$  indicates that its ad form $\mathrm{Ad}_{i}$ is displayed, otherwise $x_{i}(\boldsymbol{b})=0$. Similarly, $y_{i}(\boldsymbol{b})=1$ means that its organic form  $\mathrm{Org}_{i}$ is displayed, $y_{i}(\boldsymbol{b})=0$ otherwise.
\item $p(\boldsymbol{b})=(p_{i}(\boldsymbol{b}))_{i \in[n]}$ represents the payments charged to the owners when their ads are clicked. Specifically, if $\mathrm{Ad}_{i}$ is displayed ($x_{i}(\boldsymbol{b})=1$), the platform charges a payment $p_{i}(\boldsymbol{b})$ if the user clicks on $\mathrm{Ad}_{i}$ (with probability $\alpha_{i}^{A}$). We assume that $p_{i}(\boldsymbol{b})=0$ if $x_{i}(\boldsymbol{b})=0$.
\end{enumerate}
\end{definition}

Given the definition of the merging mechanism, we define the expected click-through rate $X_{i}(\boldsymbol{b})$ and the expected payment $P_{i}(\boldsymbol{b})$ for item $i$ as follows:
\begin{equation*}
    X_{i}(\boldsymbol{b})=x_{i}(\boldsymbol{b}) \alpha_{i}^{A}
                         +y_{i}(\boldsymbol{b}) \alpha_{i}^{O}
\end{equation*}
and
\begin{equation*}
    P_{i}(\boldsymbol{b})=p_{i}(\boldsymbol{b}) \cdot x_{i}(\boldsymbol{b}) \alpha_{i}^{A} .
\end{equation*}
Here, $X_{i}(\boldsymbol{b})$ represents the total probability that item $i$ is clicked in either form, and $P_{i}(\boldsymbol{b})$ is the expected payment when the item’s ad form is displayed and clicked. The utility $U_i(\boldsymbol{b})$ of owner $i$ under the bid profile $\boldsymbol{b}$ is defined as:
\[ 
    U_i(\boldsymbol{b})=b_i X_{i}(\boldsymbol{b})- P_{i}(\boldsymbol{b}).
\]
In this paper, we assume that item owners are utility maximizers and do not differentiate between the displayed forms of their items.

Since at most $k$ items can be displayed and each item can appear in only one form (either ad or organic), the merging mechanism must satisfy the following feasibility constraints: 
\[
    \sum_{i \in [n]}\left(x_{i}(\boldsymbol{b})+y_{i}(\boldsymbol{b})\right) \leq k.
\]
And for all item $i \in[n]$,
\[
    x_{i}(\boldsymbol{b})+y_{i}(\boldsymbol{b}) \leq 1.
\]

We aim for the merging mechanism to satisfy two desirable properties: incentive compatibility (IC) and individual rationality (IR). 
\begin{definition}[Incentive Compatibility]
    A merging mechanism is incentive compatible if owners are motivated to submit their true bids. Formally, for any bid profile $\boldsymbol{b}$ and any owner $i \in [n]$, for any misreported bid $b'_{i}$, the following condition must hold:
    \[
    b_{i} X_{i}(b_{i}, \boldsymbol{b}_{-i})-P_{i}(b_{i}, \boldsymbol{b}_{-i}) \geq b_{i} X_{i}(b_{i}^{\prime}, \boldsymbol{b}_{-i})-P_{i}(b_{i}^{\prime}, \boldsymbol{b}_{-i}).
    \]
\end{definition}
This ensures that the utility of reporting $b_{i}$ truthfully is not less than the utility of reporting any other bid $b'_{i}$.

\begin{definition}[Individual Rationality]
    A merging mechanism is individually rational if each owner’s utility is always non-negative, meaning that payments never exceed bids. Formally, for any bid profile $\boldsymbol{b}$ and any owner $i \in [n]$, the following must hold:
    \[
        p_{i}(\boldsymbol{b}) \leq b_{i} .
    \] 
\end{definition}
This ensures that the payment $p_{i}(\boldsymbol{b})$ charged to the owner is less than or equal to their bid $b_{i}$, guaranteeing that their utility is never negative.

For an e-commerce platform, the objective includes two main components: revenue and user experience. The revenue target is defined as the expected total payment. Formally, it is given by:
\begin{equation*}
    \text{Rev}(\mathcal{M}):=\underset{\boldsymbol{b} \sim F}{\mathbb{E}}\left[\sum_{i \in[n]} p_{i}(\boldsymbol{b}) x_{i}(\boldsymbol{b}) \alpha_{i}^{A} \right],
\end{equation*} 
where $F$ is the joint distribution on $\boldsymbol{b}$. The user experience target is defined as the total expected user experience, i.e.,
\begin{equation*}
    \text{UE}(\mathcal{M}):=\underset{\boldsymbol{b}\sim F}{\mathbb{E}}\left[
            \sum_{i \in[n]}\left(x_{i}(\boldsymbol{b}) \gamma_{i}^{A}+y_{i}(\boldsymbol{b}) \gamma_{i}^{O}\right)
            \right].
\end{equation*}

Note that the user experience of an item is independent of whether it is clicked. Once an item is displayed, its user experience is realized. Therefore, the platform's objective combines both revenue and user experience:
\begin{equation*}
    \text{OBJ}(\mathcal{M}):=\text{Rev}(\mathcal{M})+\text{UE}(\mathcal{M}). 
\end{equation*}

This formulation can be generalized to any weighted combination of revenue and user experience by incorporating weight coefficients into $\gamma_{i}^{A}$ and $\gamma_{i}^{O}$. Thus, the merging mechanism design problem can be expressed as:
\begin{equation}\label{fml:optmization}
    \begin{array}{rl} 
    \max\limits_{\mathcal{M}} & \text{OBJ}(\mathcal{M}) \\
    \text {s.t.} & \text { IC, IR, Feasibility.} 
    \end{array}
\end{equation}

Let $\mathcal{M}^{*}$ denote the optimal mechanism for the problem. We say that a merging mechanism $\mathcal{M}$ is $\tau$-approximate if the ratio of $\text{OBJ}(\mathcal{M})$ to $\text{OBJ}(\mathcal{M}^{*})$ is at least $\tau$. Formally, $\mathcal{M}$ is $\tau$-approximate if $\text{OBJ}(\mathcal{M})/ \text{OBJ}(\mathcal{M}^{*})\geq \tau.$


\subsection{Characterization of IC, IR, and Feasibility}
In the rest of this section, we summarize key results from previous work that are crucial for characterizing IC, IR, and feasible merging mechanisms. These results will also aid in the construction and analysis of mechanisms in the subsequent sections. Notably, Myerson's seminal work provides a fundamental framework for understanding IC and IR mechanisms.
\begin{lemma}[\citet{M81}]\label{lem:myerson}
     A merging mechanism $\mathcal{M}$ satisfies IC and IR if and only if the following conditions are met:
    \begin{enumerate}
    \item \textbf{Allocation Monotonicity}: For any item \(i \in [n]\), given the bids \(\boldsymbol{b}_{-i}\) of the other items, its expected click-through rate is non-decreasing in \(b_i\). Formally, for any \(b_i\) and \(b_i'\) with \(b_i < b_i'\), it holds that
    \[
    X_i(b_i, \boldsymbol{b}_{-i}) \leq X_i(b_i', \boldsymbol{b}_{-i}).
    \]

    \item \textbf{Payment Identity}: For any item \(i \in [n]\) and any bid profile \(\boldsymbol{b}\), the payment \(P_i(\boldsymbol{b})\) is given by
    \[
    P_i(\boldsymbol{b}) = b_i  X_i(b_i, \boldsymbol{b}_{-i}) - \int_{0}^{b_i} X_i(t, \boldsymbol{b}_{-i}) \, dt.
    \]
    \end{enumerate}
\end{lemma}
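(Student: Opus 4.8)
The plan is to run Myerson's classical payment-characterisation argument, noting that in the merging model the relevant ``allocation'' quantity is the interim expected click-through rate $X_i(\cdot,\boldsymbol{b}_{-i})$ rather than an allocation probability. Throughout I fix an owner $i$ and the competing bids $\boldsymbol{b}_{-i}$, and abbreviate $X(b):=X_i(b,\boldsymbol{b}_{-i})$, $P(b):=P_i(b,\boldsymbol{b}_{-i})$, and $U(b):=bX(b)-P(b)$; note $X(b)\ge 0$ always and $P(b)\ge 0$ since the platform charges (never subsidises) owners.

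\textbf{IC and IR $\Rightarrow$ monotonicity and payment identity.} Adding the IC inequality for owner $i$ with true value $b$ deviating to $b'$ to the one with true value $b'$ deviating to $b$, the payment terms cancel and one obtains $(b-b')\bigl(X(b)-X(b')\bigr)\ge 0$, i.e.\ allocation monotonicity. For the payment identity I would first rewrite IC in subgradient form: since $bX(b')-P(b')=U(b')+(b-b')X(b')$, the inequality $U(b)\ge bX(b')-P(b')$ becomes $U(b)\ge U(b')+(b-b')X(b')$ for all $b,b'$. Using this with the roles of $b,b'$ swapped as well gives, for $b>b'$, the sandwich $(b-b')X(b')\le U(b)-U(b')\le (b-b')X(b)$. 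Summing over a fine partition $0=t_0<\dots<t_m=b_i$ yields the left and right Riemann sums of $\int_0^{b_i}X$; since the monotone (hence bounded and Riemann-integrable) function $X$ makes upper minus lower sums vanish as the mesh shrinks, $U(b_i)-U(0)=\int_0^{b_i}X(t)\,dt$. Finally $U(0)=-P(0)=0$: if $x_i(0,\boldsymbol{b}_{-i})=0$ then $p_i(0,\boldsymbol{b}_{-i})=0$ by definition, and otherwise IR forces $p_i(0,\boldsymbol{b}_{-i})\le 0$, which with non-negativity of payments gives $p_i(0,\boldsymbol{b}_{-i})=0$; either way $P(0)=0$. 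Substituting $U(0)=0$ into the integral identity gives exactly $P_i(\boldsymbol{b})=b_iX_i(b_i,\boldsymbol{b}_{-i})-\int_0^{b_i}X_i(t,\boldsymbol{b}_{-i})\,dt$.

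\textbf{Monotonicity and payment identity $\Rightarrow$ IC and IR.} Plugging the payment identity into $U$ gives $U(b)=\int_0^bX(t)\,dt$ and deviation utility $bX(b')-P(b')=(b-b')X(b')+\int_0^{b'}X(t)\,dt$, so IC is equivalent to $\int_{b'}^{b}X(t)\,dt\ge (b-b')X(b')$; for $b\ge b'$ this holds because $X(t)\ge X(b')$ on $[b',b]$ by monotonicity, and for $b<b'$ because $X(t)\le X(b')$ on $[b,b']$ — a short two-case check. For IR: if $x_i(\boldsymbol{b})=0$ then $p_i(\boldsymbol{b})=0\le b_i$; if $x_i(\boldsymbol{b})=1$, the feasibility constraint $x_i(\boldsymbol{b})+y_i(\boldsymbol{b})\le 1$ forces $y_i(\boldsymbol{b})=0$, so $X_i(\boldsymbol{b})=\alpha_i^A$ and the payment identity gives $p_i(\boldsymbol{b})=P_i(\boldsymbol{b})/\alpha_i^A=b_i-\tfrac{1}{\alpha_i^A}\int_0^{b_i}X_i(t,\boldsymbol{b}_{-i})\,dt\le b_i$ since the integrand is non-negative.

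\textbf{Anticipated difficulty.} The only non-mechanical points are (i) pinning down $U(0)=0$: here IR is stated as the per-click bound $p_i(\boldsymbol{b})\le b_i$ rather than as non-negativity of interim utility, so one must route through $p_i(0,\boldsymbol{b}_{-i})=0$ using both IR and the convention that payments are non-negative; and (ii) in the IR half of the converse, noticing that feasibility ($x_i+y_i\le1$) collapses $X_i$ to $\alpha_i^A$ whenever an ad is actually displayed, which is what makes the per-click payment bounded by the bid. The analytic envelope step (the integral representation of $U$) is routine once phrased via Riemann sums of the monotone function $X$, so I would keep it brief.
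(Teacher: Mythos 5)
Your proof is correct. The paper offers no proof of this lemma at all---it is imported verbatim from Myerson (1981) as a citation---and your reconstruction is precisely the classical argument: the two-way IC inequality for monotonicity, the subgradient/envelope sandwich and Riemann sums for the payment identity, and the two-case check for the converse. The two points you flag as non-mechanical are indeed the only places where the merging model differs from the textbook setting (pinning down $U_i(0,\boldsymbol{b}_{-i})=0$ from the per-click form of IR plus non-negativity of payments, and using $x_i+y_i\le 1$ to reduce $X_i$ to $\alpha_i^A$ when an ad is shown), and you handle both correctly.
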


        

Recent work by \citet{LMZWZYXZD24} addresses a similar problem in the single-slot setting and provides necessary conditions for an IC, IR, and feasible merging mechanism. These conditions also apply to the multi-slot case. One important condition is form stability, which states that the allocation of an organic item depends solely on the bids of other owners and is unaffected by its own bid.
\begin{lemma}[Form Stability \cite{LMZWZYXZD24}]\label{lem:original form stability}
    Suppose $\mathcal{M}$ satisfies IC, IR, and feasibility. Then, for any item  $i \in[n]$, the allocation $y_{i}(\boldsymbol{b})$ of an organic item depends only on the bids $\boldsymbol{b}_{-i}$ of other owners and is unaffected by the bid $b_i$ of the owner $i$. Formally, for any $\boldsymbol{b}_{-i}$ and for any $b_{i} \neq b_{i}^{\prime}$, we have:
    \[
    y_{i}\left(b_{i}, \boldsymbol{b}_{-i}\right)=y_{i}\left(b_{i}^{\prime}, \boldsymbol{b}_{-i}\right).
    \]
\end{lemma}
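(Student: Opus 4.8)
The plan is to fix an item $i\in[n]$ together with the other bids $\boldsymbol{b}_{-i}$ and to show that the single-variable map $b_i\mapsto y_i(b_i,\boldsymbol{b}_{-i})$ is constant. First I would record two elementary consequences of feasibility. Because $x_i,y_i\in\{0,1\}$ and $x_i+y_i\le 1$, the expected click-through rate $X_i(b_i,\boldsymbol{b}_{-i})=x_i(b_i,\boldsymbol{b}_{-i})\alpha_i^A+y_i(b_i,\boldsymbol{b}_{-i})\alpha_i^O$ lies in $\{0,\alpha_i^A,\alpha_i^O\}$; in particular $X_i(b_i,\boldsymbol{b}_{-i})\le\alpha_i^O$, with equality exactly when $y_i(b_i,\boldsymbol{b}_{-i})=1$ (this uses $\alpha_i^A<\alpha_i^O$). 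Moreover, whenever $y_i(b_i,\boldsymbol{b}_{-i})=1$, feasibility forces $x_i(b_i,\boldsymbol{b}_{-i})=0$, so by Definition~\ref{def:merge mechanism} $p_i(b_i,\boldsymbol{b}_{-i})=0$ and hence $P_i(b_i,\boldsymbol{b}_{-i})=0$ --- an organic allocation is free. Now suppose, toward a contradiction, that $y_i(b_i,\boldsymbol{b}_{-i})\ne y_i(b_i',\boldsymbol{b}_{-i})$ for some $b_i<b_i'$. If $y_i(b_i,\boldsymbol{b}_{-i})=1$ and $y_i(b_i',\boldsymbol{b}_{-i})=0$, then $X_i(b_i,\boldsymbol{b}_{-i})=\alpha_i^O>\alpha_i^A\ge X_i(b_i',\boldsymbol{b}_{-i})$, contradicting Allocation Monotonicity in Lemma~\ref{lem:myerson}; so an organic allocation is never lost by raising one's bid.

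The substantive case is $y_i(b_i,\boldsymbol{b}_{-i})=0$ and $y_i(b_i',\boldsymbol{b}_{-i})=1$. Here $P_i(b_i',\boldsymbol{b}_{-i})=0$ and $X_i(b_i',\boldsymbol{b}_{-i})=\alpha_i^O$, so the Payment Identity gives $0=b_i'\alpha_i^O-\int_0^{b_i'}X_i(t,\boldsymbol{b}_{-i})\,dt$, i.e. $\int_0^{b_i'}X_i(t,\boldsymbol{b}_{-i})\,dt=b_i'\alpha_i^O$. I would bound the left-hand side with Allocation Monotonicity: for $t\in[0,b_i]$ we have $X_i(t,\boldsymbol{b}_{-i})\le X_i(b_i,\boldsymbol{b}_{-i})\le\alpha_i^A$ (using $y_i(b_i,\boldsymbol{b}_{-i})=0$), and for $t\in(b_i,b_i']$ we have $X_i(t,\boldsymbol{b}_{-i})\le\alpha_i^O$; summing the two pieces yields $b_i'\alpha_i^O\le b_i\alpha_i^A+(b_i'-b_i)\alpha_i^O$, which rearranges to $b_i(\alpha_i^O-\alpha_i^A)\le 0$ and hence, since $\alpha_i^A<\alpha_i^O$, to $b_i=0$. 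Thus the only bid at which $y_i(\cdot,\boldsymbol{b}_{-i})$ can possibly change value is $b_i=0$; on every strictly positive bid it is constant, as claimed.

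The one point that needs care --- and the main obstacle I anticipate --- is this boundary bid $b_i=0$, which the argument does not constrain. It is, however, immaterial: a single bid value has measure zero under the regular (hence atomless) prior $F_i$, so $y_i(0,\boldsymbol{b}_{-i})$ affects neither $\mathrm{OBJ}$ nor any incentive inequality, and one may adopt the standard convention that the allocation rule is right-continuous in $b_i$, under which $y_i(0,\boldsymbol{b}_{-i})$ coincides with the common value on $(0,B_i]$. Modulo this convention the proof is complete; its actual content is the single observation that an organic allocation generates no payment, which through the Payment Identity forces the maximal click-through rate $\alpha_i^O$ on the entire range of lower bids, and everything else is a short case check and an elementary integral estimate.
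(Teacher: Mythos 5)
The paper does not actually prove Lemma~\ref{lem:original form stability}; it is imported from \cite{LMZWZYXZD24} without proof, so there is no in-paper argument to compare against. Your argument is correct and is the natural one: an organic allocation forces $x_i=0$ and hence $P_i=0$, so the Payment Identity pins $\int_0^{b_i'}X_i(t,\boldsymbol{b}_{-i})\,dt=b_i'\alpha_i^O$, which together with Allocation Monotonicity and the pointwise bound $X_i\le\alpha_i^O$ forces $y_i(t,\boldsymbol{b}_{-i})=1$ for all $t\in(0,b_i']$; the opposite direction (losing the organic slot by raising the bid) is immediately killed by monotonicity since $\alpha_i^A<\alpha_i^O$. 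The residual freedom at $b_i=0$ that you flag is genuine rather than an artifact of your method: one can exhibit an IC, IR, feasible mechanism with $y_i(0,\boldsymbol{b}_{-i})=0$ but $y_i(b,\boldsymbol{b}_{-i})=1$ for all $b>0$ (take $x_i(0,\boldsymbol{b}_{-i})=0$ and $p_i\equiv 0$), so the lemma as literally stated holds only for $b_i,b_i'>0$ or under a right-continuity convention. Since $F_i$ admits a density, this single bid has measure zero and is irrelevant to every downstream use (Lemmas~\ref{lem:payment} and~\ref{lem:revenue} and Proposition~\ref{prop: k from n optimal} all work in expectation or at interior bids), so your resolution is the standard and appropriate one.
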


Leveraging Myerson's lemma (Lemma \ref{lem:myerson}) and the condition of form stability (Lemma \ref{lem:original form stability}), we can express the payment $p_i(\boldsymbol{b})$ for each item concisely and simplify the total payment into an elegant form.
\begin{lemma}[\citet{LMZWZYXZD24}]\label{lem:payment}
    Suppose that $\mathcal{M}$ satisfies IC, IR, and feasibility, and consequently form stability. For any bid profile $\boldsymbol{b}$, if $\mathrm{Ad}_i$ is displayed, then the payment $p_i(\boldsymbol{b})$ is given by
    \[
      p_i(\boldsymbol{b}) = \inf \{ b'_i \geq 0 : x_i(b'_i, \boldsymbol{b_{-i}}) = 1 \},
    \]
    which is the critical bid required for $\mathrm{Ad}_i$ to be displayed.
\end{lemma}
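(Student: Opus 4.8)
The plan is to combine the three facts already in hand: form stability (Lemma~\ref{lem:original form stability}), allocation monotonicity, and the payment identity, the latter two from Myerson's characterization (Lemma~\ref{lem:myerson}). Throughout I fix an owner $i$ and the competing bids $\boldsymbol{b}_{-i}$, and I assume $\alpha_i^A>0$ (if $\alpha_i^A=0$ no payment is ever collected and the statement is vacuous).

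First I would use form stability to write $y_i(b_i,\boldsymbol{b}_{-i})\equiv \bar{y}_i$ for a constant $\bar{y}_i\in\{0,1\}$ that does not depend on $b_i$. Then
\[
X_i(b_i,\boldsymbol{b}_{-i}) = \alpha_i^A\, x_i(b_i,\boldsymbol{b}_{-i}) + \alpha_i^O\, \bar{y}_i,
\]
so the only $b_i$-dependent part of $X_i$ is $\alpha_i^A x_i(b_i,\boldsymbol{b}_{-i})$. Allocation monotonicity forces $X_i$, hence $\alpha_i^A x_i(\cdot,\boldsymbol{b}_{-i})$, to be non-decreasing in $b_i$; since $\alpha_i^A>0$, the $\{0,1\}$-valued map $b_i\mapsto x_i(b_i,\boldsymbol{b}_{-i})$ is itself non-decreasing, i.e.\ it is a threshold function equal to $0$ for $b_i$ below $c_i:=\inf\{b_i'\ge 0 : x_i(b_i',\boldsymbol{b}_{-i})=1\}$ and equal to $1$ for $b_i$ above $c_i$.

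Next I would plug this structure into the payment identity at a profile $\boldsymbol{b}$ for which $\mathrm{Ad}_i$ is displayed, i.e.\ $x_i(\boldsymbol{b})=1$ and therefore $b_i\ge c_i$. Using $\int_0^{b_i} x_i(t,\boldsymbol{b}_{-i})\,dt = b_i-c_i$ (the value of $x_i$ at the single point $t=c_i$ being irrelevant to the integral), the identity
\[
P_i(\boldsymbol{b}) = b_i X_i(b_i,\boldsymbol{b}_{-i}) - \int_0^{b_i} X_i(t,\boldsymbol{b}_{-i})\,dt
\]
simplifies, after the $\alpha_i^O\bar{y}_i$ terms and the $\alpha_i^A b_i$ terms cancel, to $P_i(\boldsymbol{b}) = \alpha_i^A c_i$. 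Since by definition $P_i(\boldsymbol{b}) = p_i(\boldsymbol{b})\, x_i(\boldsymbol{b})\, \alpha_i^A = p_i(\boldsymbol{b})\,\alpha_i^A$ and $\alpha_i^A>0$, dividing through yields $p_i(\boldsymbol{b}) = c_i$, which is exactly the claimed critical bid.

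The one step I expect to require the most care is the reduction from monotonicity of the expected CTR $X_i$ to monotonicity (hence threshold structure) of the ad-allocation $x_i$ alone: this is precisely where form stability is indispensable, since without it a decrease in $x_i$ as $b_i$ grows could in principle be masked by a compensating increase in $y_i$. I would also flag, as a harmless boundary remark, that the behaviour of $x_i$ exactly at $b_i=c_i$ plays no role, both because it only enters through an integral and because the displayed-ad hypothesis $x_i(\boldsymbol{b})=1$ already places $b_i$ in the region where the threshold function equals $1$.
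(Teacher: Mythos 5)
Your proof is correct, and it is exactly the argument the paper intends: the paper states this lemma by citation to \citet{LMZWZYXZD24} without reproducing a proof, but its remark following Lemma~\ref{lem:revenue} explicitly identifies the route you take --- form stability is what neutralizes the organic component of $X_i$, after which Myerson's monotonicity forces the $\{0,1\}$-valued $x_i(\cdot,\boldsymbol{b}_{-i})$ into threshold form and the payment identity collapses to $P_i(\boldsymbol{b})=\alpha_i^A c_i$, i.e.\ $p_i(\boldsymbol{b})=c_i$. Your boundary remarks (the irrelevance of $x_i$ at the threshold point itself, and the degenerate case $\alpha_i^A=0$) are sound and harmless; one could also note that when $\bar{y}_i=1$ feasibility forces $x_i\equiv 0$ so the hypothesis of the lemma is never met, but your argument already handles this by carrying $\bar{y}_i$ symbolically.
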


\begin{lemma}[\citet{LMZWZYXZD24}]\label{lem:revenue}
    If $\mathcal{M}$ satisfies IC, IR, feasibility and form stability, the revenue can be expressed as
    \[
    \mathrm{Rev}(\mathcal{M}) = \underset{\boldsymbol{b}\sim F}{\mathbb{E}} \left[ \sum_{i \in [n]} \phi_i(b_i)  x_i(\boldsymbol{b}) \alpha_i^A  \right],
    \]
    where $\phi_i(b_i)=b_i-(1-F_i(b_i))/f_i(b_i)$ represents Myerson's virtual value function.
\end{lemma}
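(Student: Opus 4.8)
The plan is to run the classical Myerson revenue-equivalence computation, with the single extra ingredient being form stability. Start from the definition
$\text{Rev}(\mathcal{M}) = \mathbb{E}_{\boldsymbol{b}\sim F}\big[\sum_{i\in[n]} p_i(\boldsymbol{b})x_i(\boldsymbol{b})\alpha_i^A\big] = \mathbb{E}_{\boldsymbol{b}}\big[\sum_{i\in[n]} P_i(\boldsymbol{b})\big]$, and substitute the payment identity from Lemma \ref{lem:myerson}, namely $P_i(\boldsymbol{b}) = b_i X_i(b_i,\boldsymbol{b}_{-i}) - \int_0^{b_i} X_i(t,\boldsymbol{b}_{-i})\,dt$. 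Since $X_i(\boldsymbol{b}) = x_i(\boldsymbol{b})\alpha_i^A + y_i(\boldsymbol{b})\alpha_i^O$ and, by form stability (Lemma \ref{lem:original form stability}), $y_i(b_i,\boldsymbol{b}_{-i})$ does not depend on $b_i$, the organic contribution to $P_i$ equals $b_i\, y_i(\boldsymbol{b}_{-i})\alpha_i^O - \int_0^{b_i} y_i(\boldsymbol{b}_{-i})\alpha_i^O\,dt = b_i\, y_i(\boldsymbol{b}_{-i})\alpha_i^O - b_i\, y_i(\boldsymbol{b}_{-i})\alpha_i^O = 0$. Hence $P_i(\boldsymbol{b}) = \alpha_i^A\big(b_i x_i(\boldsymbol{b}) - \int_0^{b_i} x_i(t,\boldsymbol{b}_{-i})\,dt\big)$, which is consistent with the critical-bid characterization of Lemma \ref{lem:payment}.

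Next I would condition on $\boldsymbol{b}_{-i}$ and take expectation over $b_i\sim F_i$ on the support $[0,B_i]$, applying the standard Fubini swap:
$\mathbb{E}_{b_i}\big[\int_0^{b_i} x_i(t,\boldsymbol{b}_{-i})\,dt\big] = \int_0^{B_i}\!\!\int_0^{b_i} x_i(t,\boldsymbol{b}_{-i})\,dt\, f_i(b_i)\,db_i = \int_0^{B_i} x_i(t,\boldsymbol{b}_{-i})(1-F_i(t))\,dt = \mathbb{E}_{b_i}\big[\tfrac{1-F_i(b_i)}{f_i(b_i)} x_i(b_i,\boldsymbol{b}_{-i})\big]$. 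This step needs only measurability of $x_i(\cdot,\boldsymbol{b}_{-i})$ (which follows from allocation monotonicity forcing $X_i$, and hence the ad-indicator component, to be well behaved in $b_i$) together with $F_i$ admitting a density $f_i$ on its support. Combining with the previous display and pulling out the public constant $\alpha_i^A$ gives $\mathbb{E}_{b_i}[P_i(\boldsymbol{b})] = \alpha_i^A\,\mathbb{E}_{b_i}\big[\big(b_i - \tfrac{1-F_i(b_i)}{f_i(b_i)}\big) x_i(\boldsymbol{b})\big] = \alpha_i^A\,\mathbb{E}_{b_i}[\phi_i(b_i) x_i(\boldsymbol{b})]$.

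Finally I would take the outer expectation over $\boldsymbol{b}_{-i}$, use independence of the $b_i$'s across owners, and sum over $i\in[n]$ by linearity, obtaining $\text{Rev}(\mathcal{M}) = \mathbb{E}_{\boldsymbol{b}\sim F}\big[\sum_{i\in[n]} \phi_i(b_i) x_i(\boldsymbol{b})\alpha_i^A\big]$, as claimed. The only genuinely delicate point—and the one I expect to require the most care in the writeup—is the cancellation of the organic terms in $P_i$: it is precisely here that Lemma \ref{lem:original form stability} is indispensable, since without it $\int_0^{b_i} y_i(t,\boldsymbol{b}_{-i})\alpha_i^O\,dt$ would not telescope against $b_i y_i(\boldsymbol{b})\alpha_i^O$, and the revenue could not be written as a virtual-value expectation over the ad allocation $x$ alone. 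The Fubini manipulation and the extraction of the constants $\alpha_i^A$ are routine bookkeeping.
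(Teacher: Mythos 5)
Your proof is correct and is essentially the intended argument: the paper imports this lemma from \citet{LMZWZYXZD24} without reproving it, but its accompanying remark identifies form stability as the one ingredient beyond Myerson's lemma, and that is exactly the cancellation of the organic terms in $P_i$ that you isolate before running the standard Fubini/virtual-value computation. Nothing further is needed.
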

Note that, although the conclusions in Lemma \ref{lem:payment} and \ref{lem:revenue} resemble the original results from  \citet{M81}, they cannot be directly derived from Myerson's lemma due to the presence of two possible displayed forms for each item in our model. Therefore, form stability (Lemma \ref{lem:original form stability}) is essential to establish these results.

Given the above lemmas, a merging mechanism $\mathcal{M}$ is fully characterized by the allocation rules $x(\boldsymbol{b})$ and $y(\boldsymbol{b})$ and can be denoted by $\mathcal{M}^{x,y} = (x, y)$. An ad item contributes to both revenue and user experience. Specifically, we define $a_i = \phi_i(b_i) \alpha_i^A + \gamma_i^A$ as the contribution of $\text{Ad}_i$, and $o_i = \gamma_i^O$ as the contribution of $\text{Org}_i$ to the objective function. By using \(\boldsymbol{a} = (a_1, \cdots, a_n)\) as a proxy for \(\boldsymbol{b}\), the optimization problem (\ref{fml:optmization}) can be reformulated as:
\begin{equation}\label{fml:optimization 2}
    \begin{array}{rl}
    \max\limits_{x,y} & \operatorname{OBJ}(\mathcal{M}^{x,y})=\underset{\boldsymbol{a} \sim G}{\mathbb{E}} \left[ \sum\limits_{i \in [n]} \left(a_i x_i(\boldsymbol{a}) + o_i y_i(\boldsymbol{a})\right)\right]  \\
    \text {s.t.} & \sum\limits_{i \in [n]} \left(x_i(\boldsymbol{a}) + y_i(\boldsymbol{a})\right) \leq k, \ \    \forall \boldsymbol{a},  \\
     & x_i(\boldsymbol{a}) + y_i(\boldsymbol{a})\leq 1,     \ \ \ \ \ \ \ \ \ \, \  \ \ \forall i \in [n],\boldsymbol{a},  \\
     &y_i(a_i, \boldsymbol{a}_{-i}) = y_i(a_i', \boldsymbol{a}_{-i}), \ \  \forall i \in [n], \boldsymbol{a}_{-i}, a_i, a_i', \\
     & x_i(a_i, \boldsymbol{a}_{-i}) \leq x_i(a_i', \boldsymbol{a}_{-i}),  \, \   \forall i \in [n], \boldsymbol{a}_{-i}, a_i < a_i', 
    \end{array}
\end{equation}
where $G_i$ is the (regular) distribution of \(a_i\), induced by $F_i$, and \(G = \times_{i \in [n]} G_i\) is the joint distribution on $\boldsymbol{a}$. For convenience, we will refer to this optimization problem as the merging problem.

Finally, we introduce a mathematical notation that will be frequently used in the following sections. Given a set $S$ of finite real numbers and $k\leq |S|$, we define $\text{max}^{(k)}$ as an operator that computes the sum of the largest $k$ numbers in $S$. For example, if $S=\{1,2,3\}$, then $\text{max}^{(2)} S = 3+2 =5$.

\section{Form Stability in the Optimal Mechanisms}
\label{sec:fs}
Before introducing our merging mechanisms, we first establish necessary conditions, specifically form stability, that the optimal mechanisms must satisfy for the merging problem. These conditions guide the construction of our mechanisms and facilitate performance analysis. In this section, we examine form stability in two scenarios: the general case with \(n\) items and \(k\) slots, and the simplest multi-slot case with 3 items and 2 slots. To avoid ambiguity, we refer to the mechanism in the former scenario as the \(k\)-out-of-\(n\) mechanism, and in the latter as the 2-out-of-3 mechanism. Additionally, for any owner \(i \in [n]\), we denote the allocation rules in the optimal mechanism \(\mathcal{M}^*\) by \(x^*_i(\boldsymbol{a})\), \(y^*_i(\boldsymbol{a})\), and \(X^*_i(\boldsymbol{a})\).

We begin by characterizing the optimal \(k\)-out-of-\(n\) mechanism. According to form stability of any merging mechanism (Lemma \ref{lem:original form stability}), an owner's change in bid does not affect the allocation of her organic item. This implies that if an item is displayed in its organic form in the optimal mechanism, it will continue to be displayed in that form regardless of changes in its own bid. Intuitively, knowing that an item must be displayed in its organic form reduces the problem to designing a \((k-1)\)-out-of-\((n-1)\) mechanism. Therefore, by the optimality, the displayed forms of the remaining \(n-1\) items in the optimal mechanism will also remain unchanged. Similarly, if an item is displayed in its ad form, as long as its displayed form remains unchanged, bid changes will not affect the others. We formally summarize this discussion in Proposition \ref{prop: k from n optimal}. 

\begin{proposition}\label{prop: k from n optimal}
    In the optimal $k$-out-of-$n$ mechanism satisfying IC and IR, for any $i \in [n]$, $\boldsymbol{a}_{-i}$, and $a_i$, the following three statements hold:

\begin{enumerate}
    \item If $y_i^*(a_i, \boldsymbol{a}_{-i}) = 1$, it holds that for any $j$ and any $a_i'$,  
    \begin{small}
    \begin{align*}
        y_j^*(a_i', \boldsymbol{a}_{-i}) = y_j^*(a_i, \boldsymbol{a}_{-i}) ~ \text{and} ~ x_j^*(a_i', \boldsymbol{a}_{-i}) = x_j^*(a_i, \boldsymbol{a}_{-i}).
    \end{align*}
    \end{small}
    \item If $x_i^*(a_i, \boldsymbol{a}_{-i}) = 0$, it holds that for any $j$ and $a_i' \leq  a_i$,
    \begin{small}
    \begin{align*}
        y_j^*(a_i', \boldsymbol{a}_{-i}) = y_j^*(a_i, \boldsymbol{a}_{-i}) ~ \text{and} ~
        x_j^*(a_i', \boldsymbol{a}_{-i}) = x_j^*(a_i, \boldsymbol{a}_{-i}).
    \end{align*}
    \end{small}
    \item If $x_i^*(a_i, \boldsymbol{a}_{-i}) = 1$, it holds that for any $j$ and  $a_i' > a_i$,
    \begin{small}
    \begin{align*}
        y_j^*(a_i', \boldsymbol{a}_{-i}) = y_j^*(a_i, \boldsymbol{a}_{-i}) ~ \text{and} ~
        x_j^*(a_i', \boldsymbol{a}_{-i}) = x_j^*(a_i, \boldsymbol{a}_{-i}).
    \end{align*}
    \end{small}
\end{enumerate}

\end{proposition}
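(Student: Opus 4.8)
The plan is to prove all three statements with one common argument: under each hypothesis, item~$i$'s own displayed form is already pinned down across the stated range of bids, and this is then pushed onto the remaining $n-1$ items by an exchange (optimality) argument on a residual, smaller instance.

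\textbf{Locking item $i$ and reducing.} Fix $i$ and $\boldsymbol{a}_{-i}$. By form stability (Lemma~\ref{lem:original form stability}) the value $y_i^*(\cdot,\boldsymbol{a}_{-i})=:\bar y$ is constant in $a_i$, and the allocation-monotonicity constraint makes $x_i^*(\cdot,\boldsymbol{a}_{-i})$ non-decreasing. Hence: under the hypothesis of statement~1, $\bar y=1$ and feasibility forces $x_i^*\equiv 0$, so item~$i$ is organic on the entire slice; under statement~2, $x_i^*(a_i,\boldsymbol{a}_{-i})=0$ together with monotonicity forces $x_i^*(a_i',\boldsymbol{a}_{-i})=0$ for all $a_i'\le a_i$, so item~$i$ keeps the same form (organic or absent, according to $\bar y$) there; under statement~3, monotonicity forces $x_i^*(a_i',\boldsymbol{a}_{-i})=1$ and hence $\bar y=0$ for all $a_i'\ge a_i$, so item~$i$ is an ad there. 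In every case the $j=i$ part of the conclusion is immediate. Moreover, on the relevant range $I$ of $a_i'$ (all $a_i'$ in case~1, all $a_i'\le a_i$ in case~2, all $a_i'\ge a_i$ in case~3) item~$i$ occupies a fixed number $s\in\{0,1\}$ of slots and contributes a fixed amount to $\mathrm{OBJ}$, so the allocations of the items $j\ne i$ on $I$ must form a feasible $(k-s)$-out-of-$(n-1)$ allocation; their contribution to the objective is $\sum_{j\ne i}\big(a_jx_j+o_jy_j\big)$, whose coefficients $a_j,o_j$, and whose feasibility/monotonicity/form-stability constraints that mention only the bids of these $n-1$ owners, carry \emph{no} dependence on $a_i$.

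\textbf{Exchange.} Because the residual problem never sees $a_i$, for any two points $a_i^1,a_i^2\in I$ the sub-allocations that $\mathcal{M}^*$ uses at $(a_i^1,\boldsymbol{a}_{-i})$ and at $(a_i^2,\boldsymbol{a}_{-i})$ generate residual objective values computed with the \emph{same} coefficients. If these two values differ, I would modify $\mathcal{M}^*$ by copying the better of the two sub-allocations onto the worse point (and onto a neighborhood of it inside $I$, to obtain positive probability), keeping item~$i$'s allocation and the rest of the mechanism untouched; this strictly increases $\mathrm{OBJ}$, contradicting optimality of $\mathcal{M}^*$ --- \emph{provided} the copied mechanism is still feasible, IC and IR. Granting that, the sub-allocations at all $a_i'\in I$ must have equal objective value; replacing every sub-allocation on $I$ by the one $\mathcal{M}^*$ uses at a single reference point of $I$ then makes $x_j^*(\cdot,\boldsymbol{a}_{-i})$ and $y_j^*(\cdot,\boldsymbol{a}_{-i})$ literally constant in $a_i$ on $I$, with no loss of objective, which is exactly the three statements.

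\textbf{Main obstacle.} The crux is the proviso above: verifying that copying a new sub-allocation for the items $j\ne i$ onto a region of bid space preserves IC and IR. Both allocation monotonicity (Lemma~\ref{lem:myerson}) and form stability (Lemma~\ref{lem:original form stability}) are constraints \emph{along the $a_j$-axis}, whereas the region we modify is essentially a slice transverse to that axis, so the substitution can a priori destroy monotonicity of $X_j$ (or form stability of $y_j$) at the seam between the modified region and the untouched remainder. I would handle this by an induction on the number of items $n$ --- the base case $n=k$ being trivial, since then each item's allocation is independent of the others' --- invoking the proposition itself for the residual $(n-1)$-item instance to control how the substituted sub-allocation depends on the bids $a_j$ ($j\ne i$) and hence to check that it glues onto $\mathcal{M}^*$ monotonically. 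Possible ties in the residual problem are harmless: they only affect whether the conclusion is read as ``every optimal mechanism'' or ``some optimal mechanism,'' and the construction already delivers the latter.
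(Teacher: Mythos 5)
Your proposal takes essentially the same route as the paper: the paper decomposes the objective over the slice $\{(a_i',\boldsymbol{a}_{-i})\}$, uses form stability and allocation monotonicity to pin down item $i$'s own form on exactly the three ranges you identify, and then concludes that since the residual coefficients $a_j,o_j$ ($j\ne i$) do not depend on $a_i'$, the optimal allocation of the remaining items is unchanged. The ``main obstacle'' you flag (gluing a copied sub-allocation onto the slice without breaking monotonicity or form stability along the $a_j$-axes) is simply not addressed in the paper, which asserts the residual-optimality conclusion directly, so your argument is at least as complete as the published one and your proposed induction is extra machinery the paper does without.
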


As a corollary, in the optimal mechanism, if an owner's bid adjustment does not alter her own displayed form, it will not affect the displayed forms of others.

\begin{corollary}\label{cor:optimal form stability}
    In the optimal $k$-out-of-$n$ mechanism satisfying IC and IR, given any $\boldsymbol{a}_{-i}$, if owner $i$ changes from $a_i$ to $a_i'$ such that $X_i^*(a_i, \boldsymbol{a}_{-i}) = X_i^*(a_i', \boldsymbol{a}_{-i})$, then for any $j$, $X^*_j$ remains unchanged.
\end{corollary}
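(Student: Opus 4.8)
The plan is to reduce the statement to a direct application of Proposition \ref{prop: k from n optimal}, after first observing that equality of the expected click-through rates of owner $i$ forces equality of her ad-allocation indicator. By form stability (Lemma \ref{lem:original form stability}) we have $y_i^*(a_i,\boldsymbol{a}_{-i}) = y_i^*(a_i',\boldsymbol{a}_{-i})$, so from the hypothesis $X_i^*(a_i,\boldsymbol{a}_{-i}) = X_i^*(a_i',\boldsymbol{a}_{-i})$, the definition $X_i^* = x_i^*\alpha_i^A + y_i^*\alpha_i^O$, and $\alpha_i^A > 0$, we conclude $x_i^*(a_i,\boldsymbol{a}_{-i}) = x_i^*(a_i',\boldsymbol{a}_{-i})$; call this common value $z \in \{0,1\}$. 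Without loss of generality assume $a_i < a_i'$ (the case $a_i > a_i'$ is symmetric, and $a_i = a_i'$ is trivial).

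Now split on $z$. If $z = 0$, then in particular $x_i^*(a_i',\boldsymbol{a}_{-i}) = 0$; applying the second statement of Proposition \ref{prop: k from n optimal} with its "$a_i$" instantiated as our $a_i'$ and its "$a_i'$" instantiated as our $a_i$ (legitimate since $a_i \le a_i'$) yields, for every $j$, $x_j^*(a_i,\boldsymbol{a}_{-i}) = x_j^*(a_i',\boldsymbol{a}_{-i})$ and $y_j^*(a_i,\boldsymbol{a}_{-i}) = y_j^*(a_i',\boldsymbol{a}_{-i})$. If $z = 1$, then $x_i^*(a_i,\boldsymbol{a}_{-i}) = 1$; applying the third statement of Proposition \ref{prop: k from n optimal} at $a_i$ with $a_i' > a_i$ gives the same two equalities for every $j$. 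In either case, substituting into $X_j^* = x_j^*\alpha_j^A + y_j^*\alpha_j^O$ gives $X_j^*(a_i,\boldsymbol{a}_{-i}) = X_j^*(a_i',\boldsymbol{a}_{-i})$ for all $j$, which is the claim.

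Since Proposition \ref{prop: k from n optimal} carries essentially all the weight, there is no genuine obstacle here; the only points needing a moment's care are the bookkeeping of which statement of the proposition to invoke and in which direction (it is phrased "downward" when $x_i^* = 0$ and "upward" when $x_i^* = 1$), and the innocuous remark that once $y_i^*$ is pinned down by form stability, the value of $X_i^*$ determines $x_i^*$.
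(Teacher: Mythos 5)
Your proposal is correct and follows essentially the same route as the paper: the paper's own proof is a one-line reduction asserting that the hypothesis "corresponds precisely to one of the three cases" of Proposition \ref{prop: k from n optimal}, and your argument simply makes that reduction explicit (pinning down $y_i^*$ via form stability, deducing $x_i^*$ from $X_i^*$, and then invoking statements 2 or 3 of the proposition in the appropriate direction). The only implicit assumption you add is $\alpha_i^A>0$, which is harmless since click-through rates are taken to be positive.
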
 


From the characterization of the optimal \(k\)-out-of-\(n\) mechanism, it is clear that form stability holds when only one owner changes her bid. However, the outcome becomes uncertain when more than two owners change their bids simultaneously. In the rest of this section, we examine the simplest multi-slot scenario with three items and two slots, further exploring form stability when all owners' bids can change simultaneously. Generally, maintaining a fixed allocation of ad items, \(x(\boldsymbol{a})\), is challenging, so we focus on the allocation of organic items, \(y(\boldsymbol{a})\). Intuitively, if under a given bid profile, the optimal mechanism favors owner \(i\) over owner \(j\) (i.e., \(y_i(\boldsymbol{a}) = 1\) and \(y_j(\boldsymbol{a}) = 0\)) in terms of organic items, it indicates that owner \(i\) contributes more to the objective than owner \(j\), regardless of the bids. The following lemma supports this intuition by demonstrating that in the simplest setting, if there exists a bid profile where the optimal mechanism displays owner \(i\)'s organic form but not owner \(j\)'s, then it is impossible for the optimal mechanism to display owner \(j\)'s organic form without also displaying owner \(i\)'s in any bid profile.

\begin{lemma}\label{lem:2 from 3 fs}
    In the optimal $2$-out-of-$3$ mechanism, 
    if there exists some $\boldsymbol{a}$ such that $y_i^*(\boldsymbol{a})=1$ and $y_j^*(\boldsymbol{a})=0$, then for any $\boldsymbol{a}'$, it is not possible that $y_i^*(\boldsymbol{a}')=0$ but $y_j^*(\boldsymbol{a}')=1$.
\end{lemma}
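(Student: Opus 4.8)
The plan is to argue by contradiction: suppose there exist bid profiles $\boldsymbol{a}$ and $\boldsymbol{a}'$ with $y_i^*(\boldsymbol{a})=1, y_j^*(\boldsymbol{a})=0$ but $y_i^*(\boldsymbol{a}')=0, y_j^*(\boldsymbol{a}')=1$. Let $\ell$ denote the third item. First I would use Corollary \ref{cor:optimal form stability} (form stability under a single-owner bid change) to normalize both profiles as much as possible. Concretely, starting from $\boldsymbol{a}$, I want to lower $a_i$ and raise $a_j$ one at a time; each such move either leaves all displayed forms unchanged (so the organic pattern is preserved) or it changes item $i$'s or $j$'s own form at a critical threshold. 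By tracking these thresholds I can reduce to a small number of canonical configurations of $(\boldsymbol{a},\boldsymbol{a}')$ in which the ``flip'' between $i$ and $j$ is witnessed by a controlled change of coordinates — ideally changing only $a_i$ and $a_j$ while keeping $a_\ell$ fixed, or comparing the two profiles along a common intermediate profile.

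The core of the argument is a local exchange/optimality computation. Because the mechanism is optimal, on each fixed $\boldsymbol{a}$ the allocation $(x^*,y^*)$ must maximize $\sum_i (a_i x_i + o_i y_i)$ subject to feasibility, form stability, and monotonicity of $x$ in the owner's own bid. With only $3$ items and $2$ slots, the feasible per-profile outcomes are few: either two organic items, or one ad and one organic item, or two ads, or fewer than two items shown; and $o_i=\gamma_i^O$ does not depend on the bid profile. The key point is that $y_i^*(\boldsymbol{a})=1, y_j^*(\boldsymbol{a})=0$ forces, via a local swap of $i$'s organic slot for $j$'s, an inequality relating $o_i, o_j$ and the values $a_\ell$, $a_i$, $a_j$ (the loser's alternative must not be weakly better); similarly the reversed pattern at $\boldsymbol{a}'$ yields the opposite-direction inequality. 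I would then combine these with the single-owner form-stability statements of Proposition \ref{prop: k from n optimal} — in particular, raising the loser's own bid cannot change anyone's form as long as that loser stays an ad or stays undisplayed, and lowering a displayed-ad owner's bid below its critical value is the only way its form flips — to propagate the inequalities to a common comparison point and derive $o_i > o_j$ and $o_i < o_j$ simultaneously, a contradiction. A subtlety is the case where the flip is ``mediated'' by the third item $\ell$ (e.g. $\ell$ is organic in one profile and not the other); here I would first apply the Lemma's own statement-to-be-proven inductively is not available, so instead I would handle $\{i,\ell\}$ and $\{j,\ell\}$ comparisons directly using the same local-exchange inequalities, since there are only three items.

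The main obstacle I anticipate is the bookkeeping of which profile transformations are ``safe'' (preserve all forms) versus which cross a critical threshold, and ensuring that after normalization the two profiles $\boldsymbol{a}$ and $\boldsymbol{a}'$ can genuinely be compared — i.e. that one cannot hide the flip behind simultaneous changes of all three coordinates in a way that evades Corollary \ref{cor:optimal form stability}. To control this I would change coordinates one owner at a time and, whenever a change forces a form switch, record the resulting new profile and restart the reduction from there; since each form switch is monotone (an ad can only turn on as its bid rises, an organic slot is bid-independent), this process terminates and leaves finitely many canonical cases. Once the cases are enumerated, each reduces to the elementary two-option comparison $o_i$ vs.\ the best feasible alternative, and the contradiction $o_i > o_j$ and $o_j > o_i$ follows. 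I expect the write-up to be a short case analysis rather than any heavy calculation.
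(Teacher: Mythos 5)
Your overall strategy matches the paper's: argue by contradiction, use single-owner form stability to align the two profiles coordinate by coordinate, reduce to a flip mediated by the third item, and finish with exchange inequalities. The paper's normalization is simpler than the threshold-tracking you describe: since $y_i^*(\boldsymbol{a})=1$, Statement 1 of Proposition \ref{prop: k from n optimal} lets you replace $a_i$ by $a_i'$ in $\boldsymbol{a}$ without changing \emph{any} allocation, and symmetrically replace $a_j'$ by $a_j$ in $\boldsymbol{a}'$; the two profiles then differ only in $a_\ell$, and Corollary \ref{cor:optimal form stability} forces $\ell$'s own form to flip, pinning down both allocations completely. You do not need to restart the reduction after form switches or enumerate canonical cases.

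The genuine gap is in your central ``local exchange'' step. You assert that optimality forces $(x^*,y^*)$ to maximize $\sum_i(a_ix_i+o_iy_i)$ \emph{on each fixed} $\boldsymbol{a}$ subject to feasibility, form stability, and monotonicity. But form stability and monotonicity are cross-profile constraints, so the optimal mechanism is not a pointwise maximizer --- the paper stresses exactly this when explaining why the greedy Myerson-like rule fails IC. The correct exchange inequality for ``item $m$ is shown as an ad rather than organic'' must be taken in expectation over $a_m$ (because $y_m$ cannot depend on $a_m$), conditional on the coordinates that $y_m$ is allowed to depend on. Concretely, with $i=1$, $j=2$, the paper's inequalities are
\[
\underset{a_2}{\mathbb{E}}\bigl[{\max}^{(2)}\{o_1,a_2,a_3\}\bigr] > {\max}^{(2)}\{o_1,o_2,a_3\}
\quad\text{and}\quad
o_2+a_3' > a_3'+\underset{a_2}{\mathbb{E}}\bigl[\max\{o_1,a_2\}\bigr],
\]
which chain to $o_2>\mathbb{E}_{a_2}[\max\{o_1,a_2\}]>\mathbb{E}_{a_2}[\max\{a_2,a_3\}]>o_2$. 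Note the contradiction is not $o_i>o_j$ and $o_j>o_i$ as you predict; it runs through these conditional expectations, and obtaining it requires knowing (from the normalization above) exactly which item occupies the second slot in each profile. Without replacing your pointwise comparison by the correctly conditioned expected-value comparison, the inequalities you would write down are not implied by optimality, so the case analysis as proposed does not close.
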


Using Lemma \ref{lem:2 from 3 fs}, we can compare the priority of any two items' organic forms in the optimal mechanism. Consequently, with three items, the optimal mechanism induces a partial order among the organic items, as outlined in Theorem \ref{thm:partial order}. Moreover, we can demonstrate that the bids of items lower in this order can influence the allocation of items higher in the order, while the reverse is not true. 


\begin{theorem}\label{thm:partial order}
    In the optimal 2-out-of-3 mechanism, 
    there exists a partial order $(i_1, i_2, i_3)$ such that \(y_{i_1}^* \geq y_{i_2}^* \geq y_{i_3}^*\), and \(y_{i_1}^* = y_{i_1}^*(a_{i_2}, a_{i_3})\), \(y_{i_2}^* = y_{i_2}^*(a_{i_3})\), \(y_{i_3}^* = 0\).
\end{theorem}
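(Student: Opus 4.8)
The plan is to bootstrap from Lemma~\ref{lem:2 from 3 fs} to first establish a total order on the ``organic priority'' of the three items, then show this order controls which bids can influence which allocations, and finally pin down the explicit dependence structure claimed in the statement. First I would define a relation $\succeq$ on $\{1,2,3\}$ by setting $i \succeq j$ whenever there exists some bid profile $\boldsymbol{a}$ with $y_i^*(\boldsymbol{a}) = 1$ and $y_j^*(\boldsymbol{a}) = 0$, and also declaring $i \succeq j$ by fiat when the organic forms of $i$ and $j$ are ``always tied'' (never separated by any profile). Lemma~\ref{lem:2 from 3 fs} is precisely the statement that $\succeq$ has no $2$-cycles from genuine separation, so on the quotient where tied items are identified, $\succeq$ is antisymmetric; I would then need to argue it is total (any two items are comparable) and transitive, so that we can list the three items as $(i_1, i_2, i_3)$ with $y_{i_1}^* \geq y_{i_2}^* \geq y_{i_3}^*$ pointwise. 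Totality is immediate since for any profile at most two organic slots can be used out of two slots total, but actually the key point is: since $k=2$ slots and $3$ items, at least one item is never organic in a given profile, and comparing pairs via Lemma~\ref{lem:2 from 3 fs} plus a short case analysis forces a consistent linear chain; the pointwise inequality $y_{i_1}^* \geq y_{i_2}^*$ then follows because if ever $y_{i_2}^*(\boldsymbol{a}) = 1$ while $y_{i_1}^*(\boldsymbol{a}) = 0$, that would witness $i_2 \succ i_1$, contradicting the chosen order.

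Next I would establish $y_{i_3}^* \equiv 0$. Since $y_{i_1}^* \geq y_{i_2}^* \geq y_{i_3}^*$ pointwise, whenever $y_{i_3}^*(\boldsymbol{a}) = 1$ we also have $y_{i_1}^*(\boldsymbol{a}) = y_{i_2}^*(\boldsymbol{a}) = 1$, using up both slots; but then a reallocation argument (drop $i_3$'s organic form, which contributes $o_{i_3} = \gamma_{i_3}^O$, and instead show some alternative weakly improves the objective) combined with the feasibility constraint $\sum_i (x_i + y_i) \le 2$ shows displaying $i_3$ organically is never strictly needed. More carefully: if all three organic forms are displayed we violate feasibility, so $y_{i_3}^* = 1$ forces exactly $i_1, i_2$ organic and $i_3$ contributes $\gamma_{i_3}^O$ from a third slot --- impossible with only two slots. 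Hence $y_{i_3}^* = 0$ identically, which also discharges part of the claim trivially: $y_{i_3}^* = 0 = y_{i_3}^*(\cdot)$ depends on nothing.

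For the dependence claims $y_{i_1}^* = y_{i_1}^*(a_{i_2}, a_{i_3})$ and $y_{i_2}^* = y_{i_2}^*(a_{i_3})$, I would invoke form stability (Lemma~\ref{lem:original form stability}): each $y_i^*$ already does not depend on $a_i$, so $y_{i_1}^*$ depends at most on $(a_{i_2}, a_{i_3})$ and $y_{i_2}^*$ depends at most on $(a_{i_1}, a_{i_3})$ --- the only remaining task is to show $y_{i_2}^*$ does not depend on $a_{i_1}$. Here is where the order ``higher items can be influenced by lower items but not vice versa'' gets used: I would show that changing $a_{i_1}$ alone cannot flip $y_{i_2}^*$. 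Suppose it did, say $y_{i_2}^*(a_{i_1}, a_{i_3}) = 1$ and $y_{i_2}^*(a_{i_1}', a_{i_3}) = 0$ for some $a_{i_1} < a_{i_1}'$ (monotonicity of the relevant direction to be checked via an exchange argument on the objective). When $y_{i_2}^* = 1$ we also have $y_{i_1}^* = 1$ by the order, so at the first profile both top items are organic and $i_1$'s ad is not shown; raising $a_{i_1}$ to $a_{i_1}'$ should, by optimality, only make displaying $\mathrm{Ad}_{i_1}$ or $\mathrm{Org}_{i_1}$ \emph{more} attractive, and in particular cannot be what evicts $i_2$'s organic form without also evicting $i_1$'s --- but evicting $i_1$'s organic form contradicts $y_{i_1}^* \geq y_{i_2}^*$ unless $y_{i_2}^*$ is also evicted in a way that the order forbids. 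Formalizing this requires comparing the optimal objective across the two profiles using Lemma~\ref{lem:revenue} (so the objective is $\mathbb{E}[\sum_i a_i x_i + o_i y_i]$ with $a_i = \phi_i(b_i)\alpha_i^A + \gamma_i^A$) and the $2$-out-of-$3$ structure, reducing to a finite case check over which pair of items occupies the two slots.

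The main obstacle I anticipate is exactly this last step: ruling out that $a_{i_1}$ influences $y_{i_2}^*$. The earlier results (Proposition~\ref{prop: k from n optimal}, Corollary~\ref{cor:optimal form stability}) only control the effect of a bid change \emph{when the changer's own form is unaffected}, and Lemma~\ref{lem:2 from 3 fs} is a pairwise non-reversal statement, not directly a statement about which coordinates $y_{i_2}^*$ reads. I expect to need a dedicated exchange/optimality argument: fix $a_{i_3}$, treat $y_{i_1}^*$ and $y_{i_2}^*$ as functions of $a_{i_1}$, use that they are $\{0,1\}$-valued with $y_{i_1}^* \geq y_{i_2}^*$, and show that the region $\{a_{i_1} : y_{i_2}^*(a_{i_1}, a_{i_3}) = 1\}$ must actually be all of $[0, B_{i_1}]$ or empty --- because any proper threshold would let us construct a strictly better mechanism by making $y_{i_2}^*$ constant (equal to its value on the more valuable side) while rearranging $x_{i_1}$, which is free to depend on $a_{i_1}$ monotonically. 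This ``smoothing'' argument, carried out against the linear objective in~(\ref{fml:optimization 2}), should close the proof, but getting the feasibility bookkeeping exactly right in the $2$-out-of-$3$ case is the delicate part.
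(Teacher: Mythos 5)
Your first two steps track the paper's proof: Lemma~\ref{lem:2 from 3 fs} gives pairwise pointwise domination between any two of the three $y_i^*$'s, which (since pointwise domination is transitive) yields the chain $y_{i_1}^*\geq y_{i_2}^*\geq y_{i_3}^*$, and $y_{i_3}^*\equiv 0$ follows from the feasibility constraint $\sum_i(x_i+y_i)\leq 2$ exactly as you say. The problem is the last step, where you correctly identify that everything hinges on showing $y_{i_2}^*$ does not depend on $a_{i_1}$, but then you (i) misdiagnose the available tools as insufficient and (ii) leave the step unproved, offering only a sketched ``smoothing''/exchange construction that you yourself flag as delicate and do not carry out. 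That is a genuine gap: the region-must-be-all-or-nothing argument is never established, and the claim that raising $a_{i_1}$ ``should, by optimality, only make displaying $i_1$ more attractive'' is not backed by any of the cited lemmas.

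The gap is avoidable, and the paper closes it in two lines by a case split on $y_{i_1}^*$. Your worry that Proposition~\ref{prop: k from n optimal} ``only controls the effect of a bid change when the changer's own form is unaffected'' overlooks that form stability (Lemma~\ref{lem:original form stability}) makes this hypothesis automatic in the relevant case: if $y_{i_1}^*(\boldsymbol{a})=1$, then $y_{i_1}^*$ remains $1$ under \emph{any} change of $a_{i_1}$, so Statement~1 of Proposition~\ref{prop: k from n optimal} applies unconditionally and fixes every $y_j^*$ and $x_j^*$, in particular $y_{i_2}^*$. If instead $y_{i_1}^*(\boldsymbol{a})=0$, then form stability keeps $y_{i_1}^*=0$ for all values of $a_{i_1}$, and the chain $y_{i_2}^*\leq y_{i_1}^*$ forces $y_{i_2}^*=0$ throughout. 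In both cases $y_{i_2}^*$ is constant in $a_{i_1}$; combined with form stability for $i_2$ itself this gives $y_{i_2}^*=y_{i_2}^*(a_{i_3})$, and the analogous (easier) statement for $i_1$ gives $y_{i_1}^*=y_{i_1}^*(a_{i_2},a_{i_3})$. No exchange or smoothing argument against the objective is needed.
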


    
    
    

\section{Generalized Fix Mechanism}
\label{sec:fix}
In the last section, we provide a more detailed characterization of the optimal merging mechanisms, but we find that designing the optimal mechanism is not trivial.  Intuitively, one might think that ranking all \(a_i\) and \(o_i\) in non-increasing order and selecting the \(k\) items with the highest values, while avoiding conflicts in form, would be sufficient based on the objective function in the merging problem (\ref{fml:optimization 2}). Yet, due to a key feature of our model—where displaying an item in its organic form yields greater utility—this approach may not be incentive compatible. Some items could manipulate their displayed form from ad to organic by misreporting, aiming for better utility. Thus, a Myerson-like mechanism cannot be directly applied to our problem.



Starting from this section, we introduce two simple merging mechanisms and analyze their performance relative to the optimal mechanism. In this section, we first introduce the generalized fix (G-FIX) mechanism  and then present the generalized change mechanism in next section. The primary challenge in designing 
a merging mechanism lies in determining the displayed form of each item. To address this, we restrict some items to be displayed only in their organic form, while others are restricted to ad form. We then select the \(k\) items with the highest contribution, leading to the G-FIX-\(I\) mechanism.



\begin{definition}[G-FIX-$I$ Mechanism]
Define a set $I\subseteq [n]$ such that  items in $I$ can only be displayed in organic form, while items in $[n] \setminus I$ can only be displayed in ad form. The G-FIX-I mechanism, denoted by $\mathcal{M}_I^F$, is defined as follows: Let $L$ be   
    \[ 
         L = \{o_i : i \in I\} \cup \{a_i : i \in  [n] \setminus I\} .
    \]
Then, for each item \(i \in [n]\):
    \[
        x_i(\boldsymbol{a}) = 
            \begin{cases} 
            1 & \text{if } i \notin I \text{ and } a_i > {\max}^{(k+1)}L -{\max}^{(k)}L ,\\ 
            0 & \text{otherwise}.
            \end{cases}
    \]
    \[
        y_i(\boldsymbol{a}) = 
        \begin{cases} 
        1 & \text{if } i \in I \text{ and } o_i > {\max}^{(k+1)}L -{\max}^{(k)}L,\\ 
        0 & \text{otherwise}.
        \end{cases}
    \]
\end{definition}
The objective value of \(\mathcal{M}_I^F\) is then calculated by:
\begin{small}
    \[
    \operatorname{OBJ}(\mathcal{M}_I^F) = \underset{\boldsymbol{a} \sim G }{\mathbb{E}}\left[{\max}^{(k)} {\{o_i : i \in I\} \cup \{a_i : i \in [n] \setminus I\}}\right].
    \]
\end{small}


It is easy to verify that the constraints in the merging problem (\ref{fml:optimization 2}) are satisfied by the allocation rules of the G-FIX-$I$ mechanism, ensuring that $\mathcal{M}_I^F$ is IC, IR, and feasible. By evaluating all \(\mathcal{M}_I^F\) mechanisms with \(|I| \leq k\) and selecting the one with the highest objective value, we obtain our first merging mechanism, the G-FIX mechanism.



\begin{definition}[G-FIX Mechanism]\label{def:G-Fix Mechanism}
    The G-FIX mechanism, denoted by \( \mathcal{M}^F \), selects the mechanism with the highest objective value among all \( \mathcal{M}_I^F \) mechanisms, where \( I \subseteq [n] \) and \(|I| \leq k\).
\end{definition}

We next compare the performance of the G-FIX mechanism to the optimal merging mechanism, starting with the simplest setting of 3 items and 2 slots. 




\begin{theorem}\label{thm: fix 2 from 3}
For content merging problem with $3$ items and $2$ slots, $\mathcal{M}^{F}$ is $\left({4}/{5}\right)^{3}$-approximate relative to the optimal mechanism.
\end{theorem}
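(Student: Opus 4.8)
The plan is to fix a bid profile realization $\boldsymbol{a}=(a_1,a_2,a_3)$ together with the induced organic values $o_1,o_2,o_3$, and show a pointwise inequality $\operatorname{OBJ}(\mathcal{M}^{F})(\boldsymbol{a}) \geq (4/5)^3 \operatorname{OBJ}(\mathcal{M}^{*})(\boldsymbol{a})$; since this is demanded in expectation over $\boldsymbol{a}\sim G$, a pointwise bound suffices. By Theorem~\ref{thm:partial order}, the optimal $2$-out-of-$3$ mechanism has a partial order on organic items, so WLOG relabel so that $i_1,i_2,i_3$ is that order: $y^*_{i_3}\equiv 0$, $y^*_{i_2}$ depends only on $a_{i_3}$, and $y^*_{i_1}$ depends only on $(a_{i_2},a_{i_3})$. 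This drastically limits the structure of $\mathcal{M}^{*}$: at a given profile either (a) no organic item is shown and $\mathcal{M}^{*}$ just picks the two largest $a_i$'s, or (b) only $\mathrm{Org}_{i_1}$ is shown alongside one ad, or (c) both $\mathrm{Org}_{i_1}$ and $\mathrm{Org}_{i_2}$ are shown. The first step is to enumerate these cases and, in each, read off $\operatorname{OBJ}(\mathcal{M}^{*})(\boldsymbol{a})$ explicitly as a sum of two numbers drawn from $\{a_1,a_2,a_3,o_1,o_2,o_3\}$ respecting the "no item in both forms" constraint.

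The second step is to produce, for each case, a specific member $\mathcal{M}_I^F$ of the G-FIX family whose objective at $\boldsymbol{a}$ is at least a $(4/5)^3$ fraction of the optimum, and then invoke Definition~\ref{def:G-Fix Mechanism} (G-FIX takes the best $I$) to conclude. The natural candidates are $I=\emptyset$ (take ${\max}^{(2)}\{a_1,a_2,a_3\}$), $I=\{i_1\}$ (take ${\max}^{(2)}$ of $\{o_{i_1}\}\cup\{a_j: j\neq i_1\}$), and $I=\{i_1,i_2\}$. The point of the factor $(4/5)^3$ is presumably that it is the worst case of three successive "lossy" comparisons: each time we are forced to commit an item to one form, we may lose up to a factor of $4/5$ against the optimum's freedom to have chosen the other form or a different item. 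I would look for the tight instance — likely something where $a_i$ and $o_i$ are close in a way that makes every fixed choice $I$ slightly suboptimal — to confirm the exponent is $3$ and not smaller, and then verify the general bound by a short case check using the inequalities $\alpha_i^A<\alpha_i^O$, $\gamma_i^A\le\gamma_i^O$, hence $o_i=\gamma_i^O\ge\gamma_i^A$ and the relation between $a_i=\phi_i\alpha_i^A+\gamma_i^A$ and $o_i$.

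A cleaner route for the second step, which I would try first, is to establish a single clean lemma: for any instance, $\operatorname{OBJ}(\mathcal{M}^{*}) \le \operatorname{OBJ}(\mathcal{M}_{I}^F)$ for the "right" $I$ up to one factor of $4/5$ per organic slot that $\mathcal{M}^{*}$ uses, proven by a local exchange/swapping argument: start from $\mathcal{M}^{*}$'s allocation and modify it one item at a time toward a fixed-form allocation, bounding the loss at each modification. Because there are at most $2$ organic items displayed plus one "boundary" adjustment when we restrict the candidate pool, we incur the factor at most three times, giving $(4/5)^3$.

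The main obstacle I anticipate is the bookkeeping in the exchange argument: when we force an item from its optimal form to a fixed form, the set $L$ in the definition of $\mathcal{M}_I^F$ changes, which can shift the threshold ${\max}^{(k+1)}L-{\max}^{(k)}L$ and cascade to other items. Controlling this cascade — showing that one "bad" commitment costs at most a $4/5$ factor rather than compounding unboundedly — is the delicate part, and it is exactly where Theorem~\ref{thm:partial order} earns its keep: the partial order guarantees that the organic allocations of $\mathcal{M}^{*}$ are nested, so the modifications can be ordered (commit $i_3$'s form first, then $i_2$, then $i_1$) in a way that makes each step's loss isolated and quantifiable. I would therefore structure the proof as: (1) invoke Theorem~\ref{thm:partial order} to pin down the three-case structure of $\mathcal{M}^{*}$; (2) in each case exhibit the competing $\mathcal{M}_I^F$ and bound the ratio by elementary algebra on two-term sums; (3) take the max over the three cases and over $\boldsymbol{a}$, and observe the worst ratio is $(4/5)^3$, matched by an explicit instance.
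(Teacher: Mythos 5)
There is a fatal gap at the very first step: the reduction to a pointwise inequality. At any \emph{fixed} realization $\boldsymbol{a}$, the best member of the G-FIX family achieves the pointwise optimum exactly, since $\max_{I}\,{\max}^{(2)}\bigl(\{o_i : i\in I\}\cup\{a_i : i\notin I\}\bigr)$ ranges over every conflict-free selection of two values from $\{a_1,a_2,a_3,o_1,o_2,o_3\}$; so your case analysis, if it succeeded, would prove a ratio of $1$, which should signal that something is off. The error is in the step ``produce, for each case, a specific $\mathcal{M}_I^F$ \ldots and then invoke Definition~\ref{def:G-Fix Mechanism} to conclude'': G-FIX selects $I$ by comparing \emph{expected} objectives and must commit to a single $I$ before the bids are realized, so you cannot pick a different $I$ per realization of $\boldsymbol{a}$. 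The entire $(4/5)^3$ loss lives precisely in the gap $\max_I \mathbb{E}[\cdot] \le \mathbb{E}[\max_I \cdot]$, i.e., ex-ante commitment versus adaptivity, and this quantity has no pointwise analogue; a pointwise exchange argument cannot produce a factor of $4/5$ per step because that constant is intrinsically distributional.

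The paper's actual proof makes this explicit. Using Theorem~\ref{thm:partial order}, it shows the optimal mechanism can adapt its organic allocation only in a nested way ($y_{i_2}^*$ as a function of $a_{i_3}$, then $y_{i_1}^*$ as a function of $a_{i_2},a_{i_3}$), which yields the upper bound
$\operatorname{OBJ}(\mathcal{M}^{*}) \le \mathbb{E}_{a_3}\bigl[\max\bigl\{\mathrm{m}_{\{1,2\}}^F,\, \mathbb{E}_{a_2}\bigl[\max\bigl\{\mathrm{m}_{\{1\}}^F,\, \mathbb{E}_{a_1}[\mathrm{m}_0]\bigr\}\bigr]\bigr\}\bigr]$,
while G-FIX is lower-bounded by the max of the three fully ex-ante expectations $\mathbb{E}[\mathrm{m}_{\{1,2\}}^F]$, $\mathbb{E}[\mathrm{m}_{\{1\}}^F]$, $\mathbb{E}[\mathrm{m}_0]$. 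The ratio is then telescoped into three factors, each moving one $\max$ past one expectation, and each factor is bounded below by $4/5$ via Lemma~\ref{cl:4/5} (a bound on $\max\{\mathbb{E}[h_1],\mathbb{E}[h_2]\}/\mathbb{E}[\max\{h_1,h_2\}]$ when $h_2/h_1\in[r,r+1]$), which requires verifying the ratio condition for the specific order-statistic functions involved --- the bulk of the technical work, and entirely absent from your sketch. Your intuitions about the role of Theorem~\ref{thm:partial order}, the three candidate sets $I$, and ``three lossy steps'' are correct, but the loss must be accounted for at the level of expectations, not realizations.
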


Returning to the general setting with \(n\) items and \(k\) slots, we also demonstrate that \(\mathcal{M}^F\) has a guaranteed approximate ratio under certain assumptions. 



\begin{theorem}\label{thm: fix k from n}
For content merging problem with $n$ items and $k$ slots, when $2k \leq n$ and the bid distributions are identical, $\mathcal{M}^{F}$ is $C_{n-k}^{k}/C_{n}^{k}$-approximate mechanism. 
\end{theorem}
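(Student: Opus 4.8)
The plan is to sandwich the optimum between a pointwise relaxation from above and an averaging bound on $\mathcal{M}^F$ from below, with the advertised constant emerging as the probability that a uniformly random $k$-subset of $[n]$ is disjoint from a fixed $k$-subset. For the upper bound: in the merging problem (the reformulation in terms of $a_i,o_i$), every realization $\boldsymbol{a}$ has at most $k$ items displayed, and a displayed item $i$ contributes $a_ix_i^*+o_iy_i^*\le\max(a_i,o_i)$; since $o_i=\gamma_i^O\ge 0$ each $\max(a_i,o_i)$ is non-negative, so the realized objective is at most $\max^{(k)}\{\max(a_i,o_i):i\in[n]\}$. Writing $c_i:=\max(a_i,o_i)$ and invoking the identical-distribution hypothesis to treat the items symmetrically — in particular with common organic contribution $o_i\equiv o$ — this yields $\operatorname{OBJ}(\mathcal{M}^*)\le\mathbb{E}_{\boldsymbol{a}}\bigl[\max^{(k)}\{c_i:i\in[n]\}\bigr]$. (Individual $a_i$ may be negative, but this is harmless throughout, as every quantity witnessing our inequalities is bounded below by $o\ge 0$.)

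For the lower bound, since $\mathcal{M}^F$ takes the best $\mathcal{M}_I^F$ over all $I$ with $|I|\le k$, it dominates the average over $k$-subsets: using the objective formula for $\mathcal{M}_I^F$, $\operatorname{OBJ}(\mathcal{M}^F)\ge\mathbb{E}_{I}\bigl[\operatorname{OBJ}(\mathcal{M}_I^F)\bigr]=\mathbb{E}_{\boldsymbol{a}}\,\mathbb{E}_{I}\bigl[\max^{(k)}L_I\bigr]$, where $I$ is uniform over the $k$-subsets of $[n]$ and $L_I$ is the multiset of $k$ copies of $o$ together with $\{a_i:i\notin I\}$. It therefore suffices to prove, for every fixed $\boldsymbol{a}$, the pointwise inequality $\mathbb{E}_I\bigl[\max^{(k)}L_I\bigr]\ge\frac{C_{n-k}^{k}}{C_n^{k}}\max^{(k)}\{c_i:i\in[n]\}$, which splits into two steps. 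First, for any $k$-subset $I$: because $2k\le n$ there are $n-k\ge k$ indices outside $I$, and for the $k$ of them with the largest $c_i$, the value $c_i$ equals either $a_i$ (present in $L_I$) or $o$ (a copy of which lies in $L_I$'s pool of $k$), so $\max^{(k)}L_I\ge\max^{(k)}\{c_i:i\notin I\}$. Second, letting $T$ be the set of $k$ indices with largest $c_i$, whenever $I\cap T=\emptyset$ we have $T\subseteq[n]\setminus I$, hence $\max^{(k)}\{c_i:i\notin I\}\ge\sum_{i\in T}c_i=\max^{(k)}\{c_i:i\in[n]\}$; since $|T|=k$ and $I$ is a uniform $k$-subset (and $2k\le n$ makes this event possible), $\Pr[I\cap T=\emptyset]=C_{n-k}^{k}/C_n^{k}$. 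Chaining these with the upper bound after taking expectation over $\boldsymbol{a}$ gives $\operatorname{OBJ}(\mathcal{M}^F)\ge\frac{C_{n-k}^{k}}{C_n^{k}}\operatorname{OBJ}(\mathcal{M}^*)$.

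The main obstacle is the first of the two steps above, and it is exactly where the identical-distribution assumption is indispensable: if the $o_i$ were distinct, an index $i\notin I$ whose best form is organic contributes $o_i$ to $c_i$ while only $a_i<o_i$ appears in $L_I$, so one would be forced to additionally require that $I$ \emph{contain} all such indices, which collapses the success probability from $C_{n-k}^{k}/C_n^{k}$ down to as little as $1/C_n^{k}$. Treating the organic contributions as a single interchangeable pool of $k$ copies of $o$ inside every $L_I$ decouples the "cover the organic winners'' requirement from the "avoid the ad winners'' requirement, leaving only the latter, whose probability under a uniform $I$ is precisely the claimed ratio; the hypothesis $2k\le n$ is what keeps this probability positive and ensures enough indices lie outside $I$. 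A minor technical point worth stating explicitly is the behaviour of $\max^{(k)}$ when some entries are negative or fewer than $k$ entries are positive — this only ever helps the inequalities, since all witnessing terms are at least $o\ge 0$.
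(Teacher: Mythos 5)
Your argument is correct only under an additional assumption that the theorem does not grant you: that all organic contributions are equal ($o_i\equiv o$). The hypothesis ``the bid distributions are identical'' makes the $a_i$ i.i.d.\ (it constrains the $F_i$, hence the induced $G_i$), but the organic contribution $o_i=\gamma_i^O$ is a deterministic, item-specific parameter that has nothing to do with the bid distribution; the paper's own proof begins ``without loss of generality $o_1\geq o_2\geq\cdots\geq o_n$'' and carries these distinct thresholds all the way through its order-statistic computations. Your key Step 1 --- that $\max^{(k)}L_I\geq\max^{(k)}\{c_i:i\notin I\}$ for \emph{every} $k$-subset $I$ --- genuinely requires the $o_i$ to form an interchangeable pool: with distinct $o_i$, an index $j\notin I$ whose $c_j=o_j$ contributes $o_j$ on the right but only $a_j<o_j$ (and no copy of $o_j$) to $L_I$, and the inequality fails (e.g.\ $n=2$, $k=1$, $I=\{1\}$, $o_2$ huge). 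Your closing paragraph concedes exactly this and observes that the success probability would then collapse toward $1/C_n^k$ --- but that distinct-$o_i$ case is precisely what the theorem must cover, so the gap is central rather than cosmetic.

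For contrast, the paper avoids symmetrizing over $I$ altogether: it fixes the single choice $I=\{1,\dots,k\}$ (the top-$k$ organic items), upper-bounds the optimum by $\mathbb{E}[\max^{(k)}\{o_1,\dots,o_k,a_1,\dots,a_n\}]$, writes both this bound and $\mathbb{E}[\max^{(k)}\{o_1,\dots,o_k,a_{k+1},\dots,a_n\}]$ as sums of integrated tail probabilities of order statistics of $n$ (resp.\ $n-k$) i.i.d.\ copies of $G$ interleaved with the fixed thresholds $o_1,\dots,o_k$, and compares the resulting expressions term by term via the combinatorial inequality of Lemma~\ref{lem:c_(n-k)/c_n}. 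There the identical-distribution assumption is used only to make the tail probabilities computable in closed form, not to equalize the $o_i$. Your averaging-over-$I$ idea is elegant and does give a short proof of the special case $\gamma_i^O\equiv\gamma^O$, but to recover the full statement you would need either to adopt the paper's term-by-term route or to find a replacement for Step~1 that survives heterogeneous $o_i$.
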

\begin{proof}[Proof Sketch]
Without loss of generality, we assume that $o_1 \geq o_2$ $ \geq \cdots \geq o_n$ in this proof. Before giving the proof of this theorem, we first introduce a mathematical lemma to aid us analyze the approximate ratio. 
\begin{lemma}\label{lem:c_(n-k)/c_n}
When $1 \leq l \leq k$, we have
$$
\frac{1-\sum\limits_{i=1}^lC_{n-k}^{l-i}x^{(n-k)-(l-i)}(1-x)^{(l-i)}}{1-\sum\limits_{i=1}^lC_{n}^{l-i}x^{n-(l-i)}(1-x)^{(l-i)}} \geq \frac{C_{n-k}^{l}}{C_{n}^{l}}.
$$
\end{lemma}
Firstly, observe the objective of the merging problem (\ref{fml:optimization 2}). We know that for the optimal mechanism $\mathcal{M}^*$, its objective is always bounded by ${\max}^{(k)}\{o_1\cdots o_k, a_1\cdots a_n\}$. Denote by \(\lambda_i\) the \(i\)-th largest number in the set $\{o_1\cdots o_k,a_1\cdots a_n\}$. We have
\begin{align}
    \operatorname{OBJ}(\mathcal{M}^{*})  \leq& \underset{\boldsymbol{a}\sim G}{\mathbb{E}}\left[{\max}^{(k)}\{o_{1} \cdots o_{k}, a_{1} \cdots a_{n}\}\right] \nonumber\\
    =&\sum\limits_{i=1}^{k}\int_{0}^{\infty}\underset{\boldsymbol{a}\sim G}{\mathbb{P}}\left( \lambda_i\geq t\right)dt  \label{ieq: fix k from n upper on optimal}.
\end{align}

On the other hand, we want to find a lower bound on the objective of $\mathcal{M}^F$. Since $\mathcal{M}^F$ selects the best mechanism among $\mathcal{M}_I^F$, herein, we focus on a specific $\mathcal{M}_I^F$ with $I=\{1,2,\cdots, k\}$, whose objective is ${\max}^{(k)}\{o_1,$ $\cdots o_k,a_{k+1}\cdots a_n\}$. Denote by \(\theta_i\) the \(i\)-th largest number in the set $\{o_1\cdots o_k,a_{k+1}\cdots a_n\}$. We have
\begin{align}
     \operatorname{OBJ}(\mathcal{M}^{F})  \geq &\underset{a\sim G}{\mathbb{E}}\left[{\max}^{(k)}\{o_{1} \cdots o_{k}, a_{k+1} \cdots a_{n}\}\right] \notag\\
     =&\sum\limits_{i=1}^{k}\int_{0}^{\infty}\underset{\boldsymbol{a}\sim G}{\mathbb{P}}( \theta_i\geq t)dt.\label{ieq: fix k from n lower on fix}
\end{align}
With a calculation on the probability of order statistics, and by Lemma \ref{lem:c_(n-k)/c_n} and the following inequality 
\[
    1 = \frac{C_{n-k}^{0}}{C_{n}^{0}}>\cdots>\frac{C_{n-k}^{l}}{C_{n}^{l}}>\frac{C_{n-k}^{l+1}}{C_{n}^{l+1}}>\cdots>\frac{C_{n-k}^{k}}{C_{n}^{k}},
\]
we can replace the simplified terms in inequality (\ref{ieq: fix k from n lower on fix}) by the corresponding  simplified terms in inequality (\ref{ieq: fix k from n upper on optimal}) with ${C_{n-k}^{k}}/{C_{n}^{k}}$ times and we get
\begin{align*}
    \operatorname{OBJ}(\mathcal{M}^{F})  \geq 
    \frac{C_{n-k}^{k}}{C_{n}^{k}}\operatorname{OBJ}(\mathcal{M}^{*}).
\end{align*}


In conclusion, we show that $\mathcal{M}^{F}$ is ${C_{n-k}^{k}}/{C_{n}^{k}}$-approximate relative to the optimal mechanism $\mathcal{M}^{*}$.
\end{proof}
Based on Theorem \ref{thm: fix k from n}, if \(n \geq k^2/\epsilon + k\) for any given \(0 < \epsilon < 1\)—\ that is, if the number of candidate items \(n\) is sufficiently large compared to the slots \(k\)—we can further show that \(\mathcal{M}^F\) is a near-optimal mechanism.


\begin{theorem}
    For content merging problem with $n$ items and $k$ slots, given any $0<\epsilon<1$, when $n\geq k^2/\epsilon +k$ and the bid distributions are identical, $\mathcal{M}^{F}$ is a nearly-optimal mechanism with approximate ratio $1-\epsilon$.
\end{theorem}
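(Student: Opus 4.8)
The plan is to invoke Theorem~\ref{thm: fix k from n} directly and reduce the statement to a purely arithmetic inequality about binomial coefficients. First I would check that the hypotheses of Theorem~\ref{thm: fix k from n} are met: since $0<\epsilon<1$ we have $k^2/\epsilon \geq k^2 \geq k$, so $n \geq k^2/\epsilon + k \geq 2k$, and the bid distributions are assumed identical; hence Theorem~\ref{thm: fix k from n} applies and tells us that $\mathcal{M}^F$ is $C_{n-k}^{k}/C_{n}^{k}$-approximate. It therefore suffices to show that $n \geq k^2/\epsilon + k$ implies $C_{n-k}^{k}/C_{n}^{k} \geq 1-\epsilon$.

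To bound the ratio I would rewrite it as a finite product,
\[
    \frac{C_{n-k}^{k}}{C_{n}^{k}} = \prod_{j=0}^{k-1} \frac{n-k-j}{n-j} = \prod_{j=0}^{k-1}\left(1 - \frac{k}{n-j}\right).
\]
Each factor is increasing in $n-j$, so the smallest one corresponds to $j = k-1$, and every factor is at least $1 - \frac{k}{n-k+1} \geq 1 - \frac{k}{n-k}$. Consequently $C_{n-k}^{k}/C_{n}^{k} \geq \bigl(1 - \tfrac{k}{n-k}\bigr)^{k}$. Applying Bernoulli's inequality $(1-x)^{k} \geq 1 - kx$ with $x = \tfrac{k}{n-k} \in [0,1]$ (valid since $n \geq 2k$) gives $C_{n-k}^{k}/C_{n}^{k} \geq 1 - \tfrac{k^2}{n-k}$. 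Finally, $n \geq k^2/\epsilon + k$ is equivalent to $n-k \geq k^2/\epsilon$, i.e. $\tfrac{k^2}{n-k} \leq \epsilon$, which yields $C_{n-k}^{k}/C_{n}^{k} \geq 1-\epsilon$ and closes the argument.

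There is no genuinely hard step here; the proof is a short corollary of Theorem~\ref{thm: fix k from n} plus an elementary product/Bernoulli estimate. The only points that require a little care are verifying that $n \geq k^2/\epsilon + k$ indeed entails the $2k \le n$ hypothesis of Theorem~\ref{thm: fix k from n}, and confirming that $\tfrac{k}{n-k} \le 1$ so that Bernoulli's inequality may be invoked — both of which follow immediately from $0<\epsilon<1$. Everything else is routine manipulation that can be stated in a few lines.
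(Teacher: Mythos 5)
Your proposal is correct and follows essentially the same route as the paper: both bound $C_{n-k}^{k}/C_{n}^{k}$ below by $\left(1-\tfrac{k}{n-k}\right)^{k}$ via the product expansion, apply Bernoulli's inequality to get $1-\tfrac{k^2}{n-k}$, and translate $\tfrac{k^2}{n-k}\leq\epsilon$ into the hypothesis $n\geq k^2/\epsilon+k$. Your explicit checks that $n\geq 2k$ and that $\tfrac{k}{n-k}\leq 1$ are small additions of care that the paper leaves implicit.
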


\begin{proof}
We directly analyze the approximate ratio of the mechanism $\mathcal{M}^{F}$, presented in Theorem \ref{thm: fix k from n}, i.e., 
    \begin{align*}
        \frac{C_{n-k}^k}{C_n^k} 
        &=\frac{n-k}{n}\cdot\frac{n-k-1}{n-1}\cdots\frac{n-2k+1}{n-k+1}\\
        &>\left(\frac{n-2k}{n-k}\right)^k = \left(1-\frac{k}{n-k}\right)^k.
    \end{align*}

    Applying the general inequality $(1+x)^a\geq 1+ax$ where $a>1$, we obtain: 
\[
    \left(1 - \frac{k}{n-k}\right)^k \geq 1 - \frac{k^2}{n-k}.
\]
Therefore, to ensure that the approximate ratio is greater than $1 - \epsilon$, it is sufficient to require:
\begin{align*}
    \frac{k^2}{n-k} \leq \epsilon 
    \implies n \geq \frac{k^2}{\epsilon} + k. 
\end{align*}
This completes the proof.
\end{proof}

\section{Generalized Change Mechanism}\label{sec:change}
In the last section, we introduce the G-FIX mechanism, which restricts each owner to participate in the mechanism with a specific form. However, it lacks flexibility and results in a loss of objective value. In this section, we relax this constraint by incorporating an order factor, leading to the establishment of the generalized change (G-CHANGE) mechanism.



We begin by introducing the generalized change-$I$ (G-CHANGE-$I$) mechanism. The G-CHANGE-$I$ mechanism defines an ordered set \(I\) in which items can be displayed either as ads or in their organic form, while all other items are restricted to ads only. The final display forms are determined by a recursive rule, which we will detail later. Compared to the G-FIX mechanism, the G-CHANGE mechanism offers more flexibility by allowing some items to have alternative forms and introducing a criterion to guide the decision-making process. The detailed definition is as follows.



\begin{definition}[Generalized Change-$I$ Mechanism]\label{def:change_weight}
Let $I = \{i_1, i_2, \cdots, i_k\}$ be a set of ordered items, where $k \leq n$. 
Given the bid profile $\boldsymbol{a}$, define $w_{i_t}^{\boldsymbol{a}}, t\in [k]$, recursively as follows:
\begin{small}
$$
\begin{cases}
    w_{i_1}^{\boldsymbol{a}} = \underset{{a_{i_1}}}{\mathbb{E}}[\max^{(k)}\{a_1, \cdots, a_n \}] - R_1&t = 1, \\
    w_{i_t}^{\boldsymbol{a}} = \underset{{a_{i_t}}}{\mathbb{E}}[\max \{o_{t-1} + R_{t-1}, w_{i_{t-1}}^{\boldsymbol{a}} + R_{t-1}\}] - R_t &t > 1,
\end{cases}
$$
\end{small}
where
\[
    R_t = \sum_{j=1}^{t-1} o_{i_j} +  {\max}^{(k-t)}\left\{a_i \mid i \notin \{i_1, \cdots, i_t\}\right\}.
\]
Define a threshold $s^*$ as
\[
    s^* := \max \left\{ s\in[k] \mid \forall t > s, \, o_{i_t} < w_{i_t}^{\boldsymbol{a}} \text{ and } o_{i_s} \geq w_{i_s}^{\boldsymbol{a}} \right\}.
\]
Having the above symbols, generalized change-$I$ mechanism is given by
\begin{small}
\[
\begin{aligned}
    x_i(\boldsymbol{a}) &= 
    \begin{cases} 
    1 & \text{if } i \notin \{i_1, \cdots, i_{s^*}\} \\
    & \text{and } a_i >{\max}^{(k-s^*+1)} \{ a_j \mid j \notin \{i_1, \cdots, i_{s^*}\} \}\\
    &\quad\quad - {\max}^{(k-s^*)} \{ a_j \mid j \notin \{i_1, \cdots, i_{s^*}\} \} ,\\ 
    0 & \text{otherwise} .
    \end{cases} \\
    y_i(\boldsymbol{a}) &= 
    \begin{cases} 
    1 & \text{if } i \in \{i_1, \cdots, i_{s^*}\} ,\\ 
    0 & \text{otherwise} .
    \end{cases}
    \end{aligned}
\]
\end{small}
\end{definition}
Herein, we elaborate on the decision rule in the G-CHANGE-$I$ mechanism. For the items in $I$, we evaluate them in reverse order. Take item $i_t$ as an example. $R_t$ represents the total contribution when the first $t-1$ items in $I$ are displayed in organic form, and the remaining $k-t$ items with the highest $a_i$ values are selected in ad form. $w_{i_t}$ can be interpreted as the expected marginal contribution of item $i_t$'s ad form, relative to the local optimal value. Thus, the display form of item \(i_t\) is determined by comparing \(w_{i_t}\) with \(o_{i_t}\) in reverse order, identifying the first item for which \(o_{i_t} \geq w_{i_t}\), denoted by \(s^*\). Finally, the mechanism displays the first \(s^*\) items in \(I\) in organic form and selects \(k-s^*\) ads from the remaining items based on their \(a_i\) values.

It is not hard to find that the decision rule is independent of the items' bids. Additionally, the reverse decision also guarantees that flexible items in $I$ cannot affect bilaterally. These two features indicates that the G-CHANGE-$I$ mechanism satisfies the IC condition. Furthermore, the objective function of the mechanism $\mathcal M_{I}^{C}$ can be expressed as:
\begin{equation*}\label{eqn:obj change I}
    \text{OBJ}(\mathcal M_{I}^{C}) = \underset{\boldsymbol{a} \sim G}{\mathbb{E}} \left[ \max \left\{ \sum_{j=1}^{k} o_{i_j}, w_{i_k}^{\boldsymbol{a}} + \sum_{j=1}^{k-1} o_{i_j} \right\} \right].
\end{equation*}
We can evaluate all possible ordered subset \(I\) and select the one with the highest objective value as the generalized change mechanism:

\begin{definition}[Generalized Change Mechanism]\label{def:change mechanism}
The generalized change mechanism, denoted by $\mathcal M^{C}$, executes the mechanism with the highest objective value among $\mathcal M_{I}^{C}$, where $I$ is any ordered subset (permutation) satisfying $I\subseteq [n]$ and $|I|=k$.
\end{definition}
Next, we examine the performance of the G-CHANGE mechanism in the simplest setting.

\begin{theorem}\label{thm: change 2 from 3}
For content merging problem with $3$ items and $2$ slots, $\mathcal{M}^{C}$ is optimal.
\end{theorem}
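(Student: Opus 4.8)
The plan is to leverage the structure imposed by Theorem \ref{thm:partial order}. That theorem tells us the optimal $2$-out-of-$3$ mechanism admits a partial order $(i_1, i_2, i_3)$ on the organic forms with $y_{i_1}^* \geq y_{i_2}^* \geq y_{i_3}^*$, where $y_{i_3}^* = 0$ always, $y_{i_2}^* = y_{i_2}^*(a_{i_3})$ depends only on $a_{i_3}$, and $y_{i_1}^* = y_{i_1}^*(a_{i_2}, a_{i_3})$ depends only on $a_{i_2}$ and $a_{i_3}$. So in the optimal mechanism item $i_3$ is never organic; the only freedom is in how $y_{i_2}^*$ reacts to $a_{i_3}$ and how $y_{i_1}^*$ reacts to $(a_{i_2}, a_{i_3})$. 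First I would argue that, conditioned on these dependency restrictions plus form stability and allocation monotonicity, the optimal rule is a nested threshold rule: $y_{i_2}^* = 1$ exactly when $a_{i_3}$ is below some threshold, and given $y_{i_2}^* = 1$, $y_{i_1}^* = 1$ exactly when $a_{i_2}$ is below a further threshold (these thresholds being computed by the natural marginal-contribution comparisons, since $\boldsymbol a_{-i}$ is fixed when we optimize each indicator pointwise).

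Next I would show the G-CHANGE mechanism, when it runs $\mathcal M_I^C$ for the ordered set $I = (i_1, i_2)$ and fills the remaining ad slot from $\{i_3\}$ (and from $i_2$ or $i_1$ when they are not organic), reproduces exactly this nested threshold structure. Concretely, with $k = 2$ and $|I| = 2$: the recursion defines $w_{i_2}^{\boldsymbol a}$ as the expected value (over $a_{i_2}$) of $\max\{o_{i_3} + R_1,\, w_{i_1}^{\boldsymbol a} + R_1\}$ offsets—wait, more carefully, $w_{i_2}$ is evaluated first in reverse order, using $R_2 = o_{i_1} + \max^{(0)}\{\cdot\} = o_{i_1}$ and $R_1 = \max^{(1)}\{a_j : j \notin \{i_1\}\}$; then $s^* \in \{0,1,2\}$ is chosen as the largest prefix of $I$ displayed organically. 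I would verify that the threshold for $s^* \geq 1$ (i.e. whether $i_1$ is organic) is precisely the condition $o_{i_1} \geq w_{i_1}^{\boldsymbol a}$ that the optimal mechanism uses, and likewise for $s^* \geq 2$. Since G-CHANGE additionally optimizes over all ordered subsets $I$, it in particular considers $I = (i_1, i_2)$ matching the optimal partial order, so $\mathrm{OBJ}(\mathcal M^C) \geq \mathrm{OBJ}(\mathcal M_{(i_1,i_2)}^C) = \mathrm{OBJ}(\mathcal M^*)$; feasibility/IC/IR of $\mathcal M^C$ give the reverse inequality, yielding optimality.

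The main obstacle I anticipate is the bookkeeping in matching the recursively-defined weights $w_{i_t}^{\boldsymbol a}$ and the threshold $s^*$ to the pointwise-optimal thresholds of $\mathcal M^*$, especially handling the expectation operators in the definition of $w_{i_t}$ correctly: the optimal mechanism makes $y_{i_1}^*$ depend on the realized $a_{i_2}$, whereas $w_{i_2}$ integrates over $a_{i_2}$, so I must check that taking the expectation before comparing to $o_{i_2}$ (the reverse-order evaluation) still yields the same overall allocation of slots in expectation—this is exactly where form stability for $i_1$ (its allocation cannot depend on its own bid $a_{i_1}$, only on $a_{i_2}, a_{i_3}$) and the regularity of the induced distributions $G_i$ must be invoked to justify that the "expected marginal contribution" comparison is the right surrogate. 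A secondary subtlety is the degenerate cases where the partial order from Theorem \ref{thm:partial order} is not a total order (ties, or $y_{i_1}^* \equiv y_{i_2}^*$); I would handle these by noting G-CHANGE's maximization over $I$ covers every candidate ordering, so any optimal-consistent order is available to it.
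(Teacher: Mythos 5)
Your proposal follows essentially the same route as the paper's proof: invoke Theorem \ref{thm:partial order} to get the partial order $(i_1,i_2,i_3)$, express $\mathrm{OBJ}(\mathcal M^*)$ as the nested max-of-expectation that the order plus form stability force, observe this coincides with $\mathrm{OBJ}(\mathcal M^C_{(i_1,i_2)})$, and conclude via $\mathrm{OBJ}(\mathcal M^C)\geq \mathrm{OBJ}(\mathcal M^C_{(i_1,i_2)})=\mathrm{OBJ}(\mathcal M^*)$ together with the reverse inequality from IC/IR/feasibility. The bookkeeping you flag resolves exactly as you anticipate: for $k=2$ the recursion gives $w_{i_2}^{\boldsymbol a}+o_{i_1}=\mathbb{E}_{a_{i_2}}\left[\max\left\{o_{i_1}+\max\{a_{i_2},a_{i_3}\},\ \mathbb{E}_{a_{i_1}}\left[{\max}^{(2)}\{a_{i_1},a_{i_2},a_{i_3}\}\right]\right\}\right]$, which matches the inner branch of the optimal objective term for term.
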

\begin{proof}
By Theorem $\ref{thm:partial order}$, we know that 
there exists a partial order $(i_1, i_2, i_3)$, which is a permutation of $\{1, 2, 3\}$, such that $y_{i_1}^* \geq y_{i_2}^* \geq y_{i_3}^* = 0$, and $ y_{i_1}^* = y_{i_1}^*(a_{i_2}, a_{i_3}), y_{i_2}^* = y_{i_2}^*(a_{i_3}), y_{i_3}^* = 0$. Therefore, the objective value of the optimal mechanism $\mathcal M^*$ is given by:
\begin{flalign*}
    &\text{OBJ}(\mathcal M^*) = \underset{\boldsymbol{a} \sim G}{\mathbb{E}}\left[\sum_{i=1}^3 (x_i^*(a) a_i + y_i^*(a) o_i)\right] &
\end{flalign*}
\begin{align*}
    &=\underset{a_{i_3}}{\mathbb{E}}\Bigg[\max \Big\{o_{i_1} + o_{i_2},\underset{a_{i_2}}{\mathbb{E}}\bigg[\max \big\{ o_{i_1} + \max\{a_{i_2},a_{i_3}\}, \\
    &\underset{a_{i_1}}{\mathbb{E}}[{\max}^{(2)} \{a_{i_1}, a_{i_2}, a_{i_3}\}]\big\}\bigg]\Big\}\Bigg].
\end{align*}
Observing from the objective of G-CHANGE-$I$, we can find that if we let $I=\{i_1,i_2\}$, it holds that 
$\text{OBJ}(\mathcal M^*) = \text{OBJ}(\mathcal M_{\{i_1, i_2\}}^{C})$. On the other hand, we know that  $ \text{OBJ}(\mathcal M_{\{i_1, i_2\}}^{C}) \leq \text{OBJ}(\mathcal M^C)$. Thus, the G-CHANGE mechanism $\mathcal M^C$ is optimal in the simplest setting.
\end{proof}

Finally, in the general case with non-identical distributions, the G-CHANGE mechanism still achieves guaranteed performance relative to the optimal mechanism.

\begin{theorem}
For content merging problem with $n$ items and $k$ slots, $\mathcal{M}^{C}$ is $1/2$-approximate relative to the optimal mechanism $\mathcal{M}^{*}$.
\end{theorem}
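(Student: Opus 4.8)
The plan is to sandwich the optimum between the best ``ad\nobreakdash-only'' and best ``organic\nobreakdash-only'' performances. Put $O:={\max}^{(k)}\{o_i:i\in[n]\}$ (the value of the best pure\nobreakdash-organic allocation; a constant, since the $o_i$ are fixed) and $\tilde A:=\mathbb E_{\boldsymbol a\sim G}\!\left[{\max}^{(k)}\{\max\{a_i,0\}:i\in[n]\}\right]$ (an upper bound on the best pure\nobreakdash-ad value). I would establish (i) $\operatorname{OBJ}(\mathcal M^*)\le O+\tilde A$ and (ii) $\operatorname{OBJ}(\mathcal M^C)\ge\max\{O,\tilde A\}$; since $\max\{O,\tilde A\}\ge\frac{1}{2}(O+\tilde A)$, these give $\operatorname{OBJ}(\mathcal M^C)\ge\frac{1}{2}\operatorname{OBJ}(\mathcal M^*)$.

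For (i), split $\operatorname{OBJ}(\mathcal M^*)=\operatorname{Rev}(\mathcal M^*)+\operatorname{UE}(\mathcal M^*)$. Because $\gamma_i^A\le\gamma_i^O=o_i$ and at most $k$ items are ever displayed, $\operatorname{UE}(\mathcal M^*)=\mathbb E\!\left[\sum_i\bigl(x_i^*\gamma_i^A+y_i^*\gamma_i^O\bigr)\right]\le\mathbb E\!\left[\sum_i(x_i^*+y_i^*)\,o_i\right]\le O$. For the revenue, Lemma~\ref{lem:revenue} gives $\operatorname{Rev}(\mathcal M^*)=\mathbb E\!\left[\sum_i\phi_i(b_i)x_i^*(\boldsymbol a)\alpha_i^A\right]=\mathbb E\!\left[\sum_i(a_i-\gamma_i^A)x_i^*(\boldsymbol a)\right]$; dropping the ads with $a_i-\gamma_i^A<0$, using feasibility, and then $\gamma_i^A\ge0$, one gets $\operatorname{Rev}(\mathcal M^*)\le\mathbb E\!\left[{\max}^{(k)}\{\max\{a_i-\gamma_i^A,0\}\}\right]\le\tilde A$. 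Summing the two bounds proves (i).

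For (ii), I would use the closed form $\operatorname{OBJ}(\mathcal M_I^C)=\mathbb E\!\left[\max\bigl\{\sum_{j=1}^k o_{i_j},\,w^{\boldsymbol a}_{i_k}+\sum_{j=1}^{k-1}o_{i_j}\bigr\}\right]$ recorded after Definition~\ref{def:change_weight}, together with $\operatorname{OBJ}(\mathcal M^C)\ge\operatorname{OBJ}(\mathcal M_I^C)$ for every ordered $I$ with $|I|=k$. Writing $V_t:=w^{\boldsymbol a}_{i_t}+R_t$, the recursion collapses to $V_1=\mathbb E_{a_{i_1}}[{\max}^{(k)}\{a_1,\dots,a_n\}]$ and $V_t=\mathbb E_{a_{i_t}}[\max\{o_{i_{t-1}}+R_{t-1},\,V_{t-1}\}]$, so $\operatorname{OBJ}(\mathcal M_I^C)=\mathbb E[\max\{\sum_{j=1}^k o_{i_j},V_k\}]$. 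Taking $I$ to be the $k$ items of largest $o_i$ (in any order) makes $\sum_{j=1}^k o_{i_j}=O$, hence $\operatorname{OBJ}(\mathcal M^C)\ge O$. For $\operatorname{OBJ}(\mathcal M^C)\ge\tilde A$ I would choose $I$ to be a suitably ordered set of $k$ items --- informally, the ones weakest as ads --- and prove $\mathbb E[V_k]\ge\tilde A$: the recursion is monotone ($V_t\ge\mathbb E_{a_{i_t}}[V_{t-1}]$, hence $\mathbb E[V_t]\ge\mathbb E[V_{t-1}]$), and the inner $\max$ against the ``local'' value $o_{i_{t-1}}+R_{t-1}=\sum_{j<t}o_{i_j}+{\max}^{(k-t+1)}\{a_j:j\notin\{i_1,\dots,i_{t-1}\}\}$ keeps the running value from dropping below the best ad\nobreakdash-only objective obtainable from the items still outside $I$ --- once the ad\nobreakdash-weak items sit inside $I$ they no longer dilute these ${\max}^{(\cdot)}$ terms, and negative $a_i$ are discarded by the same $\max$. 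A backward induction on $t$ along the evaluation order turns this into $\mathbb E[V_k]\ge\tilde A$. Combining, $\operatorname{OBJ}(\mathcal M^C)\ge\max\{O,\tilde A\}\ge\frac{1}{2}(O+\tilde A)\ge\frac{1}{2}\operatorname{OBJ}(\mathcal M^*)$.

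The main obstacle is the last bound, $\operatorname{OBJ}(\mathcal M^C)\ge\tilde A$: one must pin down exactly which $k$ items enter $I$ and in which order, control the adaptive recursion so it does not ``lock in'' an unnecessarily small value at an early step, and check that no negative\nobreakdash-virtual\nobreakdash-value ad is ever credited. I expect this to absorb most of the work, the rest being routine. (In the special case $a_i\ge0$, where $\tilde A=\mathbb E[{\max}^{(k)}\{a_i\}]$, this step is immediate: the monotonicity $\mathbb E[V_k]\ge\mathbb E[V_1]=\mathbb E[{\max}^{(k)}\{a_i\}]$ already yields (ii) with the \emph{same} $I$ used for the organic bound, so no delicate choice of $I$ is needed.)
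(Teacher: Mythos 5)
Your skeleton is exactly the paper's: upper-bound $\operatorname{OBJ}(\mathcal{M}^*)$ by (top-$k$ organic sum) plus (expected top-$k$ ad sum), lower-bound $\operatorname{OBJ}(\mathcal{M}^C)$ by the maximum of those two quantities via a single $\mathcal{M}^C_I$, and finish with $\max\{x,y\}\geq\tfrac12(x+y)$. The one divergence is your truncation $\max\{a_i,0\}$. The paper does not truncate: it takes $I=\{1,\dots,k\}$ (the $k$ largest $o_i$) and reads both lower bounds off the closed form --- $\sum_{j\leq k}o_{i_j}$ directly, and $\mathbb{E}\left[{\max}^{(k)}\{a_1,\dots,a_n\}\right]$ from precisely the monotonicity $\mathbb{E}[V_k]\geq\mathbb{E}[V_1]$ that you relegate to your closing parenthesis. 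That monotonicity needs no sign assumption on the $a_i$, so your ``special case'' is really the general case for the \emph{untruncated} quantity, and the ``main obstacle'' you budget most of the work for (a second, ad-weak $I$ plus a backward induction reaching $\tilde A$) is not needed to reproduce the paper's argument. What your extra care does buy is a repair of a gap the paper glosses over: since $a_i=\phi_i(b_i)\alpha_i^A+\gamma_i^A$ can be negative, the paper's step ${\max}^{(k)}\{o_1,\dots,o_k,a_1,\dots,a_n\}\leq\sum_{i\leq k}o_i+{\max}^{(k)}\{a_1,\dots,a_n\}$ can fail (padding the few positive $a$'s that make the top $k$ with negative ones shrinks the right-hand side), whereas it holds after truncation. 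The price is that you must then prove the matching lower bound $\operatorname{OBJ}(\mathcal{M}^C)\geq\tilde A$ rather than $\geq\mathbb{E}[{\max}^{(k)}\{a_1,\dots,a_n\}]$, and that step is only sketched in your proposal: pinning down the ordered $I$ and the induction is exactly the part you would still have to write out, so as it stands your version is not yet closed (while the paper's version closes trivially but silently assumes the top-$k$ ad values are effectively nonnegative).
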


\begin{proof}
Without loss of generality, let \( o_1 \geq o_2 \geq \cdots \geq o_n \). Firstly, the objective value of the optimal mechanism is upper bounded by
\begin{align*}
\text{OBJ}(\mathcal M^*) &\leq \underset{\boldsymbol{a}\sim G}{\mathbb{E}}\left[ {\max}^{(k)} \{ o_1 \cdots o_k, a_1 \cdots a_n \} \right] \\
&\leq \sum_{i=1}^k o_i + \underset{\boldsymbol{a} \sim G}{\mathbb{E}}\left[ {\max}^{(k)} \{ a_1, \cdots, a_n \}\right].
\end{align*}
Let $I=\{1,\cdots,k\}$ and we can get a low bound of $\mathcal M^C$ 
\begin{align*}
\text{OBJ}(\mathcal M^C) &\geq \text{OBJ}(M_{\{1, \cdots, k\}}^C) \\
&\geq \max \left\{\sum_{i=1}^k o_i , \underset{\boldsymbol{a} \sim G}{\mathbb{E}}\left[{\max}^{(k)} \{ a_1, \cdots, a_n \}\right]\right\}.
\end{align*}
Thus, by triangle inequality, we obtain 
\begin{align*}
\frac{\text{OBJ}(\mathcal M^C)}{\text{OBJ}(\mathcal M^*)} &\geq \frac{\max \left\{\sum\limits_{i=1}^k o_i , \underset{\boldsymbol{a}\sim G}{\mathbb{E}}\left[{\max}^{(k)} \{ a_1, \cdots, a_n \}\right]\right\}}{\sum\limits_{i=1}^k o_i + \underset{\boldsymbol{a}\sim G}{\mathbb{E}}\left[ {\max}^{(k)}\{ a_1, \cdots, a_n \} \right]}\\
& \geq  \frac{1}{2}. 
\end{align*}
\end{proof}

\section{Conclusion}
\label{sec:conclusion}

In this paper, we tackle the merging mechanism design problem in multiple slots, addressing an open question from \cite{LMZWZYXZD24}, where items can be displayed as either ads or in organic form. These different forms impact platform performance outcomes, such as click-through rates and user experience. We propose two straightforward mechanisms that satisfy the incentive compatibility, individual rationality, and feasibility conditions, and analyze their approximation ratio in both simple and general settings. Future research directions include: first, designing the optimal merging mechanisms or improving approximation ratio within this framework; and second, exploring the merging problem in heterogeneous slots, which presents additional challenges.


\clearpage
\newpage
\section*{Acknowledgments}
This work was supported by National Natural Science Foundation of China No. 62472428, the Fundamental Research Funds for the Central Universities, and the Research Funds of Renmin University of China (No. 22XNKJ07, No. 23XNH028), CCF-Huawei Populus Grove Fund (No. CCF-HuaweiLK2023005), and Major Innovation \& Planning Interdisciplinary Platform for the “Double-First Class” Initiative, Renmin University of China, and the Fundamental Research Funds of Shandong University (No.11480061330000).

\bibliography{aaai25}
\clearpage
\appendix 
\appendix

\section*{Appendix}\label{sec:app}

\section{Missing Proofs in Section \ref{sec:fs}}
\label{app:proofs in fs}
\subsection{Proof of Proposition \ref{prop: k from n optimal}}

\begin{proof}
To analyze how changing the bid of a single item affects the objective function, for any \(i \in [n]\), \(a_i\) and \(\boldsymbol{a}_{-i}\), we can decompose the objective function in problem (\ref{fml:optimization 2}) as follows:
\begin{align}
    &\max\limits_{x,y} \quad \operatorname{OBJ}(\mathcal{M}^{x,y}) = \underset{\boldsymbol{a} \sim G}{\mathbb{E}}\left[ \sum\limits_{j=1}^{n}(a_j x_j(\boldsymbol{a}) + o_jy_j(\boldsymbol{a})) \right] \nonumber \\
    &= \underset{\boldsymbol{a}}{\mathbb{E}}\left[\mathbb{I}[\boldsymbol{a}_{-i}] \mathbb{I}[a_i' \leq a_i] \left(\sum\limits_{j=1}^{n} a_j x_j(\boldsymbol{a}) + o_j y_j(\boldsymbol{a}) \right)\right] \tag{$\mathrm{s}_1$} \\
    &+ \underset{\boldsymbol{a}}{\mathbb{E}}\left[\mathbb{I}[\boldsymbol{a}_{-i}] \mathbb{I}[a_i' > a_i] \left(\sum\limits_{j=1}^{n} a_j x_j(\boldsymbol{a}) + o_j y_j(\boldsymbol{a}) \right)\right] \tag{$\mathrm{s}_2$} \\
    &+ \underset{\boldsymbol{a}}{\mathbb{E}}\left[\left(1 - \mathbb{I}[\boldsymbol{a}_{-i}]\right) \left(\sum\limits_{j=1}^{n} a_j x_j(\boldsymbol{a}) + o_j y_j(\boldsymbol{a}) \right)\right] \tag{$\mathrm{s}_3$}.
\end{align}
Here, $\mathrm{s}_1$ represents the scenario where the values of all  $\mathrm{Ad}$s except for item \( i \), are given as \(\boldsymbol{a}_{-i} \) and the value of $\mathrm{Ad}_{i}$ is \( a_i' \leq a_i \). Therefore, we can rewrite this as:
\[
    \mathbb{P}(\boldsymbol{a}_{-i})\underset{a_i'}{\mathbb{E}}\left[\mathbb{I}[a_i' \leq a_i] \sum_{j=1}^n \left(a_j x_j(a_i', \boldsymbol{a}_{-i}) + o_j y_j(a_i',\boldsymbol{a}_{-i})\right)\right].
\]

Similarly, $\mathrm{s}_2$ represents the scenario where the values of all  $\mathrm{Ad}$s except for item \( i \), are given as \(\boldsymbol{a}_{-i} \) and the value of $\mathrm{Ad}_{i}$ is \( a_i' > a_i \). Therefore, we can rewrite this as:
\[
    \mathbb{P}(\boldsymbol{a}_{-i})\underset{a_i'}{\mathbb{E}}\left[\mathbb{I}[a_i '> a_i] \sum_{j=1}^n (a_j x_j(a_i', \boldsymbol{a}_{-i}) + o_j y_j(a_i',\boldsymbol{a}_{-i}))\right].
\]

Part \( \mathrm{s}_3 \)  represents the case where the values of the remaining items are not equal to $\boldsymbol{a}_{-i}$.

\subsubsection{Statement 1:}
Recall Statement 1 in Proposition \ref{prop: k from n optimal}, and we only need to focus on \( \mathrm{s}_1 \) and \( \mathrm{s}_2 \) and optimize the corresponding objective:
\[
    \underset{a_i'}{\mathbb{E}} \left[\sum_{j=1}^n (a_j x_j(a_i', \boldsymbol{a}_{-i}) + o_j y_j(a_i', \boldsymbol{a}_{-i}))\right].
\]
Since $y_i^*(a_i, \boldsymbol{a}_{-i}) = 1$, by form stability from Lemma \ref{lem:original form stability}, we know that for any \( a_i' \),  \( y_i(a_i', \boldsymbol{a}_{-i}) = 1 \) and \( x_i(a_i', \boldsymbol{a}_{-i}) = 0 \). Therefore, the equation above can be rewritten as: 
\[
    o_i + \underset{a_i'}{\mathbb{E}} \left[\sum_{j \ne i} (a_j x_j(a_i', \boldsymbol{a}_{-i}) + o_j y_j(a_i', \boldsymbol{a}_{-i}))\right].
\]
Since \( a_j \) and \( o_j \) (\( j \ne i \)) remain unchanged in this scenario, the optimal mechanism's allocation for the remaining items also remains the same. That is, for any \( j \) and any \( a_i' \), we have \( y_j^*(a_i', \boldsymbol{a}_{-i}) = y_j^*(a_i, \boldsymbol{a}_{-i}) \) and \( x_j^*(a_i', \boldsymbol{a}_{-i}) = x_j^*(a_i, \boldsymbol{a}_{-i}) \).

\subsubsection{Statement 2:}
To demonstrate Statement 2 in the proposition, we focus on part $\mathrm{s}_1$ and optimize the following objective:
\[
    \underset{a_i'}{\mathbb{E}} \left[\mathbb{I}[a_i' \leq a_i] \left( \sum_{j=1}^n (a_j x_j(a_i', \boldsymbol{a}_{-i}) + o_j y_j(a_i', \boldsymbol{a}_{-i})\right)
    \right].
\]
By form stability from Lemma \ref{lem:original form stability} and allocation monotonicity from Lemma \ref{lem:myerson}, we know that for any $a_i'\leq a_i$, we have $y_i(a_i',\boldsymbol{a}_{-i}) = y_i(a_i, \boldsymbol{a}_{-i})$  and $ x_i(a_i', \boldsymbol{a}_{-i}) = 0$. Therefore, the equation above can be rewritten as:
\begin{align*}
    &o_i y_i(a_i', \boldsymbol{a}_{-i}) +\\
    &\underset{a_i'}{\mathbb{E}}\left[\mathbb{I}[a_i' < a_i] \left(\sum_{j \ne i} (a_j x_j(a_i', \boldsymbol{a}_{-i}) + o_j y_j(a_i', \boldsymbol{a}_{-i})\right)\right].
\end{align*}

Since $a_j$ and $o_j ( j \ne i )$ remain unchanged in this scenario, the optimal mechanism's allocation for the remaining items does not change. That is, for any $j$ and any $a_i'$, we have $ y_j^*(a_i', \boldsymbol{a}_{-i}) = y_j^*(a_i, \boldsymbol{a}_{-i})$ and $x_j^*(a_i', \boldsymbol{a}_{-i}) = x_j^*(a_i, \boldsymbol{a}_{-i}) $.

\subsubsection{Statement 3:}
Similar to the argument in Statement 2, we concentrate on part $s_2$ and optimize the objective 
\begin{align*}
    &\underset{a_i'}{\mathbb{E}}\left[\mathbb{I}\left[a_i' > a_i\right]a_i'\right] + \\
    &\underset{a_i'}{\mathbb{E}}\left[
        \mathbb{I}[a_i'> a_i] \left(\sum_{j \ne i} (a_j x_j(a_i', \boldsymbol{a}_{-i}) + o_j y_j(a_i', \boldsymbol{a}_{-i})\right)
    \right].
\end{align*}
Due to form stability, allocation monotonicity, and the optimality, we conclude that for any \( j \) and any \( a_i' >a_i\), it holds that \( y_j^*(a_i', \boldsymbol{a}_{-i}) = y_j^*(a_i, \boldsymbol{a}_{-i}) \) and \( x_j^*(a_i', \boldsymbol{a}_{-i}) = x_j^*(a_i, \boldsymbol{a}_{-i}) \).

In summary, we have shown this proposition.
\end{proof}

\subsection{Proof of Corollary \ref{cor:optimal form stability}}
\begin{proof}
    In an optimal mechanism that satisfies IC, IR, and feasibility, if the display form of item $i$ remains unchanged, i.e., \(X_i^*(a_i', \boldsymbol{a}_{-i}) = X_i^*(a_i, \boldsymbol{a}_{-i})\), this corresponds precisely to one of the three cases outlined in Proposition \ref{prop: k from n optimal}. Consequently, Proposition \ref{prop: k from n optimal} directly implies that the display forms of the other items also remain unchanged. 
\end{proof}

\subsection{Proof of Lemma \ref{lem:2 from 3 fs}}
\begin{proof}

    Without loss of generality, let \(i=1\) and \(j=2\). Suppose there exist \(\boldsymbol{a} = (a_1, a_2, a_3)\) and \(\boldsymbol{a'} = (a_1', a_2', a_3')\) such that \(y_1^*(a_1, a_2, a_3) = 1\), \(y_2^*(a_1, a_2, a_3) = 0\), but \(y_1^*(a_1', a_2', a_3') = 0\), \(y_2^*(a_1', a_2', a_3') = 1\).
    
    According to Statement 1 in Proposition \ref{prop: k from n optimal}, which states that when \(y_i^* = 1\), a change in \(a_i\) alone does not affect the optimal allocation of all items, we can derive that changing $a_1$ to $a'_1$ in profile $\boldsymbol{a}$ leads to \(y_1^*(a_1', a_2, a_3) = 1\) and  \(y_2^*(a_1', a_2, a_3) = 0\). On the other hand,  but changing $a'_2$ to $a_2$ in profile $\boldsymbol{a}'$ leads to \(y_1^*(a_1', a_2, a_3') = 0\) and  \(y_2^*(a_1', a_2, a_3') = 1\).
    
    This implies that changing the bid of item 3 alone causes the allocations of items 1 and 2 to change. According to Corollary \ref{cor:optimal form stability}, this change also indicates that the allocation form of item 3 must itself change. 
    
    Without loss of generality, let \(a_3' > a_3\). Since there are only two slots, we observe that before and after the change, the corresponding allocations must be \(y_1^*(a_1', a_2, a_3) = 1\), \(x_2^*(a_1', a_2, a_3) = 1\), $x_3^*(a_1', a_2, a_3) = 0$ and \(y_2^*(a_1', a_2, a_3') = 1\), \(x_3^*(a_1', a_2, a_3') = 1\).
    
    Since item 2 is displayed in the ad form under the profile  \((a_1', a_2, a_3)\), we know 
    \[
        \underset{a_2}{\mathbb{E}}[{\max}^{(2)}\{o_1, a_2, a_3\}] > {\max}^{(2)}\{o_1, o_2, a_3\}.
    \]
    Because $\text{Org}_1$ is ultimately selected, i.e., \(o_1 > a_3\), it follows that 
    \[
        o_1 + \underset{a_2}{\mathbb{E}}\left[\max\{a_2, a_3\}\right] > o_1 + \max\{o_2, a_3\},
    \]
    and consequently 
    \[
        \underset{a_2}{\mathbb{E}}\left[\max\{a_2, a_3\}\right] > \max\{o_2, a_3\} \ge o_2.
    \]
    
    On the other hand, since item 2 is displayed in the organic form under the profile  \((a_1', a_2, a_3')\), we know 
    \begin{align*}
            o_2 + a_3' &> \underset{a_2}{\mathbb{E}}[\sum_{i=1}^3 x_i^*(\boldsymbol{a}) a_i + y_1^*(\boldsymbol{a}) o_1+y_3^*(\boldsymbol{a}) o_3]\\
            &\ge \underset{a_2}{\mathbb{E}}[{\max}^{(2)}\{o_1, a_2, a_3'\}] \\
            &\ge a_3' + \underset{a_2}{\mathbb{E}}\left[\max\{o_1, a_2\}\right], 
    \end{align*}
    which implies a contradiction:
    \[
    o_2 > \underset{a_2}{\mathbb{E}}\left[\max\{o_1, a_2\}\right] > \underset{a_2}{\mathbb{E}}\left[\max\{a_2, a_3\}\right] > o_2.
    \]
    
    Therefore, in an optimal $2$-out-of-$3$ mechanism that satisfies IC, IR, and feasibility, if there exists some \(\boldsymbol{a}\) such that \(y_i^*(\boldsymbol{a})=1\) and \(y_j^*(\boldsymbol{a})=0\), then for any \(\boldsymbol{a'}\), it is not possible for \(y_i^*(\boldsymbol{a}')=0\) and \(y_j^*(\boldsymbol{a}')=1\).
    
\end{proof}

\subsection{Proof of Theorem \ref{thm:partial order}}
\begin{proof}
    Firstly, based on Lemma \ref{lem:2 from 3 fs}, for any two items \(i\) and \(j\), either \(y_i^* \geq y_j^*\) or \(y_j^* \geq y_i^*\). This implies that there is a total ordering among the three items, resulting in a partial order such that \(y_{i_1}^* \geq y_{i_2}^* \geq y_{i_3}^*\), where $i_1$, $i_2$, $i_3$ is a permutation of the three items.

    It is trivial to see that \(y_{i_3}^* = 0\); otherwise, we would have \(y_{i_1}^* = y_{i_2}^* = y_{i_3}^* = 1\), which violates the feasibility constraint of selecting only two items.
    
    Next, we consider the effect of changing \(a_{i_1}\) on \(y_{i_2}^*\). When \(y_{i_1}^* = 1\), Proposition \ref{prop: k from n optimal} tells us that changes in \(a_{i_1}\) do not affect \(y_{i_2}^*\).When \(y_{i_1}^* = 0\), the partial order constraint implies that \(y_{i_2}^* = 0\). Therefore, \(y_{i_2}^*\) is not affected by changes in \(a_{i_1}\).
    
    Additionally, by form stability from Lemma \ref{lem:original form stability}, changes in \(a_i\) alone do not affect \(y_i\). Combining these results, we conclude that:
    \begin{align*}
        &y_{i_1}^* = y_{i_1}^*(a_{i_2}, a_{i_3}),\\
        &y_{i_2}^* = y_{i_2}^*(a_{i_3}), \\
        &y_{i_3}^* = 0.
    \end{align*}
    Thus, we have shown that there exists a partial order \((i_1, i_2, i_3)\) such that \(y_{i_1}^* \geq y_{i_2}^* \geq y_{i_3}^*\), with \(y_{i_1}^* = y_{i_1}^*(a_{i_2}, a_{i_3})\), \(y_{i_2}^* = y_{i_2}^*(a_{i_3})\), and \(y_{i_3}^* = 0\).
\end{proof}

\section{Missing Proofs in Section \ref{sec:fix}}
\subsection{Proof of Theorem \ref{thm: fix 2 from 3}}
\begin{proof}
When $n=3$ and $k=2$, the mechanism $\mathcal{M}^F$ performs best among the following: $\mathcal{M}_0$, $\mathcal{M}_{\{1\}}^F$, $\mathcal{M}_{\{2\}}^F$, $\mathcal{M}_{\{3\}}^F$, $\mathcal{M}_{\{1,2\}}^F$, $\mathcal{M}_{\{1,3\}}^F$ and $\mathcal{M}_{\{2,3\}}^F$. We have the following result:
\begin{align*}
    \operatorname{OBJ}(\mathcal{M}_0) &= \underset{\boldsymbol{a}\sim G}{\mathbb{E}}\left[{\max}^{(2)}\{a_1,a_2,a_3\}\right], \\
    \operatorname{OBJ}(\mathcal{M}_{\{1\}}^F) &= \underset{a_2,a_3}{\mathbb{E}}\left[{\max}^{(2)}\{o_1,a_2,a_3\}\right], \\
    \operatorname{OBJ}(\mathcal{M}_{\{2\}}^F) &= \underset{a_1,a_3}{\mathbb{E}}\left[{\max}^{(2)}\{a_1,o_2,a_3\}\right], \\
    \operatorname{OBJ}(\mathcal{M}_{\{3\}}^F) &= \underset{a_1,a_2}{\mathbb{E}}\left[{\max}^{(2)}\{a_1,a_2,o_3\}\right], \\
    \operatorname{OBJ}(\mathcal{M}_{\{1,2\}}^F) &= \underset{a_3}{\mathbb{E}}\left[{\max}^{(2)}\{o_1,o_2,a_3\}\right], \\
    \operatorname{OBJ}(\mathcal{M}_{\{1,3\}}^F) &= \underset{a_2}{\mathbb{E}}\left[{\max}^{(2)}\{o_1,a_2,o_3\}\right], \\
    \operatorname{OBJ}(\mathcal{M}_{\{2,3\}}^F) &= \underset{a_1}{\mathbb{E}}\left[{\max}^{(2)}\{a_1,o_2,o_3\}\right].
\end{align*}

To analyze the performance of $\mathcal{M}^F$, we first establish an upper bound for the optimal mechanism $\mathcal{M^*}$. Based on Theorem \ref{thm:partial order}, there exists a partial order relationship in the 2-out-of-3 optimal mechanism. Without loss of generality, assume \(y_{1}^{*} \geq y_{2}^{*} \geq y_{3}^{*}\). Let \(y_{1}^{*} = y_{1}^{*}(a_{2}, a_{3})\), \(y_{2}^{*} = y_{2}^{*}(a_{3})\), and \(y_{3}^{*} \equiv 0\).
Consequently, the objective function can be rewritten as follows:
\begin{align*}
    &\operatorname{OBJ}(\mathcal{M}^{*}) = \underset{\boldsymbol{a}}{\mathbb{E}}\left[\sum\limits_{i=1}^{3} y_{i}^{*}(\boldsymbol{a}) o_{i}+\sum\limits_{i=1}^{3} x_{i}^{*}(\boldsymbol{a}) a_{i}\right]\\
    &=\underset{a_3}{\mathbb{E}}\left[y_2^*(a_3) o_2+\underset{a_2}{\mathbb{E}}\left[y_1^*(a_2,a_3) o_1 + \underset{a_1}{\mathbb{E}}\left[\sum\limits_{i=1}^{3} x_i^*(\boldsymbol{a})a_i\right]\right]\right].
\end{align*}

We simplify this objective function by examining it from the outside in. Fixing any $a_3$, we consider two cases:
\begin{enumerate}[(1)]
    \item If \( y_2^*(a_3) = 1\) for all $a_1$ and $a_2$, then by the partial order relationship, we have \( y_1^* \geq y_2^* = 1 \). Consequently, the mechanism selects two organic items. Thus, we have:
    \begin{align*}
        &y_2^*(a_3) o_2+\underset{a_2}{\mathbb{E}}\left[y_1^*(a_2, a_3) o_1 + \underset{a_1}{\mathbb{E}}\left[\sum\limits_{i=1}^{3} x_i^*(\boldsymbol{a})a_i\right]\right] \\
        =& o_1+o_2
        \leq {\max}^{(2)}\{o_1, o_2, a_3\}.
    \end{align*}
    
    \item  If \( y_2^*(a_3) = 0\) for all $a_1$ and $a_2$, then in this case, we have:
     \begin{align*}
        &y_2^*(a_3) o_2+\underset{a_2}{\mathbb{E}}\left[y_1^*(a_2,a_3) o_1 + \underset{a_1}{\mathbb{E}}\left[\sum\limits_{i=1}^{3} x_i^*(\boldsymbol{a})a_i\right]\right] \\
        =& \underset{a_2}{\mathbb{E}}\left[y_1^*(a_2,a_3) o_1 + \underset{a_1}{\mathbb{E}}\left[\sum\limits_{i=1}^{3} x_i^*(\boldsymbol{a})a_i\right]\right] .
    \end{align*}
    We then fix  $a_2$ along with the previously fixed $a_3$ and consider the following two subcases:
     \begin{enumerate}[(2.1)]
         \item If $y_1^*(a_2,a_3)=1$ for all $a_1$, then we have
             \begin{align*}
             &y_1^*(a_2,a_3) o_1 + \underset{a_1}{\mathbb{E}}\left[\sum\limits_{i=1}^{3} x_i^*(\boldsymbol{a})a_i\right] \\
             =&\  o_1 + \max\{a_2,a_3\}\\
             \leq&\  {\max}^{(2)}\left\{o_1,a_2,a_3\right\}.
            \end{align*}
        \item If $y_1^*(a_2,a_3)=0$ for all $a_1$, then we get
             \begin{align*}
             &y_1^*(a_2,a_3) o_1 + \underset{a_1}{\mathbb{E}}\left[\sum\limits_{i=1}^{3} x_i^*(\boldsymbol{a})a_i\right] \\
             &=  \underset{a_1}{\mathbb{E}}\left[\sum\limits_{i=1}^{3} x_i^*(\boldsymbol{a})a_i\right]\\
             &\leq \underset{a_1}{\mathbb{E}}\left[{\max}^{(2)}\{a_1,a_2,a_3\}\right].
            \end{align*}
     \end{enumerate}
    Combining above subcases, we obtain 
    \begin{align*}
        &\underset{a_2}{\mathbb{E}}\left[y_1^*(a_2,a_3) o_1 + \underset{a_1}{\mathbb{E}}\left[\sum\limits_{i=1}^{3} x_i^*(\boldsymbol{a})a_i\right]\right] \leq\\
        &\underset{a_2}{\mathbb{E}}\Big[\max \Big\{{\max}^{(2)}\{o_1,a_2,a_3\},\underset{a_1}{\mathbb{E}}\left[{\max}^{(2)}\{a_1,a_2,a_3\}\right]\Big\} \Big].
    \end{align*}    
\end{enumerate}

To simplify the mathematical expression, we introduce the following notation:
\begin{align}
    \mathrm{m}_{\{1,2\}}^{F} &= {\max}^{(2)}\{o_{1}, o_{2}, a_{3}\}, \notag \\
    \mathrm{m}_{\{1\}}^{F} &= {\max}^{(2)} \{o_{1}, a_{2}, a_{3}\}, \notag \\
    \mathrm{m}_{0} &= {\max}^{(2)} \{a_{1}, a_{2}, a_{3}\}. \label{eqn:notation}
\end{align}

Therefore, by taking the expectation over $a_3$, we obtain an upper bound for the optimal mechanism:
\[
    \operatorname{OBJ}(\mathcal{M}^{*}) \leq 
    \underset{a_3}{\mathbb{E}}\left[
            \max\left\{\mathrm{m}_{\{1,2\}}^F,\underset{a_2}{\mathbb{E}}\left[\max\{\mathrm{m}_{\{1\}}^F,
            \underset{a_1}{\mathbb{E}}\left[\mathrm{m}_0\right]\}\right]\right\}\right].
\]

Using the same notation, we can also derive a lower bound for $\mathcal{M}^F$:
\begin{align*}
        \operatorname{OBJ}(\mathcal{M}^{F}) &\geq \max\left\{ \operatorname{OBJ}(\mathcal{M}_{\{1,2\}}^F),\operatorname{OBJ}(\mathcal{M}_{\{1\}}^F),\operatorname{OBJ}(\mathcal{M}_{0})\right\}\\
        &= \max\left\{ \underset{a_3}{\mathbb{E}}\left[\mathrm{m}_{\{1,2\}}^F\right],\underset{a_2,a_3}{\mathbb{E}}\left[\mathrm{m}_{\{1\}}^F\right],\underset{\boldsymbol{a}}{\mathbb{E}}\left[\mathrm{m}_0\right]\right\}.
\end{align*}

Thus, we can determine the approximate ratio of the G-FIX mechanism relative to the optimal mechanism:
\[
    \frac{\operatorname{OBJ}(\mathcal{M}^F)}{\operatorname{OBJ}(\mathcal{M}^{*})} \geq 
    \frac{\max\left\{ \underset{a_3}{\mathbb{E}}\left[\mathrm{m}_{\{1,2\}}^F\right],\underset{a_2,a_3}{\mathbb{E}}\left[\mathrm{m}_{\{1\}}^F\right],\underset{\boldsymbol{a}}{\mathbb{E}}\left[\mathrm{m}_0\right]\right\}}{ \underset{a_3}{\mathbb{E}}\left[
            \max\left\{\mathrm{m}_{\{1,2\}}^F,\underset{a_2}{\mathbb{E}}\left[\max\{\mathrm{m}_{\{1\}}^F,
            \underset{a_1}{\mathbb{E}}\left[\mathrm{m}_0\right]\}\right]\right\}\right]}.
\]

Before proceeding with the analysis, it is crucial to grasp the expression intuitively: 
Firstly, \(\mathrm{m}_I^F\) represents the outcome of running the corresponding \(\mathcal{M}_I^F\) mechanism with a given set of bids. For example:
\begin{itemize}
    \item \(\mathrm{m}_{\{1,2\}}^F\) represents the outcome of \(\mathcal{M}_{\{1,2\}}^F\) given \(a_3\).
    \item \(\mathrm{m}_{\{1\}}^F\) represents the outcome of \(\mathcal{M}_{\{1\}}^F\) given \(a_2\) and \(a_3\).
    \item \(\mathrm{m}_0\) represents the outcome of \(\mathcal{M}_0\) given \(a_1\), \(a_2\), and \(a_3\).
\end{itemize}

Thus, the numerator can be interpreted as running one of the mechanisms \(\mathcal{M}_{\{1,2\}}^F\), \(\mathcal{M}_{\{1\}}^F\), or \(\mathcal{M}_0\) for all possible \(\boldsymbol{a}\) and choosing one with the best expected outcome. The denominator, on the other hand, can be understood as follows:
\begin{itemize}
    \item Given \(a_3\), the mechanism can decide whether to run \(\mathcal{M}_{\{1,2\}}^F\) or incorporate more information based on the values of \(\mathrm{m}_{\{1,2\}}^F\) and \(\underset{a_2}{\mathbb{E}}\left[\max\left\{\mathrm{m}_{\{1\}}^F, \underset{a_1}{\mathbb{E}}\left[\mathrm{m}_0\right]\right\}\right]\).
    \begin{itemize}
        \item If \(\mathrm{m}_{\{1,2\}}^F\) is larger, \(\mathcal{M}_{\{1,2\}}^F\) is executed.
        \item Otherwise, the mechanism introduces \(a_2\) and, based on the values of \(a_2\) and \(a_3\), decides whether to run \(\mathcal{M}_{\{1\}}^F\) or \(\mathcal{M}_0\).
        \begin{itemize}
            \item If \(\mathrm{m}_{\{1\}}^F\) is larger, \(\mathcal{M}_{\{1\}}^F\) is executed.
            \item If \(\underset{a_1}{\mathbb{E}}\left[\mathrm{m}_0\right]\) is larger, \(\mathcal{M}_0\) is executed.
        \end{itemize}
    \end{itemize}
\end{itemize}
Intuitively, the mechanism in the numerator pre-selects the one with the highest expected outcome across all possible bids, while the mechanism in the denominator is capable of making decisions based on the actual situation step by step.

We will now decompose this fraction based on the step-by-step intuition, breaking it down into a more manageable form.
\begin{align}
    &\frac{\operatorname{OBJ}(\mathcal{M}^F)}{\operatorname{OBJ}(\mathcal{M}^{*})} \geq 
    \notag
    \\
    & \frac{\max \{ 
                          \underset{a_3}{\mathbb{E}}[\mathrm{m}_{\{1,2\}}^F],
                          \underset{a_2,a_3}{\mathbb{E}}[\mathrm{m}_{\{1\}}^F],
                          \underset{\boldsymbol{a}}{\mathbb{E}}[\mathrm{m}_0]\}}{\max\{\underset{a_3}{\mathbb{E}}[\mathrm{m}_{\{1,2\}}^F],
                          \underset{a_3}{\mathbb{E}}[\max\{\underset{a_2}{\mathbb{E}}[\mathrm{m}_{\{1\}}^F],
                          \underset{a_1,a_2}{\mathbb{E}}[\mathrm{m}_0]\}]\}} \label{eqn:frac a}\\
    &\cdot \frac{\max\{  \underset{a_3}{\mathbb{E}}[\mathrm{m}_{\{1,2\}}^F],
                          \underset{a_3}{\mathbb{E}}[\max\{\underset{a_2}{\mathbb{E}}[\mathrm{m}_{\{1\}}^F],
                          \underset{a_1,a_2}{\mathbb{E}}[\mathrm{m}_0]\}]\}}{\max\{
                          \underset{a_3}{\mathbb{E}}[\mathrm{m}_{\{1,2\}}^F],
                          \underset{a_3}{\mathbb{E}}[
                            \underset{a_2}{\mathbb{E}}[
                          \max\{\mathrm{m}_{\{1\}}^F,
                          \underset{a_1}{\mathbb{E}}[\mathrm{m}_0] \}  
                            ]
                          ]  \}  } \label{eqn:frac B}\\
    &\cdot \frac{\max \{       
                          \underset{a_3}{\mathbb{E}}[\mathrm{m}_{\{1,2\}}^F],
                          \underset{a_3}{\mathbb{E}}[
                            \underset{a_2}{\mathbb{E}}[
                                 \max\{   \mathrm{m}_{\{1\}}^F,
                          \underset{a_1}{\mathbb{E}}[\mathrm{m}_0] \}  
                            ]
                          ] \}  }{\underset{a_3}{\mathbb{E}}[\max\{       
                          \mathrm{m}_{\{1,2\}}^F,
                          \underset{a_2}{\mathbb{E}}[
                                 \max\{\mathrm{m}_{\{1\}}^F,
                          \underset{a_1}{\mathbb{E}}[\mathrm{m}_0]\}]\}]} .    \label{eqn:frac c}                 
\end{align}

In the three decomposed fractions, each denominator adds some flexibility compared to the numerator. For example, in fraction (\ref{eqn:frac a}), the denominator adds the flexibility of \(a_3\), allowing the choice of running \(\mathcal{M}_{\{1\}}^F\) or \(\mathcal{M}_0\) based on the value of \(a_3\). In fraction (\ref{eqn:frac B}), given \(a_3\), the denominator adds the flexibility of \(a_2\), enabling the choice of running \(\mathcal{M}_{\{1\}}^F\) or \(\mathcal{M}_0\) based on the value of \(a_2\). Fraction (\ref{eqn:frac c}) adds the flexibility of \(a_1\). 

Next, we will separately prove the lower bound for each of these three fractions. We will start by introducing a mathematical result from the previous paper.
\begin{lemma}[\cite{LMZWZYXZD24}]\label{cl:4/5}
If $\frac{h_2(t)}{h_1(t)} \in [r,r+1]$ for all $t$, where $0 < r < 1$, then for any distribution $G_1$,
\[
\frac{\max \{\underset{a_1\sim G_1}{\mathbb{E}}[h_1(a_1)], \underset{a_1\sim G_1}{\mathbb{E}} [h_2(a_1)]\}}{\underset{a_1 \sim G_1}{\mathbb{E}}[\max \{h_1(a_1), h_2(a_1)\}]} \geq \frac{4}{5}.
\]
\end{lemma}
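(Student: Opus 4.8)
The plan is to reduce the inequality to a two-variable optimization by splitting the sample space according to which of $h_1,h_2$ is larger. Since $r>0$ and $h_2(t)/h_1(t)\in[r,r+1]$ force $h_1(t),h_2(t)>0$ for every $t$, I would partition the support of $G_1$ into $T_1=\{t:h_1(t)\ge h_2(t)\}$ and $T_2=\{t:h_1(t)<h_2(t)\}$, and set $A_j=\mathbb{E}_{a_1\sim G_1}[h_1(a_1)\,\mathbb{I}[a_1\in T_j]]$ and $B_j=\mathbb{E}_{a_1\sim G_1}[h_2(a_1)\,\mathbb{I}[a_1\in T_j]]$ for $j\in\{1,2\}$. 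Then $\mathbb{E}[h_1]=A_1+A_2$ and $\mathbb{E}[h_2]=B_1+B_2$, and since $\max\{h_1,h_2\}$ equals $h_1$ on $T_1$ and $h_2$ on $T_2$, one gets $\mathbb{E}[\max\{h_1,h_2\}]=A_1+B_2$. The target ratio is homogeneous in $(h_1,h_2)$, so I would normalize to $A_1+B_2=1$; the goal becomes $\max\{A_1+A_2,\ B_1+B_2\}\ge 4/5$.

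The key step is to keep only the two one-sided consequences of the ratio bound that actually help. On $T_1$, the inequality $h_2\ge r h_1$ integrates to $B_1\ge rA_1$; on $T_2$, the inequality $h_2\le(r+1)h_1$ integrates to $B_2\le(r+1)A_2$, i.e.\ $A_2\ge B_2/(r+1)$. Hence $\mathbb{E}[h_1]\ge A_1+B_2/(r+1)$ and $\mathbb{E}[h_2]\ge rA_1+B_2$, and it suffices to lower-bound
\[
    \min_{\substack{A_1+B_2=1\\ A_1,B_2\ge 0}}\ \max\Bigl\{\,A_1+\tfrac{B_2}{r+1},\ \ rA_1+B_2\,\Bigr\}.
\]
Writing $A_1=x\in[0,1]$ and $B_2=1-x$, the first entry equals $\tfrac{r}{r+1}x+\tfrac{1}{r+1}$, strictly increasing in $x$ because $r>0$, while the second equals $1-(1-r)x$, strictly decreasing in $x$ because $r<1$; so the minimum of the maximum occurs at their unique crossing. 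Solving the two entries equal together with $A_1+B_2=1$ gives $A_1=\tfrac{r}{1+r-r^2}$ and common value $\tfrac{1}{1+r-r^2}$. Since $1+r-r^2=\tfrac54-(r-\tfrac12)^2\le\tfrac54$, the ratio is at least $4/5$, with equality exactly at $r=1/2$; this proves the lemma.

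I do not expect a real obstacle, but the step needing care is the choice of which inequalities to retain: the automatic bounds $B_1\le A_1$ on $T_1$ and $B_2\ge A_2$ on $T_2$ are useless here, and a cruder attempt to dominate $\mathbb{E}[\max\{h_1,h_2\}]$ by one fixed convex combination $c\,\mathbb{E}[h_1]+(1-c)\,\mathbb{E}[h_2]$ cannot yield a constant independent of $r$. The split-then-optimize route is what produces the factor $4/5$. Degenerate cases where $T_1$ or $T_2$ is null need no separate handling — they are the endpoints $x=1$ and $x=0$ of the optimization, where the bound still holds.
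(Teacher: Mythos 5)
Your proof is correct, and it is worth noting that the paper itself does not prove this lemma at all --- it is imported verbatim from \cite{LMZWZYXZD24} --- so there is no in-paper argument to compare against; what you have written is a complete, self-contained derivation. The key reduction is sound: with $T_1=\{h_1\ge h_2\}$, $T_2=\{h_1<h_2\}$ one indeed has $\mathbb{E}[\max\{h_1,h_2\}]=A_1+B_2$, and the two retained pointwise bounds ($h_2\ge rh_1$ on $T_1$, $h_2\le(r+1)h_1$ on $T_2$) integrate to exactly the constraints you use; dropping $B_1$ and $A_2$ beyond these is a valid relaxation, so the true ratio is at least the relaxed minimum. The arithmetic checks out: the crossing of $\tfrac{rx+1}{r+1}$ and $1-(1-r)x$ is at $x=\tfrac{r}{1+r-r^2}$ with common value $\tfrac{1}{1+r-r^2}\ge\tfrac{4}{5}$ since $1+r-r^2=\tfrac54-(r-\tfrac12)^2$, and the crossing lies in $[0,1]$ for all $r\in(0,1)$ so the min--max is attained there. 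Your bound is moreover tight: a two-point distribution with $(h_1,h_2)=(4/5,2/5)$ and $(4/5,6/5)$, each with probability $1/2$, achieves ratio exactly $4/5$ at $r=1/2$, confirming the constant cannot be improved. The only loose remark is the claim that the hypothesis ``forces'' $h_1,h_2>0$: strictly, a positive ratio only forces the two functions to share a sign, and positivity is an implicit standing assumption (harmless here, since in every application $h_1,h_2$ are expectations of maxima of nonnegative quantities). State it as an assumption rather than a consequence.
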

Next, we will prove the following three claims based on this claim.
\begin{claim}\label{claim: part 1}
    With $\mathrm{m}_{\{1,2\}}^F$, $\mathrm{m}_{\{1\}}^F$ and $\mathrm{m}_0$ defined in notation (\ref{eqn:notation}), we have
    \[
        \frac{\max \{ 
                          \underset{a_3}{\mathbb{E}}[\mathrm{m}_{\{1,2\}}^F],
                          \underset{a_2,a_3}{\mathbb{E}}[\mathrm{m}_{\{1\}}^F],
                          \underset{\boldsymbol{a}}{\mathbb{E}}[\mathrm{m}_0]\}}{\max\{\underset{a_3}{\mathbb{E}}[\mathrm{m}_{\{1,2\}}^F],
                          \underset{a_3}{\mathbb{E}}[\max\{\underset{a_2}{\mathbb{E}}[\mathrm{m}_{\{1\}}^F],
                          \underset{a_1,a_2}{\mathbb{E}}[\mathrm{m}_0]  \}  ]  \}  } \label{eqn:frac A} \geq \frac{4}{5}.
    \]
\end{claim}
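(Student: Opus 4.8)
\textbf{Proof proposal for Claim \ref{claim: part 1}.}

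The plan is to reduce the three-way maximum to a two-way maximum and then invoke Lemma \ref{cl:4/5}. First I would observe that the numerator of the fraction in the claim is $\max\{A_1, A_2, A_3\}$ where $A_1 = \underset{a_3}{\mathbb{E}}[\mathrm{m}_{\{1,2\}}^F]$, $A_2 = \underset{a_2,a_3}{\mathbb{E}}[\mathrm{m}_{\{1\}}^F]$, $A_3 = \underset{\boldsymbol{a}}{\mathbb{E}}[\mathrm{m}_0]$, and the denominator is $\max\{A_1, \, \underset{a_3}{\mathbb{E}}[\max\{A_2(a_3), A_3(a_3)\}]\}$ where $A_2(a_3) = \underset{a_2}{\mathbb{E}}[\mathrm{m}_{\{1\}}^F \mid a_3]$ and $A_3(a_3) = \underset{a_1,a_2}{\mathbb{E}}[\mathrm{m}_0 \mid a_3]$, so that $A_2 = \underset{a_3}{\mathbb{E}}[A_2(a_3)]$ and $A_3 = \underset{a_3}{\mathbb{E}}[A_3(a_3)]$. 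The key structural fact I would use is the ordering $\mathrm{m}_0 \le \mathrm{m}_{\{1\}}^F \le \mathrm{m}_{\{1,2\}}^F$ pointwise (replacing an $a_i$ by the larger $o_i$ can only increase a $\max^{(2)}$), hence $A_3(a_3) \le A_2(a_3) \le A_1(a_3)$ for every $a_3$ after taking the inner expectations, and similarly $A_3 \le A_2 \le A_1$. Therefore the numerator is simply $A_1$ and the denominator is also $A_1$... which would make the ratio $1$; but this cannot be right, so I would instead drop the ``$\le \mathrm{m}_{\{1,2\}}^F$'' comparison and treat the three quantities as incomparable in general — the pointwise inequalities among the $\mathrm{m}$'s need not survive the nested expectations in the way the denominator combines them, and in fact the whole point of the decomposition is that $\mathrm{m}_{\{1,2\}}^F$ and the inner expected value over $a_2$ are genuinely incomparable.

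Given that subtlety, the cleaner route is: set $h_1 = \mathrm{m}_{\{1,2\}}^F$ (a function of $a_3$) and $h_2(a_3) = \underset{a_2}{\mathbb{E}}\bigl[\max\{\mathrm{m}_{\{1\}}^F, \underset{a_1}{\mathbb{E}}[\mathrm{m}_0]\}\bigr]$ or rather the ``flattened'' version $h_2(a_3) = \max\{A_2(a_3), A_3(a_3)\}$, and first argue that the numerator $\max\{A_1, A_2, A_3\}$ is at least $\max\{\underset{a_3}{\mathbb{E}}[h_1], \underset{a_3}{\mathbb{E}}[h_2]\}$ — this holds because $\underset{a_3}{\mathbb{E}}[h_1] = A_1$ is one of the three, and $\underset{a_3}{\mathbb{E}}[h_2] = \underset{a_3}{\mathbb{E}}[\max\{A_2(a_3), A_3(a_3)\}]$; the latter is NOT obviously $\le \max\{A_2, A_3\}$ (Jensen goes the wrong way), so I would need to bound $\underset{a_3}{\mathbb{E}}[\max\{A_2(a_3),A_3(a_3)\}]$ by showing the ratio $A_2(a_3)/A_3(a_3)$, or equivalently $\mathrm{m}_{\{1\}}^F / \mathrm{m}_0$ lies in a bounded window $[r, r+1]$, and apply Lemma \ref{cl:4/5} a second time at the $a_3$ level. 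Actually the slickest version: apply Lemma \ref{cl:4/5} directly with $h_1(a_3) = A_1(a_3)$ (the conditional expectation of $\mathrm{m}_{\{1,2\}}^F$) and $h_2(a_3) = A_2(a_3)$, using that $1 \le \mathrm{m}_{\{1,2\}}^F / \mathrm{m}_{\{1\}}^F \le 2$ pointwise since $\mathrm{m}_{\{1,2\}}^F$ replaces one $a$-value in $\mathrm{m}_{\{1\}}^F$ by $o_1$ and a $\max^{(2)}$ of three terms at most doubles when one term increases from being non-largest to largest — this puts the ratio in $[1,2] = [1, 1+1]$, exactly the hypothesis of Lemma \ref{cl:4/5} with $r = 1$; that gives $\max\{A_1, A_2\} / \underset{a_3}{\mathbb{E}}[\max\{A_1(a_3), A_2(a_3)\}] \ge 4/5$, and then I must relate $\underset{a_3}{\mathbb{E}}[\max\{A_1(a_3),A_2(a_3)\}]$ to the actual denominator $\max\{A_1, \underset{a_3}{\mathbb{E}}[\max\{A_2(a_3),A_3(a_3)\}]\}$.

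The main obstacle I anticipate is precisely this last relation — verifying that the denominator appearing in the claim is dominated by a quantity of the form $\underset{a_3}{\mathbb{E}}[\max\{h_1(a_3), h_2(a_3)\}]$ for which the $[r,r+1]$-window hypothesis of Lemma \ref{cl:4/5} holds, and pinning down exactly which pair $(h_1, h_2)$ to use so that (i) the numerator dominates $\max\{\mathbb{E}[h_1], \mathbb{E}[h_2]\}$ and (ii) the denominator is dominated by $\mathbb{E}[\max\{h_1, h_2\}]$. The bounded-ratio verification itself — that each $\mathrm{m}$ differs from the next by replacing an ad value $a_i$ with the organic value $o_i \ge 0$ inside a $\max^{(2)}$ of finitely many nonnegative terms, so consecutive ratios lie in $[1,2]$ — is the routine part I would state as a short sub-lemma and not belabor. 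Once the window bound and the domination inequalities are in place, Lemma \ref{cl:4/5} with $r=1$ delivers the $4/5$ constant immediately, and the same template will recur for Claims covering fractions (\ref{eqn:frac B}) and (\ref{eqn:frac c}), which is why the overall theorem yields $(4/5)^3$.
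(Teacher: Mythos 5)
Your high-level template (cancel/reduce to a two-function ratio of the form $\max\{\mathbb{E}[h_1],\mathbb{E}[h_2]\}/\mathbb{E}[\max\{h_1,h_2\}]$ and invoke Lemma \ref{cl:4/5} with an $[r,r+1]$ window) is the right one, and you correctly flag that the hard part is choosing the pair $(h_1,h_2)$ and verifying the window. But the concrete choices and justifications you give do not work, for two reasons.

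First, both pointwise claims you lean on are false. There is no ordering between $a_i$ and $o_i$: recall $a_i=\phi_i(b_i)\alpha_i^A+\gamma_i^A$ can be arbitrarily large relative to $o_i=\gamma_i^O$, and conversely. So neither $\mathrm{m}_0\le \mathrm{m}_{\{1\}}^F\le \mathrm{m}_{\{1,2\}}^F$ nor $1\le \mathrm{m}_{\{1,2\}}^F/\mathrm{m}_{\{1\}}^F\le 2$ holds pointwise; e.g.\ with $o_1=a_2=a_3=1$ and $o_2=100$ the ratio $\mathrm{m}_{\{1,2\}}^F/\mathrm{m}_{\{1\}}^F$ is $101/2$. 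The bounded window cannot be obtained as a ``routine sub-lemma'' about replacing one entry of a $\max^{(2)}$; it only holds for the \emph{smoothed} functions obtained after taking the inner expectations, and even then only after a case analysis. Second, you are comparing the wrong pair. In fraction (\ref{eqn:frac a}) the term $\mathbb{E}_{a_3}[\mathrm{m}_{\{1,2\}}^F]$ is common to the outer max in both numerator and denominator; the added flexibility is entirely in the choice between $\mathcal{M}_{\{1\}}^F$ and $\mathcal{M}_0$ as a function of $a_3$. The correct first step is to drop the common term via $\max\{A,B\}/\max\{A,C\}\ge B/C$ (valid since $B\le C$ here), reducing the claim to bounding
\[
\frac{\max\{\mathbb{E}_{a_3}[h_1(a_3)],\ \mathbb{E}_{a_3}[h_2(a_3)]\}}{\mathbb{E}_{a_3}[\max\{h_1(a_3),h_2(a_3)\}]}
\quad\text{with}\quad
h_1(t)=\mathbb{E}_{a_2}\bigl[{\max}^{(2)}\{o_1,a_2,t\}\bigr],\ \
h_2(t)=\mathbb{E}_{a_1,a_2}\bigl[{\max}^{(2)}\{a_1,a_2,t\}\bigr].
\]
Your proposal instead pairs $\mathrm{m}_{\{1,2\}}^F$ against $\mathbb{E}_{a_2}[\mathrm{m}_{\{1\}}^F]$, which addresses a different (later) fraction and leaves the denominator of this claim unrelated to any $\mathbb{E}[\max\{h_1,h_2\}]$ you control.

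Finally, establishing the window for the correct pair is the bulk of the proof and is not pointwise: one computes $h_1'$ and $h_2'$, observes that $h_2/h_1$ increases on $(0,o_1)$ and decreases on $(o_1,\infty)$ with limit $1$, disposes of the cases where one function dominates everywhere (ratio $=1$), and in the remaining case sets $r=h_2(0)/h_1(0)=\bigl(\mathbb{E}[a_1]+\mathbb{E}[a_2]\bigr)/\bigl(o_1+\mathbb{E}[a_2]\bigr)\in(0,1)$ and checks $h_2(o_1)/h_1(o_1)\le r+1$. The window is $[r,r+1]$ with this data-dependent $r$, not $[1,2]$. As written, your argument has a genuine gap at both the reduction step and the window verification.
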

\begin{proof}
    Since 
    \[
        \max\{\underset{a_2,a_3}{\mathbb{E}}[\mathrm{m}_{\{1\}}^F],
                          \underset{\boldsymbol{a}}{\mathbb{E}}[\mathrm{m}_0]\} \leq 
                          \underset{a_3}{\mathbb{E}}[\max\{\underset{a_2} {\mathbb{E}}[\mathrm{m}_{\{1\}}^F],
                          \underset{a_1,a_2}{\mathbb{E}}\left[\mathrm{m}_0]  \right\}],
    \]
    deleting $\underset{a_3}{\mathbb{E}}[\mathrm{m}_{\{1,2\}}^F]$ on numerator and denominator, we get 
    \begin{align}
        &\frac{\max \{ 
                          \underset{a_3}{\mathbb{E}}[\mathrm{m}_{\{1,2\}}^F],
                          \underset{a_2,a_3}{\mathbb{E}}[\mathrm{m}_{\{1\}}^F],
                          \underset{\boldsymbol{a}}{\mathbb{E}}[\mathrm{m}_0]\}}{\max\{\underset{a_3}{\mathbb{E}}[\mathrm{m}_{\{1,2\}}^F],
                          \underset{a_3}{\mathbb{E}}[\max\{\underset{a_2}{\mathbb{E}}[\mathrm{m}_{\{1\}}^F],
                          \underset{a_1,a_2}{\mathbb{E}}[\mathrm{m}_0]  \}]\}  } \notag\\
        &\geq \frac{      \max\{
                          \underset{a_2,a_3}{\mathbb{E}}[\mathrm{m}_{\{1\}}^F],
                          \underset{\boldsymbol{a}}{\mathbb{E}}[\mathrm{m}_0]\}}{
                          \underset{a_3}{\mathbb{E}}[\max\{\underset{a_2} {\mathbb{E}}[\mathrm{m}_{\{1\}}^F],
                          \underset{a_1,a_2}{\mathbb{E}}\left[\mathrm{m}_0]  \right\}] }.\label{frac:1}
    \end{align}
    Define 
    \begin{align*}
        h_{1}(t) = \underset{a_{2}}{\mathbb{E}}\left[{\max}^{(2)}\{o_{1}, a_{2}, t\}\right]
    \end{align*}
    and 
    \begin{align*}
        h_{2}(t) = \underset{a_{1}a_{2}}{\mathbb{E}}\left[{\max}^{(2)} \left\{  a_{1}, a_{2}, t\right\} \right],
    \end{align*}
     we can rewrite right hand side of inequality (\ref{frac:1}) as follow:
     \begin{equation*}
         \frac{\max\{ \underset{a_{3}}{\mathbb{E}}[h_{1}(a_{3})],\underset{a_{3}}{\mathbb{E}}[h_{2}(a_{3})]
         \}}{\underset{a_{3}}{\mathbb{E}}[\max \{h_{1}(a_{3}), h_{2}(a_{3})\}]}.
     \end{equation*}
     In order to analyze the value of this fraction, we will calculate the derivatives of \(h_1(t)\) and \(h_2(t)\) and their values at critical points.
     First, for \(h_1(t)\), we can obtain:
     \begin{enumerate}
         \item The derivative of $h_1(t)$:
         \[
                h_{1}^{\prime}(t)=\begin{cases}\mathbb{P}(t \geq a_2) & t<o_{1} \\
                1 & t> o_{1},\end{cases}
         \]
         \item $h_1(t)$\ 's values at critical points:
         \begin{align*}
             h_{1}(0) &=o_{1}+\underset{a_{2}}{\mathbb{E}}\left[a_{2}\right],\\
             h_{1}(o_{1}) &=2o_{1}+\underset{a_{2}}{\mathbb{E}}\left[\mathbb{I}[a_{2}>o_{1}](a_{2}-o_{1})\right],\\
             \lim_{t \to \infty}h_{1}(t)&=\lim_{t \to \infty}t+o_{1}+\underset{a_{2}}{\mathbb{E}}\left[\mathbb{I}[a_{2}>o_{1}](a_{2}-o_{1})\right].
         \end{align*}
     \end{enumerate}
    Similarly for \(h_2(t)\), we can obtain:
    \begin{enumerate}
        \item The derivative of $h_2(t)$:
        \[
             h_{2}^{\prime}(t)=1-\mathbb{P}(t < a_{1})\mathbb{P}(t < a_{2}),
        \]
        \item $h_2(t)$\ 's values at critical points:
        \begin{align*}
             h_{2}(0) &=\underset{a_{1}}{\mathbb{E}}[a_{1}]+\underset{a_{2}}{\mathbb{E}}[a_{2}],\\
             h_{2}(o_{1}) &=a_{1}+a_{2}\\
             &  +\underset{a_{1}a_{2}}{\mathbb{E}}[\mathbb{I}[o_{1}>\min \{ a_{1},a_{2}\}](o_{1}-\min \{ a_{1},a_{2}\})],\\
             \lim_{t \to \infty}h_{2}(t)&=\lim_{t \to \infty}t+\underset{a_{1}a_{2}}{\mathbb{E}}[\max \{a_{1}, a_{2}\}] \\
            &\geq \lim_{t \to \infty}t+\underset{a_{2}}{\mathbb{E}}[\max \{\underset{a_{1}}{\mathbb{E}}[a_{1}], a_{2}\}].      
        \end{align*}
    \end{enumerate}

    From the characterization of the functions above, we can observe that \(h_1(t)\) grows more slowly than \(h_2(t)\) in the interval \((0, o_1)\), and grows faster than \(h_2(t)\) in the interval \((o_1, +\infty)\). Additionally, the relationship between \(h_1(0)\) and \(h_2(0)\), as well as \(h_1(+\infty)\) and \(h_2(+\infty)\), depend on the relationship between \(o_1\) and \({\mathbb{E}_{a_1}}[a_1]\). 
    
    It is evident from Figure \ref{fig:claim_1 (1)} that when \(o_1 \leq \underset{a_1}{\mathbb{E}}[a_1]\), \(h_1(t)\) is always less than or equal to \(h_2(t)\), and when \(o_1 > \underset{a_1}{\mathbb{E}}[a_1]\) and $h_1(o_1)\geq h_2(o_1)$, as shown in Figure \ref{fig:claim_1 (2)} , \(h_1(t)\) is always greater than or equal to \(h_2(t)\). In both cases, we have:
    \[
        \frac{\max \{ \underset{a_{3}}{\mathbb{E}}[h_{1}(a_{3})],\underset{a_{3}}{\mathbb{E}}[h_{2}(a_{3})] 
         \}}{\underset{a_{3}}{\mathbb{E}}[\max \{h_{1}(a_{3}), h_{2}(a_{3})\}]}=1.
    \]
    \begin{figure}[h]
    \centering
        \centering
        \includegraphics[width=0.7\linewidth]{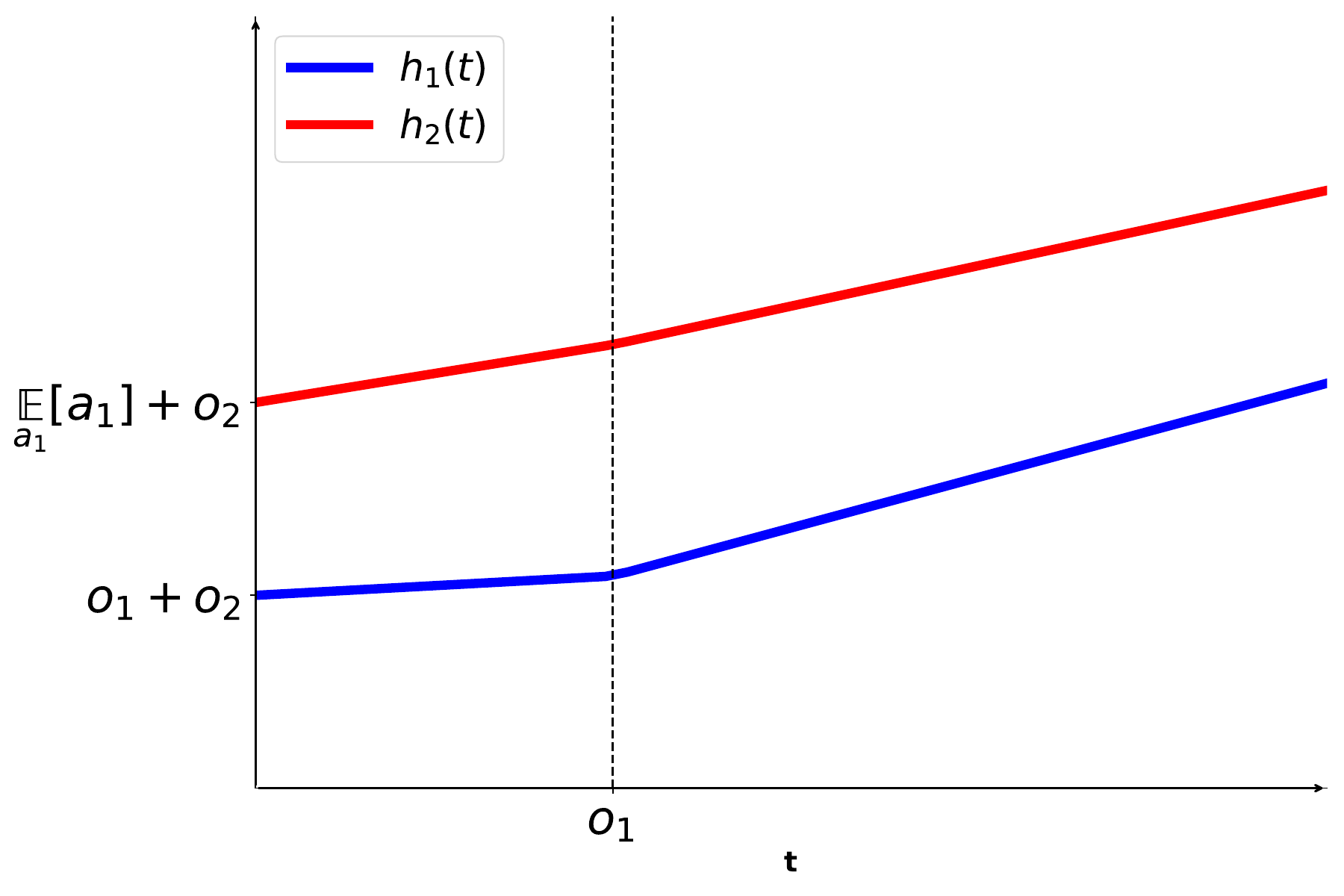}
        \caption{The relationship between $h_1(t)$ and $h_2(t)$ when \(o_1 \leq \underset{a_1}{\mathbb{E}}[a_1]\).}
        \label{fig:claim_1 (1)}
    \end{figure}
    \begin{figure}[h]
        \centering
        \includegraphics[width=0.7\linewidth]{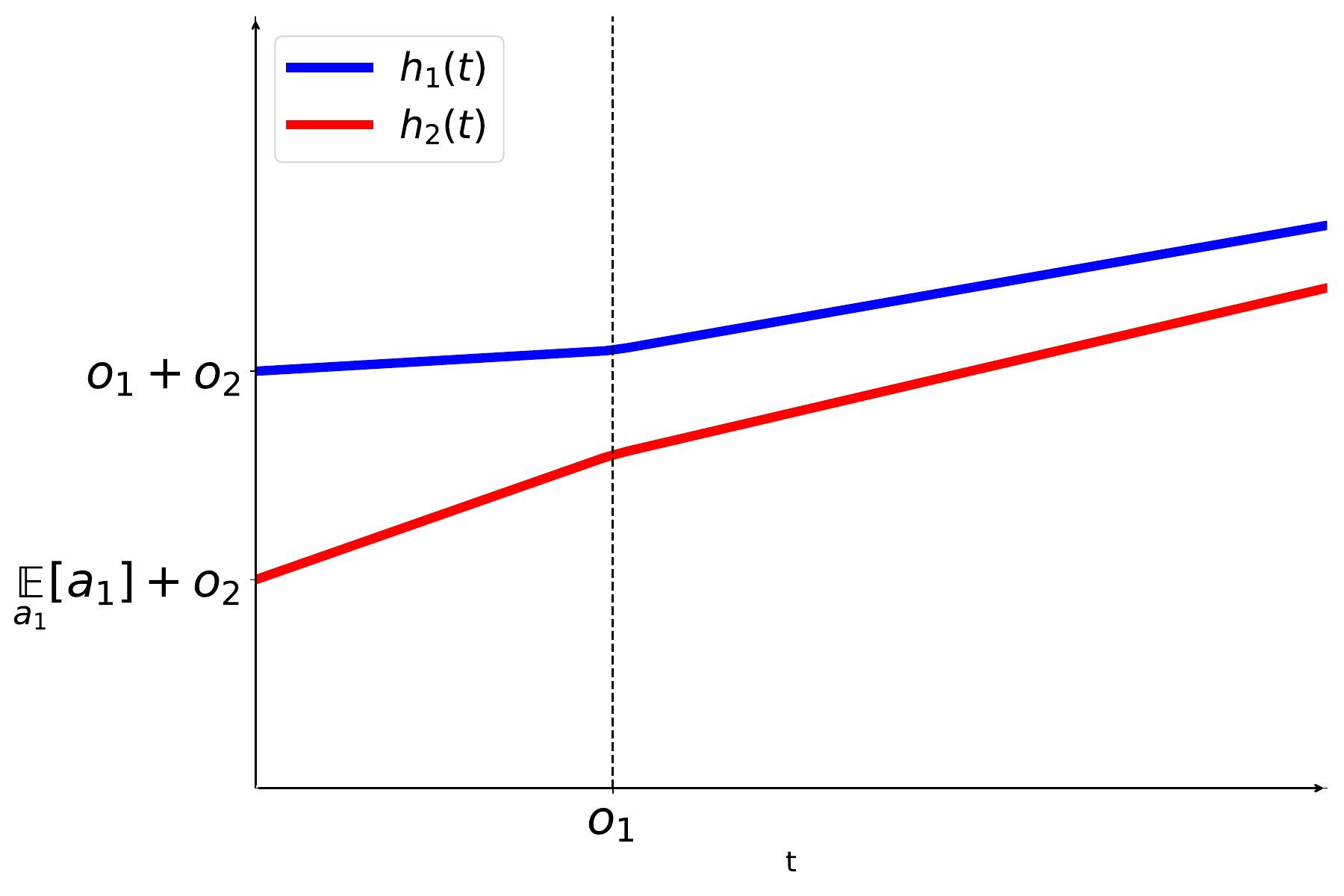}
        \caption{The relationship between $h_1(t)$ and $h_2(t)$ when \(o_1 > \underset{a_1}{\mathbb{E}}[a_1]\) and $h_1(o_1)\geq h_2(o_1)$.}
        \label{fig:claim_1 (2)}
    \end{figure}


    Therefore we only need to consider the case that \(o_1 > {\mathbb{E}_{a_1}}[a_1]\) and $h_1(o_1) < h_2(o_1)$. See Figure \ref{fig:claim_1 (3)} to help us understand. 
    \begin{figure}[h]
        \centering
        \includegraphics[width=0.7\linewidth]{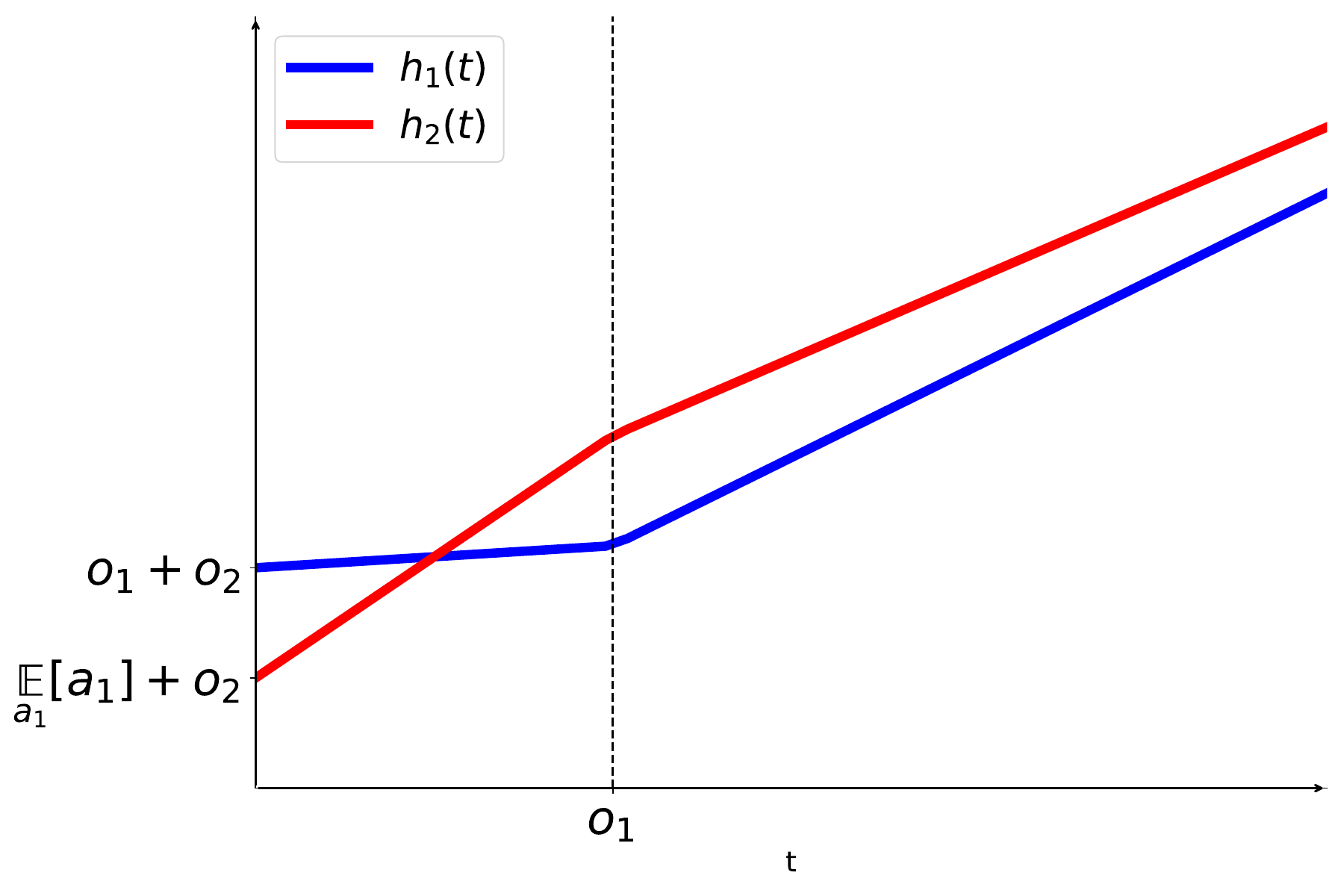}
        \caption{The relationship between $h_1(t)$ and $h_2(t)$ when \(o_1 > \underset{a_1}{\mathbb{E}}[a_1]\) and $h_1(o_1)< h_2(o_1)$.}
        \label{fig:claim_1 (3)}
    \end{figure}
    
    Since \(\frac{h_2(t)}{h_1(t)}\) is monotonically increasing in the interval \((0, o_1)\) and monotonically decreasing in the interval \((o_1, +\infty)\), and $\lim\limits_{t \to \infty}\frac{h_2(t)}{h_1(t)}=1$, let 
    \[
        \frac{h_2(0)}{h_1(0)} = \frac{\underset{a_1}{\mathbb{E}}[a_1]+\underset{a_2}{\mathbb{E}}[a_2]}{o_1+\underset{a_2}{\mathbb{E}}[a_2]}=r \in (0, 1),
    \] 
    and we have:
    \[
        r=\frac{h_2(0)}{h_1(0)}\leq\frac{h_2(t)}{h_1(t)}\leq\frac{h_2(o_1)}{h_1(o_1)}\leq\frac{\underset{a_{1}}{\mathbb{E}}[a_{1}]+\underset{a_{2}}{\mathbb{E}}[a_{2}]+o_{1}}{\underset{a_{2}}{\mathbb{E}}[a_{2}]+o_{1}}\leq r+1.
    \]

    Thus, according to Lemma \ref{cl:4/5}, we have:
    \[
         \frac{\max \{ \underset{a_{3}}{\mathbb{E}}[h_{1}(a_{3})],\underset{a_{3}}{\mathbb{E}}[h_{2}(a_{3})] 
         \}}{\underset{a_{3}}{\mathbb{E}}\left[\max \{h_{1}(a_{3}), h_{2}(a_{3})\}\right]}\geq\frac{4}{5}.
    \]
    By inequality (\ref{frac:1}), we prove this claim.    
\end{proof}

\begin{claim}\label{claim: part 2}
    With $\mathrm{m}_{\{1,2\}}^F$, $\mathrm{m}_{\{1\}}^F$ and $\mathrm{m}_0$ defined in notation (\ref{eqn:notation}), we have
    \[
        \frac{\max\{  \underset{a_3}{\mathbb{E}}[\mathrm{m}_{\{1,2\}}^F],
                          \underset{a_3}{\mathbb{E}}[\max\{\underset{a_2}{\mathbb{E}}[\mathrm{m}_{\{1\}}^F],
                          \underset{a_1,a_2}{\mathbb{E}}[\mathrm{m}_0]\}]\}}{\max\{   
                          \underset{a_3}{\mathbb{E}}[\mathrm{m}_{\{1,2\}}^F],
                          \underset{a_3}{\mathbb{E}}[
                            \underset{a_2}{\mathbb{E}}[
                                 \max\{\mathrm{m}_{\{1\}}^F,
                          \underset{a_1}{\mathbb{E}}[\mathrm{m}_0]\}  
                            ]]\}  } \geq \frac{4}{5}.
    \]
\end{claim}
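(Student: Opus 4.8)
The plan is to imitate the proof of Claim~\ref{claim: part 1}, but to apply Lemma~\ref{cl:4/5} at the level of $a_2$ (for each fixed $a_3$) rather than at the level of $a_3$. Write $A = \underset{a_3}{\mathbb{E}}[\mathrm{m}_{\{1,2\}}^F]$, and let $Q$ and $S$ be the second terms in the numerator and the denominator, namely $Q = \underset{a_3}{\mathbb{E}}\big[\max\{\underset{a_2}{\mathbb{E}}[\mathrm{m}_{\{1\}}^F], \underset{a_1,a_2}{\mathbb{E}}[\mathrm{m}_0]\}\big]$ and $S = \underset{a_3}{\mathbb{E}}\big[\underset{a_2}{\mathbb{E}}[\max\{\mathrm{m}_{\{1\}}^F, \underset{a_1}{\mathbb{E}}[\mathrm{m}_0]\}]\big]$. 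Since $\underset{a_1,a_2}{\mathbb{E}}[\mathrm{m}_0] = \underset{a_2}{\mathbb{E}}[\underset{a_1}{\mathbb{E}}[\mathrm{m}_0]]$ and $\max\{\mathbb{E}[\cdot],\mathbb{E}[\cdot]\}\le\mathbb{E}[\max\{\cdot,\cdot\}]$, we get $Q \le S$, so the claimed ratio is $\max\{A,Q\}/\max\{A,S\}$. It therefore suffices to prove $S \le \tfrac54 Q$: given this, $\max\{A,S\}\le\max\{A,\tfrac54 Q\}\le\tfrac54\max\{A,Q\}$, which is the bound. Because $S$ and $Q$ are outer expectations over $a_3$, it is enough to prove the \emph{pointwise} statement that for every fixed $a_3$,
\[
  \underset{a_2}{\mathbb{E}}\!\left[\max\{\mathrm{m}_{\{1\}}^F, \underset{a_1}{\mathbb{E}}[\mathrm{m}_0]\}\right] \;\le\; \tfrac54\,\max\!\left\{\underset{a_2}{\mathbb{E}}[\mathrm{m}_{\{1\}}^F],\; \underset{a_1,a_2}{\mathbb{E}}[\mathrm{m}_0]\right\}.
\]

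To establish this, fix $a_3$ and regard $g_1(a_2) := \mathrm{m}_{\{1\}}^F = {\max}^{(2)}\{o_1,a_2,a_3\}$ and $g_2(a_2) := \underset{a_1}{\mathbb{E}}[{\max}^{(2)}\{a_1,a_2,a_3\}]$ as functions of $a_2$; then the inequality reads $\underset{a_2}{\mathbb{E}}[\max\{g_1,g_2\}]\le\tfrac54\max\{\underset{a_2}{\mathbb{E}}[g_1],\underset{a_2}{\mathbb{E}}[g_2]\}$. If $g_1\ge g_2$ everywhere or $g_1\le g_2$ everywhere, the left side is just $\max\{\underset{a_2}{\mathbb{E}}[g_1],\underset{a_2}{\mathbb{E}}[g_2]\}$ and the bound holds with constant $1$. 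Otherwise the plan is to invoke Lemma~\ref{cl:4/5} with $h_1=g_1,\,h_2=g_2$ (its variable $a_1$ being our $a_2$), for which we must confine $g_2/g_1$ to some interval $[r,r+1]$ with $r\in(0,1)$.

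The crux — and the step I expect to be the main obstacle — is this ratio analysis, split into the cases $a_3\le o_1$ and $a_3> o_1$. Both $g_1$ and $g_2$ are non-decreasing, convex, piecewise-linear in $a_2$ with slopes in $[0,1]$: $g_1$ is constant below $\min\{o_1,a_3\}$ with a single kink there, while $g_2$ has slope $G_1(a_2)$ for $a_2<a_3$ and slope $1$ for $a_2\ge a_3$. Hence $g_1-g_2$ is non-increasing on $(0,\min\{o_1,a_3\})$, non-decreasing on $(o_1,a_3)$ when $a_3>o_1$, and constant on $(a_3,\infty)$; combined with $g_1(0)-g_2(0)=o_1-\underset{a_1}{\mathbb{E}}[a_1]$ and $\lim_{a_2\to\infty}(g_1-g_2)=\max\{o_1,a_3\}-\underset{a_1}{\mathbb{E}}[\max\{a_1,a_3\}]\le 0$, this forces $g_1$ and $g_2$ to change order at most once, and only when $o_1>\underset{a_1}{\mathbb{E}}[a_1]$. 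In that crossing case, $g_2/g_1$ attains its minimum at $a_2=0$, equal to $r:=\big(\underset{a_1}{\mathbb{E}}[a_1]+a_3\big)/(o_1+a_3)\in(0,1)$, and its maximum at $a_2=\min\{o_1,a_3\}$; writing $m=\min\{o_1,a_3\}$ and using $\max\{a_1,m\}-a_1=(m-a_1)^{+}\le m$ gives
\[
  \max_{a_2}\frac{g_2(a_2)}{g_1(a_2)}-r \;=\; \frac{\underset{a_1}{\mathbb{E}}[(m-a_1)^{+}]}{o_1+a_3} \;\le\; \frac{\min\{o_1,a_3\}}{o_1+a_3} \;<\; 1 ,
\]
so $g_2/g_1\in[r,r+1]$ and Lemma~\ref{cl:4/5} yields the pointwise inequality with constant $\tfrac45$. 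Integrating over $a_3$ gives $S\le\tfrac54 Q$ and hence the claim. The bookkeeping in the two cases (verifying the at-most-one-crossing shape, locating the extrema of $g_2/g_1$, and pinning down the length-one range) is the only delicate part; the reduction to the pointwise statement and the final assembly are routine manipulations of $\max$ and expectations.
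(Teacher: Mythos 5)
Your proposal is correct and follows essentially the same route as the paper's proof: both drop the common term $\underset{a_3}{\mathbb{E}}[\mathrm{m}_{\{1,2\}}^F]$ from the outer $\max$, reduce to a pointwise-in-$a_3$ comparison of $g_1(a_2)={\max}^{(2)}\{o_1,a_2,a_3\}$ and $g_2(a_2)=\underset{a_1}{\mathbb{E}}[{\max}^{(2)}\{a_1,a_2,a_3\}]$, confine $g_2/g_1$ to an interval $[r,r+1]$ with $r=(\underset{a_1}{\mathbb{E}}[a_1]+a_3)/(o_1+a_3)$, and invoke Lemma~\ref{cl:4/5}; your unified bound $\max_{a_2}(g_2/g_1)-r=\underset{a_1}{\mathbb{E}}[(m-a_1)^{+}]/(o_1+a_3)\le m/(o_1+a_3)$ with $m=\min\{o_1,a_3\}$ is in fact slightly cleaner than the paper's separate treatment of $a_3\le o_1$ and $a_3>o_1$. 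One small inaccuracy: your claim that $\lim_{a_2\to\infty}(g_1-g_2)=\max\{o_1,a_3\}-\underset{a_1}{\mathbb{E}}[\max\{a_1,a_3\}]\le 0$ fails when $a_3<o_1$ and $o_1$ is large, but this is harmless since in that regime $g_1-g_2$ is non-increasing then constant, so the at-most-one-crossing conclusion (and the rest of the argument) survives.
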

\begin{proof}
    Since  
    $$\underset{a_3}{\mathbb{E}}[\max\{\underset{a_2}{\mathbb{E}}[\mathrm{m}_{\{1\}}^F], \underset{a_1,a_2}{\mathbb{E}}[\mathrm{m}_0]\}] \leq \underset{a_2,a_3}{\mathbb{E}}[\max\{\mathrm{m}_{\{1\}}^F, \underset{a_1}{\mathbb{E}}\left[\mathrm{m}_0\right] \}],$$
    with the similar argument in proof of Claim \ref{claim: part 1}, we know 
    \begin{align}
        &\frac{\max\{  \underset{a_3}{\mathbb{E}}[\mathrm{m}_{\{1,2\}}^F],
                          \underset{a_3}{\mathbb{E}}[\max\{\underset{a_2}{\mathbb{E}}[\mathrm{m}_{\{1\}}^F],
                          \underset{a_1,a_2}{\mathbb{E}}[\mathrm{m}_0]  \}]\}}{\max\{
                          \underset{a_3}{\mathbb{E}}[\mathrm{m}_{\{1,2\}}^F],
                          \underset{a_3}{\mathbb{E}}[
                            \underset{a_2}{\mathbb{E}}[
                                 \max\{\mathrm{m}_{\{1\}}^F,
                          \underset{a_1}{\mathbb{E}}[\mathrm{m}_0]\}]]\} } \notag\\
        &\geq \frac{ 
                        \underset{a_3}{\mathbb{E}}[\max\{\underset{a_2}{\mathbb{E}}[\mathrm{m}_{\{1\}}^F],
                          \underset{a_1,a_2}{\mathbb{E}}[\mathrm{m}_0]  \}  ]  }{      
                          \underset{a_3}{\mathbb{E}}[
                            \underset{a_2}{\mathbb{E}}[
                                 \max\{\mathrm{m}_{\{1\}}^F,
                          \underset{a_1}{\mathbb{E}}[\mathrm{m}_0]\}]]}.\label{frac:2}
    \end{align}
    Define
    \begin{align*}
        h_{3}(t, a_{3}) = {\max}^{(2)}\{o_{1}, t, a_{3} \}
    \end{align*}
    and
    \begin{align*}
        &h_{4}(t, a_{3}) = \underset{a_{1}}{\mathbb{E}}[{\max}^{(2)} \{a_{1}, t, a_{3} \}],
    \end{align*}
    we can rewrite fraction (\ref{frac:2}) as follow:
    \[
        \frac{ 
                        \underset{a_3}{\mathbb{E}}[\max\{\underset{a_2}{\mathbb{E}}[h_{3}(a_2, a_{3})],
                          \underset{a_2}{\mathbb{E}}[h_{4}(a_2, a_{3})]\}]  }{      
                          \underset{a_3}{\mathbb{E}}[
                            \underset{a_2}{\mathbb{E}}[
                                 \max\{ h_{3}(a_2, a_{3}),
                         h_{4}(a_2, a_{3})\}]] }.
    \]
    To ease our analysis, we will consider two cases: \(a_3 \leq o_1\) and \(a_3 > o_1\). In the first case, for \(h_3(t,a_3)\), we can obtain:
    \begin{enumerate}
        \item The derivative of $h_3(t,a_3)$:
        \[
           h_{3}^{\prime}(t, a_{3})  = 
           \begin{cases}
                0 & t < a_{3} ,\\ 
                1 & t > a_{3} .
            \end{cases}
        \]
        \item $h_3(t,a_3)$\ 's values at critical points:
        \begin{align*}
            h_{3}(0, a_{3}) &= o_{1} + a_{3},\\
            h_{3}(a_{3}, a_{3}) &= o_{1} + a_{3}.
        \end{align*}
    \end{enumerate}
    For \(h_4(t,a_3)\), we can obtain:
    \begin{enumerate}
        \item The derivative of $h_4(t, a_3)$:
        \[
           h_{4}^{\prime}(t, a_{3}) = \begin{cases}
            \mathbb{P}(t \geq a_1) & t< a_{3}, \\ 
            1 & t> a_{3} .
            \end{cases}
        \]
        \item $h_4(t,a_3)$\ 's values at critical points:
        \begin{align*}
            h_{4}(0, a_{3}) &= \underset{a_{1}}{\mathbb{E}}[a_{1}] + a_3 , \\
            h_{4}(a_{3}, a_{3}) &= 2a_{3} + \underset{a_{1}}{\mathbb{E}}\left[\mathbb{I}[a_{1}\geq a_{3}](a_{1}-a_{3})\right].
        \end{align*}
    \end{enumerate}
    We observe that \(h_4(t, a_3)\) grows faster than \(h_3(t, a_3)\) in the interval \((0, a_3)\) (since \(h_3(t, a_3)\) does not grow in this interval), and both grow at the same rate in the interval \((a_3, +\infty)\). Therefore, when \(h_4(0, a_3) \geq h_3(0, a_3)\), it follows that \(h_4(t, a_3) \geq h_3(t, a_3)\) for all \(t\). Besides, when \(h_4(0, a_3) < h_3(0, a_3)\) and \(h_4(a_3, a_3) \leq h_3(a_3, a_3)\), it follows that \(h_4(t, a_3) \leq h_3(t, a_3)\) for all \(t\geq 0\). In both cases, we have:
    \[
        \frac{  \max\{\underset{a_2}{\mathbb{E}}[h_{3}(a_2, a_{3})], \underset{a_2}{\mathbb{E}}[h_{4}(a_2, a_{3})]\}}{ \underset{a_2}{\mathbb{E}}[\max\{ h_{3}(a_2, a_{3}),h_{4}(a_2, a_{3})\}]} = 1.
    \]
    Therefore, we only need to consider the case that \(h_4(0, a_3) < h_3(0, a_3)\) and \(h_4(a_3, a_3) > h_3(a_3, a_3)\). And since \(\frac{h_4(t,a_3)}{h_3(t,a_3)}\) is monotonically increasing in the interval \((0, a_3)\) and monotonically decreasing in the interval \((a_3, +\infty)\), and $\lim\limits_{t \to \infty}[h_4(t)/h_3(t)]=1$, let \(\frac{h_4(0,a_3)}{h_3(0,a_3)} = \frac{\mathbb{E}_{a_1}[a_1]+a_3}{o_1+a_3}=r \in (0, 1)\), we have:
   \begin{align*}
        r&=\frac{h_4(0,a_3)}{h_3(0,a_3)}\leq\frac{h_4(t,a_3)}{h_3(t,a_3)}\leq\frac{h_4(a_3,a_3)}{h_3(a_3,a_3)}\\
        &\leq\frac{\underset{a_{1}}{\mathbb{E}}[a_{1}]+2a_3}{o_{1}+a_3}\leq r+1.
    \end{align*}
    
    Thus, according to claim \ref{cl:4/5}, we have:
    \[
        \frac{\max\{\underset{a_2}{\mathbb{E}}[h_{3}(a_2, a_{3})], \underset{a_2}{\mathbb{E}}[h_{4}(a_2, a_{3})]\}}{ \underset{a_2}{\mathbb{E}}[\max\{ h_{3}(a_2, a_{3}),h_{4}(a_2, a_{3}) \}]}\geq\frac{4}{5}.
    \]

    Next, let us consider the case when \(a_3 > o_1\).
    Similarly, we can obtain:
    \begin{enumerate}
        \item The derivative of $h_3(t,a_3)$
        \[
           h_{3}^{\prime}(t, a_{3})  = 
           \begin{cases}
                0 & t < o_1 ,\\ 
                1 & t> o_1 .
            \end{cases}
        \]
        \item $h_3(t,a_3)$\ 's values at critical points
        \begin{align*}
            h_{3}(0, a_{3}) &= o_{1} + a_{3},\\
            h_{3}(o_1, a_{3}) &= o_{1} + a_{3},\\
            h_{3}(a_3, a_{3}) &= 2a_{3}.
        \end{align*}
    \end{enumerate}
     For \(h_4(t,a_3)\), we can obtain:
    \begin{enumerate}
        \item the derivative of $h_4(t,a_3)$
        \[
           h_{4}^{\prime}(t, a_{3}) = 
           \begin{cases}
                \mathbb{P}(t\geq a_{1}) & t< o_{1} ,\\ 
                \mathbb{P}(t\geq a_{1}) & o_{1}\leq t< a_{3}, \\
                1& t> a_{3}.
            \end{cases}
        \]
        \item $h_4(t,a_3)$\ 's values at critical points
        \begin{align*}
            h_{4}(0, a_{3}) &= \underset{a_{1}}{\mathbb{E}}[a_{1}] + a_3,\\
            h_{4}(o_1, a_{3}) &= o_1+a_{3} + \underset{a_{1}}{\mathbb{E}}\left[\mathbb{I}[a_{1}\geq o_1](a_{1}-o_1)\right],\\
            h_{4}(a_{3}, a_{3}) &= 2a_{3} + \underset{a_{1}}{\mathbb{E}}\left[\mathbb{I}[a_{1}\geq a_{3}](a_{1}-a_{3})\right].
        \end{align*}
    \end{enumerate}

    We observe that \(h_4(t, a_3)\) grows faster than \(h_3(t, a_3)\) in the interval \((0, o_1)\) (since \(h_3(t, a_3)\) has a derivative of 0 in this interval). In the interval \((o_1, a_3)\), \(h_4(t, a_3)\) grows more slowly than \(h_3(t, a_3)\), and in the interval \((a_3, +\infty)\), both functions grow at the same rate. Additionally, we can easily see that \(h_4(o_1, a_3) \geq h_3(o_1, a_3)\) and \(h_4(a_3, a_3) \geq h_3(a_3, a_3)\). Therefore, in the interval \((o_1, +\infty)\), we always have \(h_4(t, a_3) \geq h_3(t, a_3)\). Consequently, when \(h_4(0, a_3) \geq h_3(0, a_3)\), we have \(h_4(t, a_3) \geq h_3(t, a_3)\) for all \(t \in (0, +\infty)\). That means,in this case, we have:
    \[
        \frac{ \max\{\underset{a_2}{\mathbb{E}}[h_{3}(a_2, a_{3})], \underset{a_2}{\mathbb{E}}[h_{4}(a_2, a_{3})]\}}{ \underset{a_2}{\mathbb{E}}[\max\{ h_{3}(a_2, a_{3}),h_{4}(a_2, a_{3}) \}]} = 1.
    \]
    
    Therefore, we only need to consider the case when \(h_4(0, a_3) < h_3(0, a_3)\). In this situation, as can be seen from Figure \ref{fig:claim_2 case_2 (2)}, we can easily observe that \(\frac{h_4(t, a_3)}{h_3(t, a_3)}\) is monotonically increasing in the interval \((0, o_1)\), and monotonically decreasing in the interval \((o_1, +\infty)\). Furthermore, \(\lim\limits_{t \to +\infty} \frac{h_4(t, a_3)}{h_3(t, a_3)} = 1\). So, let \(\frac{h_4(0,a_3)}{h_3(0,a_3)} =r \in (0, 1)\), we have:
    \begin{align*}
        r&=\frac{h_4(0,a_3)}{h_3(0,a_3)}\leq\frac{h_4(t,a_3)}{h_3(t,a_3)}\leq\frac{h_4(o_1,a_3)}{h_3(o_1,a_3)}\\
        &\leq\frac{\underset{a_{1}}{\mathbb{E}}[a_{1}]+a_3+o_1}{o_{1}+a_3}\leq r+1.
    \end{align*}
    \begin{figure}[h]
        \centering
        \includegraphics[width=0.7\linewidth]{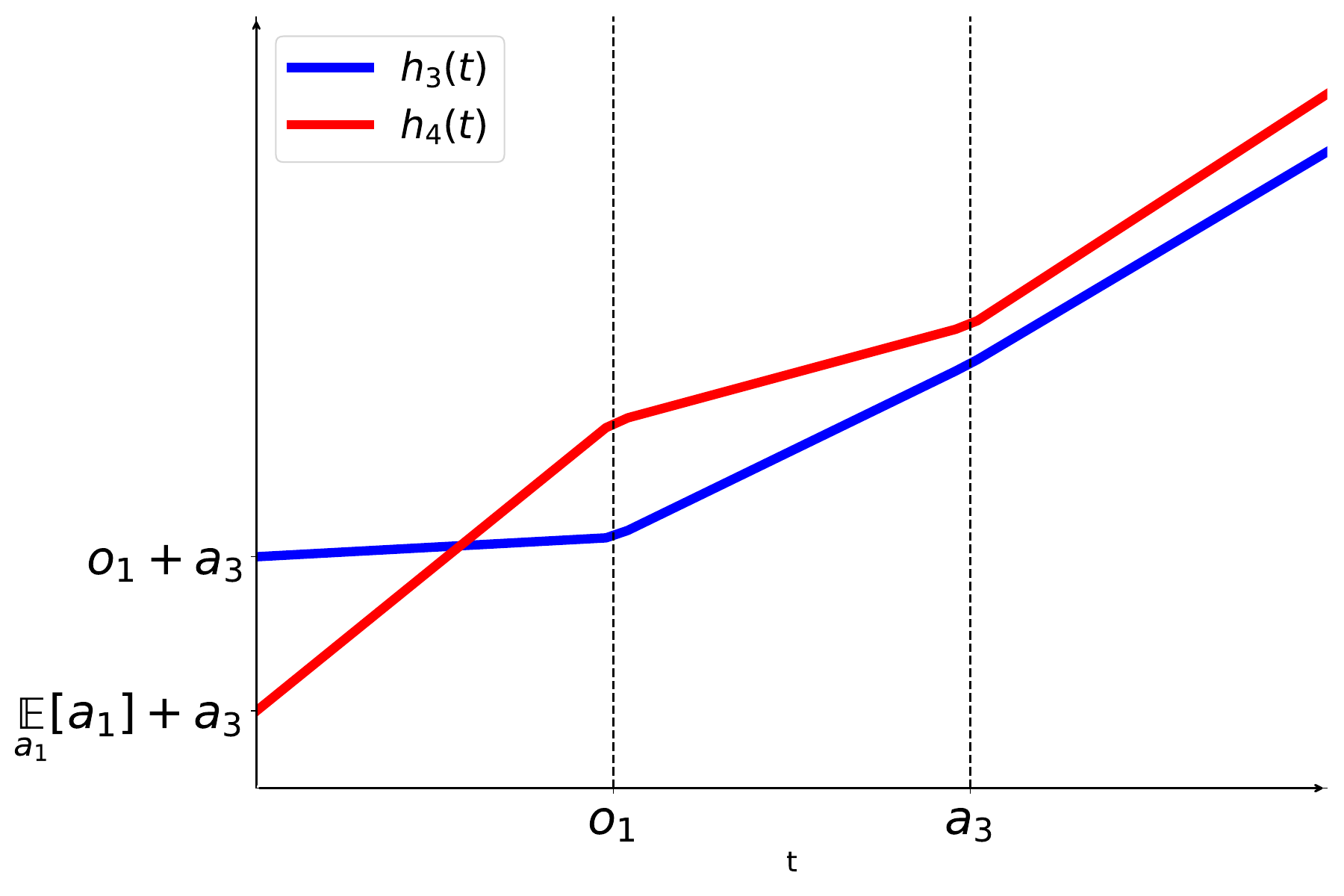}
        \caption{The relationship between $h_3(t)$ and $h_4(t)$ when $a_3>o_1$ and $h_4(0,a_3)< h_3(0,a_3)$.}
        \label{fig:claim_2 case_2 (2)}
    \end{figure}

    Thus, according to claim \ref{cl:4/5}, we have:
    \[
        \frac{\max\{\underset{a_2}{\mathbb{E}}[h_{3}(a_2, a_{3})],\underset{a_2}{\mathbb{E}}[h_{4}(a_2, a_{3})]\}}{ \underset{a_2}{\mathbb{E}}[\max\{ h_{3}(a_2, a_{3}),h_{4}(a_2, a_{3}) \} ]}\geq\frac{4}{5}.
    \]
    Finally, taking expectation of $a_3$ we can get:
    \[
         \frac{ 
                        \underset{a_3}{\mathbb{E}}[\max\{\underset{a_2}{\mathbb{E}}[h_{3}(a_2, a_{3})],
                          \underset{a_2}{\mathbb{E}}[h_{4}(a_2, a_{3})]\}]  }{      
                          \underset{a_3}{\mathbb{E}}[
                            \underset{a_2}{\mathbb{E}}[
                                 \max\{ h_{3}(a_2, a_{3}),
                         h_{4}(a_2, a_{3})\}  
                            ]]}\geq \frac{4}{5}.
    \]
    By inequality (\ref{frac:2}), we show this claim.
\end{proof}

\begin{claim}\label{claim: part 3}
    With $\mathrm{m}_{\{1,2\}}^F$, $\mathrm{m}_{\{1\}}^F$ and $\mathrm{m}_0$ defined in notation (\ref{eqn:notation}), we have
    \[
        \frac{\max\{       
                          \underset{a_3}{\mathbb{E}}[\mathrm{m}_{\{1,2\}}^F],
                          \underset{a_3}{\mathbb{E}}[
                            \underset{a_2}{\mathbb{E}}[
                                 \max\{\mathrm{m}_{\{1\}}^F,
                          \underset{a_1}{\mathbb{E}}[\mathrm{m}_0]\}  
                            ]
                          ] \}}{\underset{a_3}{\mathbb{E}}[\max \{       
                          \mathrm{m}_{\{1,2\}}^F,
                          \underset{a_2}{\mathbb{E}}[\max\{\mathrm{m}_{\{1\}}^F,
                          \underset{a_1}{\mathbb{E}}[\mathrm{m}_0]\}]\} ] } \geq \frac{4}{5}.
    \]

\end{claim}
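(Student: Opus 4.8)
The plan is to recast the left-hand side into the form handled by Lemma~\ref{cl:4/5}. Assume $o_1\ge o_2$ without loss of generality, and regard $\mathrm{m}_{\{1,2\}}^F$ and the nested term as functions of $a_3$ alone by setting
\[
    h_1(a_3) := {\max}^{(2)}\{o_1, o_2, a_3\},
    \qquad
    h_2(a_3) := \underset{a_2}{\mathbb{E}}\left[\max\left\{{\max}^{(2)}\{o_1, a_2, a_3\},\ \underset{a_1}{\mathbb{E}}\left[{\max}^{(2)}\{a_1, a_2, a_3\}\right]\right\}\right].
\]
Then the left-hand side of the claim is exactly
\[
    \frac{\max\{\mathbb{E}_{a_3}[h_1(a_3)],\ \mathbb{E}_{a_3}[h_2(a_3)]\}}{\mathbb{E}_{a_3}[\max\{h_1(a_3),h_2(a_3)\}]},
\]
so by Lemma~\ref{cl:4/5} it suffices to produce $r\in(0,1)$ with $h_2(a_3)/h_1(a_3)\in[r,r+1]$ for all $a_3\ge0$; in the remaining case $h_2\ge h_1$ pointwise, numerator and denominator both equal $\mathbb{E}_{a_3}[h_2]$ and the ratio is $1$.

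First I would record three elementary facts. (i) Since $o_1\ge o_2$, the denominator is flat then linear: $h_1\equiv o_1+o_2$ on $[0,o_2]$ and $h_1(a_3)=o_1+a_3$ on $[o_2,\infty)$. (ii) $h_2$ is nondecreasing and $1$-Lipschitz in $a_3$, because for each fixed $(a_1,a_2)$ the map $a_3\mapsto{\max}^{(2)}\{\cdot,\cdot,a_3\}$ is nondecreasing and $1$-Lipschitz, and both properties survive the inner $\max$ and the two expectations. (iii) For every $a_2\ge0$ one has ${\max}^{(2)}\{o_1,a_2,a_3\}=o_1+a_2+a_3-\min\{o_1,a_2,a_3\}\ge o_1+a_3$, so dropping the inner $\max$ and averaging over $a_2$ gives $h_2(a_3)\ge o_1+a_3$; with (i) this yields $h_2(a_3)\ge h_1(a_3)$ on $[o_2,\infty)$.

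Next I would show $h_2/h_1$ is unimodal with peak at $a_3=o_2$. On $[0,o_2]$ this is immediate since $h_1$ is constant and $h_2$ nondecreasing. On $[o_2,\infty)$, for $o_2\le s<t$ the Lipschitz bound $h_2(t)\le h_2(s)+(t-s)$ together with $h_2(s)\ge o_1+s=h_1(s)$ gives $h_2(t)(o_1+s)\le h_2(s)(o_1+t)$, i.e.,\ $h_2(t)/h_1(t)\le h_2(s)/h_1(s)$; thus the ratio is nonincreasing on $[o_2,\infty)$ and, being $\ge1$ there by (iii), it stays above any value $<1$ occurring at $a_3=0$. Hence if $h_2(0)\ge h_1(0)$ we are in the trivial case; otherwise set $r:=h_2(0)/h_1(0)\in(0,1)$, so the global minimum of $h_2/h_1$ is $r$ (attained at $a_3=0$) and the global maximum is $h_2(o_2)/h_1(o_2)\le(h_2(0)+o_2)/(o_1+o_2)=r+o_2/(o_1+o_2)<r+1$. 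Therefore $h_2/h_1\in[r,r+1]$, Lemma~\ref{cl:4/5} gives the bound $4/5$, which is exactly the inequality of Claim~\ref{claim: part 3}; combined with Claims~\ref{claim: part 1} and~\ref{claim: part 2} this also completes the proof of Theorem~\ref{thm: fix 2 from 3}.

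I expect fact (ii) to demand the most care: one must confirm that the monotonicity and $1$-Lipschitz behavior in $a_3$ really do survive the inner $\max$ of the two $a_2$-dependent curves and the subsequent $a_2$- and $a_1$-averages, which is the analogue here of the derivative-and-critical-value bookkeeping (with case figures) used in the proofs of Claims~\ref{claim: part 1} and~\ref{claim: part 2}. The feature that keeps this last fraction the mildest of the three, and the argument one-dimensional in $a_3$, is that $\mathrm{m}_{\{1,2\}}^F$ and the nested term cross precisely at the single kink $a_3=o_2$ of $h_1$.
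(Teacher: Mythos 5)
Your reduction to Lemma~\ref{cl:4/5} with $h_1(a_3)={\max}^{(2)}\{o_1,o_2,a_3\}$ and $h_2(a_3)=\mathbb{E}_{a_2}\bigl[\max\{{\max}^{(2)}\{o_1,a_2,a_3\},\,\mathbb{E}_{a_1}[{\max}^{(2)}\{a_1,a_2,a_3\}]\}\bigr]$ is exactly the paper's, and your Lipschitz/monotonicity treatment of the case $o_1\ge o_2$ is correct and cleaner than the paper's derivative-and-critical-point bookkeeping. The gap is the opening step ``assume $o_1\ge o_2$ without loss of generality.'' Items $1$ and $2$ are not interchangeable here: the labels are fixed by the partial order of Theorem~\ref{thm:partial order}, which orders items by $y^*$ (not by $o_i$), and $o_1$ plays a distinguished role in $\mathrm{m}_{\{1\}}^F={\max}^{(2)}\{o_1,a_2,a_3\}$ while $o_2$ appears only in $\mathrm{m}_{\{1,2\}}^F$. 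Swapping the two indices changes $h_2$, so the case $o_1<o_2$ is a genuinely separate configuration; this is why the paper works throughout with the kink at $\min\{o_1,o_2\}$.

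When $o_1<o_2$ your argument breaks in two places. First, fact (iii) only yields $h_2(a_3)\ge o_1+a_3$, whereas on the linear piece $h_1(a_3)=o_2+a_3$; since $h_2$ does not involve $o_2$ at all, taking $o_2$ large makes $h_2<h_1$ there, so the pointwise domination that drives your ``nonincreasing after the kink'' step is lost. Second, and more seriously, the minimum of $h_2/h_1$ need no longer sit at $a_3=0$: take $o_1$ small, $o_2$ large, and $a_1,a_2$ concentrated at some $\mu$ with $o_1<\mu<o_2$ and $2\mu<o_1+o_2$. Then $h_2(a_3)=\mu+\max\{\mu,a_3\}$ and $h_1(a_3)=o_2+a_3$ for $a_3\ge o_1$, so the ratio falls from $2\mu/(o_1+o_2)$ at $a_3=0$ to $2\mu/(o_2+\mu)$ at $a_3=\mu$ and then rises toward $1$; with your choice $r=h_2(0)/h_1(0)$ the hypothesis $h_2/h_1\ge r$ of Lemma~\ref{cl:4/5} fails. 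The claim itself still appears to hold in this regime (one can take $r$ to be the true infimum of the ratio and verify the supremum is at most $r+1$, since the ratio stays below $1$ wherever it dips below its value at the kink), but that is an additional argument you must supply; as written, the proof covers only half the parameter space.
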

\begin{proof}
    Define 
    \begin{align*}
        &h_{5}(t) = {\max}^{(2)} \{o_{1}, o_{2}, t \},\\
        &h_{6}(t) = \underset{a_{2}}{\mathbb{E}}[\max \{{\max}^{(2)} \{o_1, a_2, t \},\underset{a_{1}}{\mathbb{E}}[{\max}^{(2)} \{a_{1}, a_{2}, t \}] \}].
    \end{align*}
     We can rewrite fraction in the claim as follow:
     \begin{equation}\label{frac: h style in third}
     \frac{\max \{\underset{a_3}{\mathbb{E}}[h_{5}(a_3)],\underset{a_3}{\mathbb{E}}[h_{6}(a_3)]\}  }{\underset{a_3}{\mathbb{E}}[\max \{  h_{5}(a_3),h_{6}(a_3)\}] }.
     \end{equation}
    To facilitate the analysis, we will consider two cases: \(o_1 \leq \underset{a_1}{\mathbb{E}}\left[a_1\right]\) and \(o_1 > \underset{a_1}{\mathbb{E}}\left[a_1\right]\).
    
    Firstly, we analyze the simpler case where \(o_1 \leq \underset{a_1}{\mathbb{E}}[a_1]\). In this situation, we have:
    \begin{align*}
        h_{6}(t) &= \underset{a_{2}}{\mathbb{E}}\left[\max\left\{{\max}^{(2)} \left\{o_1, a_2, t \right\},\underset{a_{1}}{\mathbb{E}}\left[{\max}^{(2)} \left\{a_{1}, a_{2}, t \right\}\right] \right\}\right]\\
        &=\underset{a_1,a_2}{\mathbb{E}}\left[{\max}^{(2)} \{a_{1}, a_{2}, t \}\right].
    \end{align*}
    For $h_5(t)$, we have:
    \begin{enumerate}
        \item The derivative of $h_5(t)$:
            \[
                h_{5}^{\prime}(t)= \begin{cases}
                    0 & t< \min\{o_1,o_2\},\\
                    1 & t> \min\{o_1,o_2\}.
                \end{cases}
            \]
        \item $h_5(t)$\ 's values at critical points:
        \begin{align*}
            h_5(0) &= o_1+o_2,\\
            h_5(\min\{o_1,o_2\}) &= o_1+o_2,\\
            \lim_{t \to \infty}h_5(t) &= \lim_{t \to \infty}t+\max\{o_1,o_2\}.
        \end{align*}
    \end{enumerate}
    For $h_6(t)$, we have:
    \begin{enumerate}
        \item The derivative of $h_6(t)$
        \[
            h_{6}'(t) = 1-\mathbb{P}(t<a_1)\mathbb{P}(t<a_2).
        \]
        \item $h_6(t)$\ 's values at critical points
        \begin{align*}
            h_6(0) &= \underset{a_1}{\mathbb{E}}[a_1] + \underset{a_2}{\mathbb{E}}[a_2],\\
            \lim_{t \to \infty}h_6(t) &= \lim_{t \to \infty}t+\underset{a_1,a_2}{\mathbb{E}}[\max\{a_1,a_2\}].
        \end{align*}
    \end{enumerate}
    
    We observe that in the interval \((0, \min\{o_1, o_2\})\), \(h_6(t)\) grows faster than \(h_5(t)\) (since \(h_5(t)\) has a growth rate of 0 in this interval). In the interval \((\min\{o_1, o_2\}, +\infty)\), \(h_6(t)\) grows more slowly than \(h_5(t)\), and \(\lim\limits_{t \to \infty} \frac{h_6(t)}{h_5(t)} = 1\). It is straightforward to see that if \(h_6(0) \geq h_5(0)\), then \(h_6(t) \geq h_5(t)\) holds for all \(t\). Conversely, if \(h_6(0) < h_5(0)\) and \(h_6(\min\{o_1, o_2\}) \leq h_5(\min\{o_1, o_2\})\), then \(h_6(t) \leq h_5(t)\) holds for all \(t\). Therefore, we only need to consider the case where \(h_6(0) < h_5(0)\) and \(h_6(\min\{o_1, o_2\}) > h_5(\min\{o_1, o_2\})\). In this case, we can see that \(\frac{h_6(t)}{h_5(t)}\) is increasing in \((0, \min\{o_1, o_2\})\) and decreasing in \((\min\{o_1, o_2\}, +\infty)\). Thus, denote $\frac{h_6(0)}{h_5(0)}=r \in (0,1)$ we have:
    \begin{align*}
         r&=\frac{h_6(0)}{h_5(0)}\leq\frac{h_6(t)}{h_5(t)}\leq\frac{h_6(min\{o_1,o_2\})}{h_5(\min\{o_1,o_2\})}\\
         &\leq\frac{\underset{a_{1}}{\mathbb{E}}[a_{1}]+\underset{a_{1}}{\mathbb{E}}[a_{2}]+\min\{o_1,o_2\}}{o_1+o_2}\leq r+1.
    \end{align*}

   Thus, according to claim \ref{cl:4/5}, we have fraction (\ref{frac: h style in third}) is at least $\frac{4}{5}$ when $o_1\leq \underset{a_1}{\mathbb{E}}[a_1]$.

   Next, we analyze the case when \(o_1 > \underset{a_1}{\mathbb{E}}[a_1]\).
   Denote
   \begin{align*}
       &A =  {\max}^{(2)} \{o_{1}, a_{2}, t \},\\
       &B = \underset{a_{1}}{\mathbb{E}}[\{{\max}^{(2)} \{a_{1}, a_{2}, t \}].
   \end{align*}
   For $h_6(t)$, we have
   \begin{enumerate}
       \item The derivative of $h_6(t)$:\\
       when $t<o_1$:
       \[
            \mathbb{I}[A \geq B] \mathbb{P}(t>a_2)+\mathbb{I}[A < B] \left(1 - \mathbb{P}(t < a_1) \mathbb{P}(t<a_2)\right),
       \]
       
        when $t>o_1$:
       \[
            \mathbb{I}[A \geq B]+\mathbb{I}[A < B] \left(1 - \mathbb{P}(t < a_1) \mathbb{P}(t<a_2)\right).
       \]
       \item $h_6(t)$\ 's values at critical points:
       \begin{align*}
           &h_6(0) = o_1+\underset{a_2}{\mathbb{E}}[a_2].
       \end{align*}
   \end{enumerate}
   
   From the above analysis, we can observe that the derivative of \(h_6(t)\) always lies within the range \((0, 1]\). Therefore, in the interval \((0, \min\{o_1, o_2\})\), \(h_6(t)\) grows faster than \(h_5(t)\) (since \(h_5'(t) = 0\)), while in the interval \((\min\{o_1, o_2\}, +\infty)\), \(h_6(t)\) always grows slower than \(h_5(t)\), and \(\lim\limits_{t \to +\infty} \frac{h_6(t)}{h_5(t)} = 1\). Consequently, it is straightforward to deduce that when \(h_6(0) \geq h_5(0)\), we always have \(h_6(t) \geq h_5(t)\); and when \(h_6(0) < h_5(0)\) and \(h_6(\min\{o_1, o_2\}) \leq h_5(\min\{o_1, o_2\})\), we always have \(h_6(t) \leq h_5(t)\). Therefore, the only case we need to consider is when \(h_6(0) < h_5(0)\) and \(h_6(\min\{o_1, o_2\}) > h_5(\min\{o_1, o_2\})\). In this scenario, as shown in Figure \ref{fig:claim_3 case_2 (3)}, \(\frac{h_6(t)}{h_5(t)}\) is monotonically increasing in the interval \((0, \min\{o_1, o_2\})\) and monotonically decreasing in the interval \((\min\{o_1, o_2\}, +\infty)\). Let \(\frac{h_6(0)}{h_5(0)} = r \in (0,1)\), we have:
    \begin{align*}
         r&=\frac{h_6(0)}{h_5(0)}\leq\frac{h_6(t)}{h_5(t)}\leq\frac{h_6(\min\{o_1,o_2\})}{h_5(\min\{o_1,o_2\})}\\
         &\leq\frac{\underset{a_{1}}{\mathbb{E}}[a_{1}]+\underset{a_{1}}{\mathbb{E}}[a_{2}]+\min\{o_1,o_2\}}{o_1+o_2}\leq r+1.
    \end{align*}
   Thus, according to claim \ref{cl:4/5}, we have fraction (\ref{frac: h style in third}) is at least $\frac{4}{5}$ when $o_1> \underset{a_1}{\mathbb{E}}[a_1]$.
    \begin{figure}[h]
        \centering
        \includegraphics[width=0.7\linewidth]{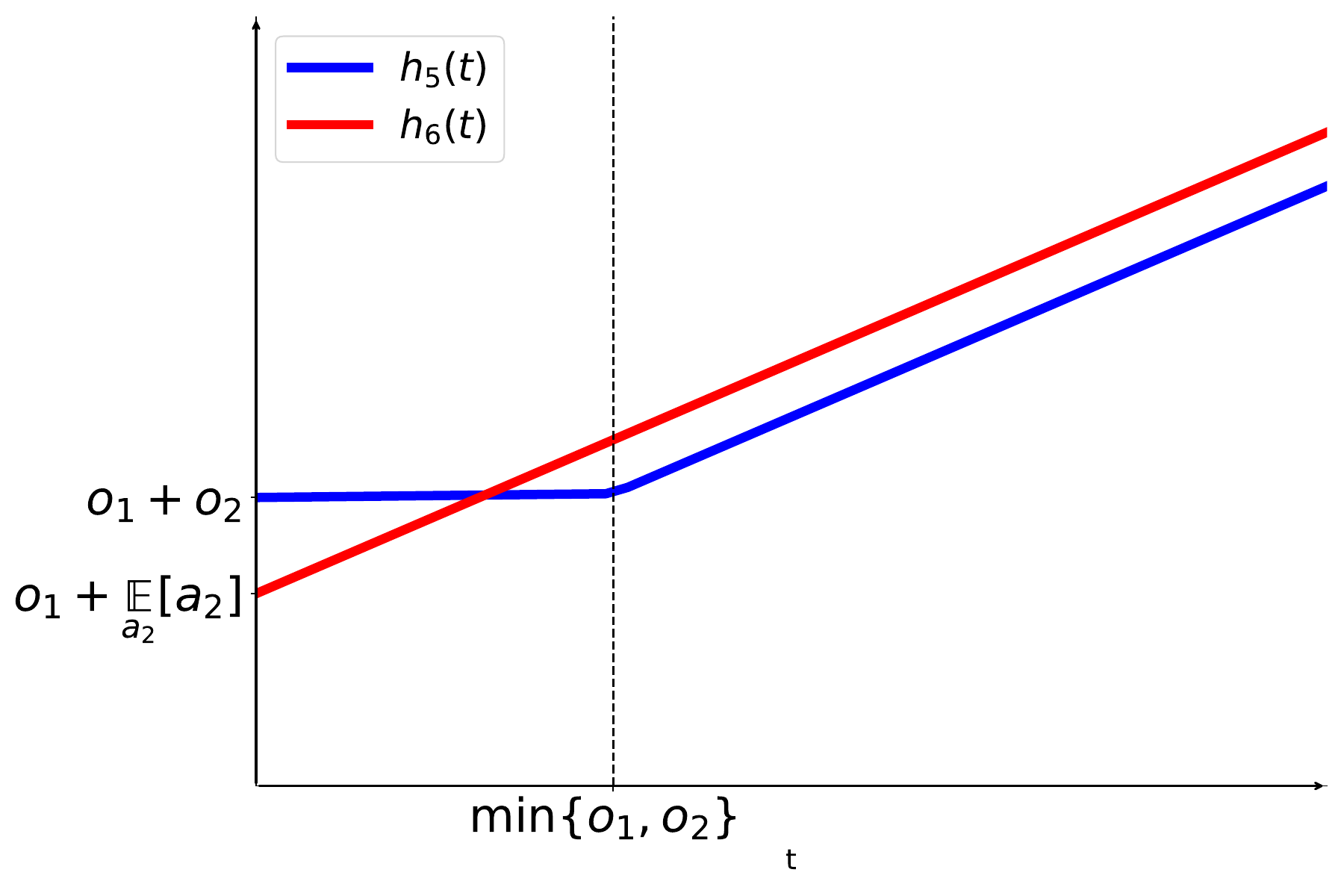}
        \caption{The relationship between $h_5(t)$ and $h_6(t)$ when $o_1>\underset{a_1}{\mathbb{E}}[a_1]$, $h_6(0)< h_5(0)$ and $h_6(\min\{o_1,o_2\})> h_5(\min\{o_1,o_2\})$.}
        \label{fig:claim_3 case_2 (3)}
    \end{figure}
   
   Therefore, we have:
   \[
        \frac{\max\{       
                          \underset{a_3}{\mathbb{E}}[\mathrm{m}_{\{1,2\}}^F],
                          \underset{a_3}{\mathbb{E}}[
                            \underset{a_2}{\mathbb{E}}[
                                 \max\{ \mathrm{m}_{\{1\}}^F,
                          \underset{a_1}{\mathbb{E}}[\mathrm{m}_0]\}  
                            ]]\}}{\underset{a_3}{\mathbb{E}}[\max \{       
                          \mathrm{m}_{\{1,2\}}^F,
                          \underset{a_2}{\mathbb{E}}[
                                 \max\{\mathrm{m}_{\{1\}}^F,
                          \underset{a_1}{\mathbb{E}}\left[\mathrm{m}_0\right]\}]\}]} \geq \frac{4}{5}.
   \]
   Hence, the claim is proved.
\end{proof}
Therefore, combining the conclusions of Claims \ref{claim: part 1}, \ref{claim: part 2}, and \ref{claim: part 3}, we can obtain
\begin{align*}
     & \frac{\operatorname{OBJ}(\mathcal{M}^{F})}{\operatorname{OBJ}(\mathcal{M}^{*})} \\
    \geq &
    \frac{\max\{ \underset{a_3}{\mathbb{E}}[\mathrm{m}_{\{1,2\}}^F],\underset{a_2,a_3}{\mathbb{E}}[\mathrm{m}_{\{1\}}^F],\underset{\boldsymbol{a}}{\mathbb{E}}[\mathrm{m}_0]\}}{ \underset{a_3}{\mathbb{E}}[
            \max\{\mathrm{m}_{\{1,2\}}^F,\underset{a_2}{\mathbb{E}}[\max\{\mathrm{m}_{\{1\}}^F,
            \underset{a_1}{\mathbb{E}}[\mathrm{m}_0]\}]\}]}  \\
    \geq & (\frac{4}{5})^{3}. \qedhere
\end{align*}
\end{proof}

\subsection{Proof of Lemma \ref{lem:c_(n-k)/c_n}}
\begin{proof}
Firstly, define 
\begin{align*}
f(x)=&C_{n}^{l}\left[1-\sum_{i=1}^{l} C_{n-k}^{l-i} x^{(n-k)-(l-i)}(1-x)^{l-i}\right] \\
-& C_{n-k}^{l}\left[1-\sum_{i=1}^{l} C_{n}^{l-i} x^{n-(l-i)}(1-x)^{l-i}\right].
\end{align*}
We can find that:
\begin{align*}
    f(0)&=C_{n}^{l}-C_{n-k}^{l}\geq 0 ,\\
    f(1)&=0,\\
    f^{\prime}(x)&= -g(x)+h(x),
\end{align*}
where
\begin{align*}
    &g(x)=C_{n}^{l} C_{n-k}^{l-1}[(n-k)-(l-1)] x^{(n-k)-(l-1)-1}(1-x)^{l-1}\\
    \text{and} \\
    &h(x)=C_{n-k}^{l} C_{n}^{l-1}[n-(l-1)] x^{n-(l-1)-1}(1-x)^{l-1}.
\end{align*}
Since
\[
    C_{n}^{l} C_{n-k}^{l-1}\left[(n-k)-(l-1)]=C_{n-k}^{l} C_{n}^{l-1}[n-(l-1)\right],
\]
we have
\begin{align*}
    &\frac{g(x)}{h(x)} = x^{-k}>1,x \in (0,1),\\
    &f^{\prime}(x)= -g(x)+h(x)<0, x \in (0,1).
\end{align*}
Thus, \(f(x)\) is monotonically decreasing on the interval \((0, 1)\), and we have:

\[
\frac{1-\sum\limits_{i=1}^lC_{n-k}^{l-i}x^{(n-k)-(l-i)}(1-x)^{(l-i)}}{1-\sum\limits_{i=1}^lC_{n}^{l-i}x^{n-(l-i)}(1-x)^{(l-i)}} \geq \frac{C_{n-k}^{l}}{C_{n}^{l}}. \qedhere
\]
\end{proof}

\subsection{Full Proof of Theorem \ref{thm: fix k from n}}
\begin{proof}
Without loss of generality, we assume that $o_1 \geq o_2$ $ \geq \cdots \geq o_n$ in this proof. Firstly, observe the objective of merging problem (\ref{fml:optimization 2}). We know that for the optimal mechanism $\mathcal{M}^*$, its objective is always bounded by ${\max}^{(k)}\{o_1\cdots o_k, a_1\cdots a_n\}$, i.e.,
\[
\operatorname{OBJ}(\mathcal{M}^{*})  \leq \underset{\boldsymbol{a}\sim G}{\mathbb{E}}\left[{\max}^{(k)}\{o_{1} \cdots o_{k}, a_{1} \cdots a_{n}\}\right].
\]
Let \(\lambda_i\) be the \(i\)-th largest number in the set $\{o_1, \cdots, o_k,$ $a_1,\cdots, a_n\}$. We can rewrite the above inequality as:
\begin{align}
    \operatorname{OBJ}(\mathcal{M}^{*})  &\leq \underset{\boldsymbol{a}\sim G}{\mathbb{E}}\left[{\max}^{(k)}\{o_{1} \cdots o_{k}, a_{1} \cdots a_{n}\}\right] \nonumber\\
    &=\sum\limits_{i=1}^{k}\int_{0}^{\infty}\underset{\boldsymbol{a}\sim G}{\mathbb{P}}\left( \lambda_i\geq t\right)dt. \label{ieq: fix-opt upper}
\end{align}

Since all $a_i$ are identical, for $i\in [k]$,  we can calculate  $\underset{\boldsymbol{a}\sim G}{\mathbb{P}}\left( \lambda_i\geq t\right)$ as
\begin{align*}
    &\underset{\boldsymbol{a}\sim G}{\mathbb{P}}(\lambda_1\geq t)=\begin{cases}
        1 & 0\leq t \leq o_{1} ,\\
        1-G^{n}(t) & t>o_1. 
    \end{cases}\\
    &\underset{\boldsymbol{a}\sim G}{\mathbb{P}}(\lambda_2\geq t)=\\
    &\begin{cases}
        1 & 0\leq t \leq o_2 ,\\
        1-G^{n}(t) & o_{2}<t \leq o_{1} ,\\
        1-\sum\limits_{i=1}^{2}C_n^{2-i}G^{n-(2-i)}(t)(1-G(t))^{2-i} & t>o_{1}.
    \end{cases}\\
    &\ldots\\
    &\underset{\boldsymbol{a}\sim G}{\mathbb{P}}(\lambda_k\geq t)=\\
    &\begin{cases}
        1 & 0\leq t \leq o_k ,\\
        1-\sum\limits_{i=j}^{k}C_n^{k-i}G^{n-(k-i)}(t)(1-G(t))^{(k-i)} & o_{j}<t\leq o_{j-1}\\
         &j \in [2,\cdots,k],\\
        1-\sum\limits_{i=1}^{k}C_n^{k-i}G^{n-(k-i)}(t)(1-G(t))^{(k-i)} & t>o_1.
    \end{cases}
\end{align*}

Plugging all $\underset{\boldsymbol{a}\sim G}{\mathbb{P}}\left( \lambda_i\geq t\right)$ into inequality (\ref{ieq: fix-opt upper}), we get an upper bound of the optimal objective. 
\begin{align}\label{ieq2: fix-opt upper}
    &\operatorname{OBJ}(\mathcal{M}^{*})  \leq \sum\limits_{i=1}^{k}\int_{0}^{\infty}\underset{\boldsymbol{a}\sim G}{\mathbb{P}}( \lambda_i\geq t)dt\notag\\
    =&\sum\limits_{i=1}^{k}o_i+\int_{o_{k}}^{\infty}(1-G^{n}(t)) d t +\cdots \notag \\
    +& \int_{o_{1}}^{+\infty}[1-\sum_{i=1}^{k} C_{n}^{k-i} G^{n-(k-i)}(t)(1-G(t))^{k-i}] d t .
\end{align}

On the other hand, we want to find a lower bound on the objective of $\mathcal{M}^F$. Since $\mathcal{M}^F$ selects the best mechanism among $\mathcal{M}_I^F$, herein, we focus on a specific $\mathcal{M}_I^F$ with $I=\{1,2,\cdots, k\}$, whose objective is ${\max}^{(k)}\{o_1\cdots o_k,a_{k+1}\cdots a_n\}$. Denote \(\theta_i\) by the \(i\)-th largest number in the set $\{o_1\cdots o_k,a_{k+1}\cdots a_n\}$. We have
\begin{align}
     \operatorname{OBJ}(\mathcal{M}^{F})  &\geq \underset{a\sim G}{\mathbb{E}}\left[{\max}^{(k)}\{o_{1} \cdots o_{k}, a_{k+1} \cdots a_{n}\}\right] \notag\\
     =&\sum\limits_{i=1}^{k}\int_{0}^{\infty}\underset{\boldsymbol{a}\sim G}{\mathbb{P}}( \theta_i\geq t)dt\notag\notag
\end{align}
Note that there are only $n-k$ identical $a_i, i \in \{k+1,...,n\}$. With the similar calculation, we get a lower bound of $\operatorname{OBJ}(\mathcal{M}^{F})$, i.e., 
\begin{align}\label{ieq: fix-fix lower}
    &\operatorname{OBJ}(\mathcal{M}^{F}) \geq \sum\limits_{i=1}^{k}o_i+\int_{o_{k}}^{\infty}(1-G^{n-k}(t)) d t +\cdots \notag\\
    +& \int_{o_{1}}^{+\infty}[1-\sum_{i=1}^{k} C_{n-k}^{k-i} G^{(n-k)-(k-i)}(t)(1-G(t))^{k-i}] d t .
\end{align}
It is not hard to find that inequalities (\ref{ieq2: fix-opt upper}) and (\ref{ieq: fix-fix lower}) have the same number of terms. By Lemma \ref{lem:c_(n-k)/c_n} and the following inequality 
\[
    1 = \frac{C_{n-k}^{0}}{C_{n}^{0}}>\cdots>\frac{C_{n-k}^{l}}{C_{n}^{l}}>\frac{C_{n-k}^{l+1}}{C_{n}^{l+1}}>\cdots>\frac{C_{n-k}^{k}}{C_{n}^{k}},
\]
we can replace the terms in inequality (\ref{ieq: fix-fix lower}) by the corresponding terms in inequality (\ref{ieq2: fix-opt upper}) with ${C_{n-k}^{k}}/{C_{n}^{k}}$ times and we get
\begin{align*}
    &\operatorname{OBJ}(\mathcal{M}^{F})  \geq 
   \frac{C_{n-k}^{k}}{C_{n}^{k}}\bigg[\sum\limits_{i=1}^{k}o_i+\int_{o_{k}}^{\infty}(1-G^{n}(t)) d t +\cdots \\
    +& \int_{o_{1}}^{+\infty}[1-\sum_{i=1}^{k} C_{n}^{k-i} G^{n-(k-i)}(t)(1-G(t))^{k-i}] d t \bigg]\\
    =& \frac{C_{n-k}^{k}}{C_{n}^{k}}\operatorname{OBJ}(\mathcal{M}^{*}).
\end{align*}


In conclusion, we show that $\mathcal{M}^{F}$ is ${C_{n-k}^{k}}/{C_{n}^{k}}$ competitive relative to the optimal mechanism $\mathcal{M}^{*}$.
\end{proof} 
\end{document}